\pdfminorversion=4 

\documentclass[journal,twoside,web]{ieeecolor}
\usepackage{noLogo}

\usepackage{cite}
\usepackage{amsmath,amssymb,amsfonts}
\usepackage{algorithmic}
\usepackage{graphicx}
\usepackage{textcomp}
\def\BibTeX{{\rm B\kern-.05em{\sc i\kern-.025em b}\kern-.08em
    T\kern-.1667em\lower.7ex\hbox{E}\kern-.125emX}}

\markboth{PREPRINT 12/2023}
{Scholl \MakeLowercase{\textit{et al.}}: What ODE-Approximation Schemes of Time-Delay Systems Reveal about Lyapunov-Krasovskii Functionals}

\usepackage{amsmath}
\usepackage{amssymb}
\usepackage{mathrsfs}   
\usepackage{color}
\usepackage{subfig}
\usepackage{multirow}
\usepackage{refcount} 

\usepackage{etoolbox} 
\newtoggle{IsTwocolumn}
\settoggle{IsTwocolumn}{true}

\usepackage{tikz}
\usetikzlibrary{arrows.meta,bending}
\usetikzlibrary{calc}

\newtheorem{definition}{Definition}[section]
\newtheorem{theorem}{Theorem}[section]
\newtheorem{remark}{Remark}[section]
\newtheorem{example}{Example}[section]
\newtheorem{corollary}{Corollary}[section]
\newtheorem{lemma}{Lemma}[section]
\newtheorem{proposition}{Proposition}[section]
\newtheorem{condition}{Condition}[section]

\newcommand{\delay}{{h}}
\newcommand{\LM}{{\Psi}}
\newcommand{\Leg}{{p}}
\newcommand{\ind}{}

\newcommand{\N}[1]{{\text{\scalebox{0.75}{$[#1]$}}}}

\newcommand{\method}[1]{\textbf{\textit{\color{mblue}#1}}}

\allowdisplaybreaks  

\usepackage{hyperref}
\hypersetup{
    colorlinks=false,
    pdfborder={0 0 0},
    pdfborderstyle={/S/U/W 0},
}

%
\makeatletter
    \def\@endtheorem{
		 \vspace{0pt}\hfill  $\blacktriangleleft$ 
		\endtrivlist\@endpefalse }
\makeatother

\begin{document}
\title{What ODE-Approximation Schemes of Time-Delay Systems Reveal about Lyapunov-Krasovskii Functionals}
\author{Tessina H.\ Scholl, Veit Hagenmeyer, and Lutz Gr\"oll
\thanks{
\vspace{3em}
}
\thanks{The authors are with the Institute for Automation and Applied Informatics, Karlsruhe Institute of Technology, 76021 Karlsruhe, Germany (e-mail:\ tessina.scholl@kit.edu; veit.hagenmeyer@kit.edu; lutz.groell@kit.edu). }
}

\newpage\null
\noindent
\begin{tabular}{p{17cm}}
This article has been accepted  for publication in the IEEE Transactions on Automatic Control (tentatively scheduled to appear in the 2024 July issue),    
\href{https://doi.org/10.1109/TAC.2023.3347497}{\underline{doi:  10.1109/TAC.2023.3347497}}. 
\end{tabular}
\vfill
\begin{tabular}{|p{17cm}|}
\hline 
\copyright 2023 IEEE. Personal use of this material is permitted. Permission
from IEEE must be obtained for all other uses, in any current or future
media, including reprinting/republishing this material for advertising or
promotional purposes, creating new collective works, for resale or
redistribution to servers or lists, or reuse of any copyrighted
component of this work in other works.
\\
\hline
\end{tabular}
\thispagestyle{empty}\newpage
\setcounter{page}{0}

\maketitle

\begin{abstract}
The article proposes an approach to complete-type and related Lyapunov-Krasovskii functionals that neither requires knowledge of the delay-Lyapunov matrix function nor does it involve linear matrix inequalities. The approach is based on ordinary differential equations (ODEs) that approximate the time-delay system. The ODEs are derived via spectral methods, e.g., the Chebyshev  collocation method (also called pseudospectral discretization) or the  Legendre tau method. A core insight is that the Lyapunov-Krasovskii theorem resembles a theorem for Lyapunov-Rumyantsev partial stability in ODEs. For the linear approximating ODE, only a Lyapunov equation has to be solved to obtain a partial Lyapunov function. The latter approximates the Lyapunov-Krasovskii functional. Results are validated by applying Clenshaw-Curtis and Gauss quadrature to a semi-analytical result of the functional, yielding a comparable finite-dimensional approximation. In particular, the article provides a formula for a tight quadratic lower bound, which is important in applications. Examples confirm that this new bound is significantly less conservative than known results.  
\end{abstract}

\begin{IEEEkeywords}
delay systems, Lyapunov-Krasovskii functional, operator-valued Lyapunov equation, spectral methods, pseudospectral discretization, Gauss quadrature 
\end{IEEEkeywords}

\section{Introduction}
Whenever a control law $u=\gamma(x)$ is constructed for a system $\dot x=f(x,u)$, 
the closed loop  description $\dot x(t) = f(x(t),\gamma(x(t))$ 
hinges on 
the availability of the instantaneous $x(t)$.
In practice, however, measurements, network communication, computation times, or the actuator response  cause a delay. The resulting $\dot x(t) = f(x(t),\gamma(x(t-h)))$, with a time delay $h>0$, is a retarded functional differential equation (RFDE) and can no longer  be tackled by the well-known stability theory of finite-dimensional ordinary differential equations (ODEs). What changes?

\subsection{Motivation: Delay-free versus Time-Delay System}\label{sec:motivation}
For delay-free nonlinear time-invariant ODEs  we could consider the linearization  about the equilibrium (provided it is hyperbolic and the right-hand side is differentiable) and simply conclude exponential stability from the eigenvalues of  $A$ in the resulting $\dot x =A x$, $A\in \mathbb R^{n\times n}$. We might be interested in the domain of attraction of the equilibrium. To this end, we could calculate a quadratic Lyapunov function $V(x)=x^\top P x$ for the linearized system
by prescribing a desired Lyapunov function derivative $D_{(\dot x =Ax)}^+V(x)=-x^\top Q x$. Solving the associated Lyapunov equation $PA+A^\top P=-Q$ for the matrix $P$ with a standard algorithm 
is accomplished 
in one line of Matlab code.
The obtained Lyapunov function also gives a negative Lyapunov function derivative in the nonlinear system -- at least in a certain domain around the equilibrium \cite{Khalil.2002}.
Let this domain be estimated by a norm ball 
with radius $r>0$. Then the probably most basic estimation of the domain of attraction, cf.\ \cite[Sec.~8.2]{Khalil.2002}, is provided by the set of points $x\in \mathbb R^n$ such that $V(x) <k_1 r^2$, 
where $k_1$ is the coefficient of the positive-definiteness bound 
$k_1\|x\|_2^2\leq V(x)$. Thus, having a non-conservative result for $k_1$ is important. 
It is simply the minimum eigenvalue of $P$   that provides the  largest possible coefficient $k_1$.

In time-delay systems, analogous steps become more elaborate. Given a nonlinear system, the principle of linearized stability still holds \cite{Diekmann.1995}, and we are led to the linear RFDE
\begin{align} \label{eq:lin_RFDE}
\dot x(t) &= A_0 x(t) + A_1 x(t-\delay), 
\end{align} 
$A_0,A_1\in \mathbb R^{n\times n}$, 
with a discrete delay $\delay> 0$. 
The characteristic equation has, generically, an infinite number of roots.  
Still, 
to determine stability for a given delay $\delay$, we can resort to numerical eigenvalue calculations \cite{Breda.2015,Wu.2012,Vyhlidal.2014b,Engelborghs.2002,Ito.1985}
or we use other characteristic-equation-based criteria that 
can prove  stability for all delays or all delays smaller than a first critical one \cite{Gu.2003,Scholl.2023}. 
The initial state $x_0$ is in fact an initial function, 
the domain of attraction is a set of initial functions, the state $x_t$ at time $t\geq 0$ represents the solution segment on the past delay interval $[t-\delay,t]$, and 
instead of a Lyapunov function,  a Lyapunov-Krasovskii (LK) functional $V(x_t)$ is required. 
Analogously to the delay-free case discussed above, we can explicitly prescribe the desired LK functional derivative and determine the corresponding LK functional. 
The
LK functional derivative along trajectories of (\ref{eq:lin_RFDE}) is  commonly  \cite{Kharitonov.2013} 
set as 
\iftoggle{IsTwocolumn}{
\begin{align} \label{eq:DfV_complete}
D_{(\ref{eq:lin_RFDE})}^+V(x_t)&=-x^\top\! (t) \hspace{0.5pt}  Q_0^{}  x(t)-x^\top \!(t-\delay)  \hspace{0.5pt}Q_1^{}  x(t-\delay) \nonumber \\
&\quad -\int_{-\delay}^0 x^\top\!(t+\theta)\hspace{0.5pt} Q_2^{}  x(t+\theta)\,\mathrm d \theta,\\[-1.5em]  \nonumber
\end{align}
}
{
\begin{align} \label{eq:DfV_complete}
D_{(\ref{eq:lin_RFDE})}^+V(x_t)&=-x^\top\! (t) \hspace{0.5pt}  Q_0^{}  x(t)-x^\top \!(t-\delay)  \hspace{0.5pt}Q_1^{}  x(t-\delay) 
-\int_{-\delay}^0 x^\top\!(t+\theta)\hspace{0.5pt} Q_2^{}  x(t+\theta)\,\mathrm d \theta,\\[-1.5em]  \nonumber
\end{align}
}
with freely chosen $Q_0, Q_1\succ 0_{n\times n}$, $Q_2\succeq 0_{n\times n}$. This derivative is accomplished by so-called complete-type (if $Q_{0,1,2}\succ 0_{n\times n}$) or related LK functionals 
\cite{Kharitonov.2003}, \cite[Thm.\ 2.11] {Kharitonov.2013}. Their determination is far more elaborate than the simple Lyapunov equation for ODEs:   
the known formula   for the solution of (\ref{eq:DfV_complete}) 
\begin{align} \label{eq:complete_LK_intro}
\iftoggle{IsTwocolumn}{&}{}
V(x_t) 
=
\iftoggle{IsTwocolumn}{}{\;&}
\iftoggle{IsTwocolumn}{\nonumber 
\\
&}{}
x^\top\!(t) \LM(0;\tilde Q) \,x(t)
+
2 \int_{-\delay}^0 x^\top\!(t) \LM(-\delay-\eta;\tilde Q)A_1 \hspace{0.5pt}x(t+\eta)\,\mathrm d \eta
\nonumber \\
&+
\int_{-\delay}^0\int_{-\delay}^0 x^\top\!(t+\xi) A_1^\top \LM(\xi-\eta;\tilde Q) A_1 \hspace{0.5pt}x(t+\eta)\,\mathrm d \eta\,\mathrm d\xi
\nonumber  \\
&+ 
\int_{-\delay}^0 x^\top\!(t+\eta) \big[Q_1+(\delay+\eta)Q_2\big] \hspace{0.5pt}x(t+\eta)\,\mathrm d \eta  
\end{align}
 requires 
the so-called delay Lyapunov matrix function\footnote{$\Psi(s;\tilde Q)$ is commonly denoted by $U(s)$ in the literature \label{fn:Psi}} 
$\LM(\,\cdot\,; \tilde Q)\colon[-\delay,\delay]\to \mathbb R^{n\times n}$
associated with $\tilde Q=Q_0+Q_1+\delay Q_2$. This matrix-valued function $\Psi$ is defined via a matrix-valued time-delayed boundary-value problem \cite[Def.~2.5]{Kharitonov.2013} 
that first has to be solved semi-analytically or numerically. The lower bound of interest on $V(x_t)$, 
 e.g., needed in an estimation of the domain of attraction \cite[Thm.~1]{MelchorAguilar.2007}, 
is described by 
\begin{align} \label{eq:pos_def_LK}
k_1\|x(t)\|^2\leq V(x_t). 
\end{align}
In contrast to the ODE case, where the minimum eigenvalue of $P$ gives the best possible coefficient $k_1$, nothing is reported about the conservativity of known formulae \cite{Kharitonov.2013,MelchorAguilar.2007} for 
(\ref{eq:pos_def_LK}).

\subsection{Objectives and Related Results}

 In light of the previous section, we intend to benefit from the enormous simplification that comes along with the treatment of ODEs in contrast to RFDEs. To this end, we use schemes of ODEs that approximate the RFDE. Based on these, the 
paper aims to provide a  new numerical approach to complete-type or related LK functionals which only requires to solve (a sequence of) Lyapunov equations. 
Moreover, a main objective is to get an improved   
coefficient~$k_1$ in  
 (\ref{eq:pos_def_LK}). 
Additionally, we hope to make the Lyapunov-Krasovskii theory more transparent, by interpreting the results in terms of Lyapunov-Rumyantsev partial stability of the approximating ODE.

Numerical approaches to complete-type and related LK functionals are a recent field of research. However, existing results either rely on the knowledge of the 
delay Lyapunov matrix function\footnotemark[\getrefnumber{fn:Psi}] 
$\Psi$,  
\cite{Gomez.2019b,Egorov.2017,Mondie.2022,Medvedeva.2015,Medvedeva.2013,Cuvas.2016,Bajodek.2023,Egorov.2014}, or they aim to determine 
$\Psi$, 
\cite{Egorov.2018,Michiels.2019,GarciaLozano.2004,Huesca.2009}. In contrast, the procedure in the present paper directly leads to an approximation of the overall LK functional (\ref{eq:complete_LK_intro}). 
Our main focus is not to provide a stability criterion, but, as outlined in Section \ref{sec:motivation}, we are interested in the functional itself and, in particular, in its lower bound (\ref{eq:pos_def_LK}). 

 We are going to use so-called \textit{discretization of the infinitesimal generator} approaches, which are well-established for numerical eigenvalue calculations  \cite{Breda.2015}. 
These approaches provide an ODE approximation of the RFDE. To analyze that ODE is also the core idea, e.g., in \cite{Breda.2016,Diekmann.2019,Wolff.2021}. 
The involved ODE can be obtained by various methods. 
We resort exemplarily to the Chebyshev collocation method, also known as pseudospectral discretization \cite{Breda.2005}, and to the Legendre tau method \cite{Ito.1986}. 

Even in the context of more general 
LK functionals,  a discretization of the RFDE in whatever form seems to be rarely  
considered in the literature. An early existence proof for quadratic LK functionals \cite{Infante.1978}, as well as a recent approach to so-called safety functionals \cite{Kiss.2021}, also employ discretizations. 
These, however, do not lead to ODEs, but to difference equations (so-called \textit{discretization of the solution operator} approaches \cite{Breda.2006b}).    
Moreover, in \cite{Bajodek.2020}, a discretization occurs in a proof of a linear-matrix-inequality 
stability criterion.

The core of the approach in the present paper is a Lyapunov equation from the ODE system matrix. The system matrices from both used discretization schemes are already known to give applicable Lyapunov, or, more generally, Riccati equation solutions. 
Concerning Chebyshev collocation, the resulting system matrix   has 
successfully been employed 
for Lyapunov equations in the context of $H_2$-norm computations \cite{Jarlebring.2011b,Vanbiervliet.2011,Michiels.2019}, 
where the delay Lyapunov matrix  $\Psi(0;\tilde Q)$ at $s=0$ is of interest. 
Further calculations are mentioned in \cite{Michiels.2019} to obtain, at least under the assumption of an exponentially stable RFDE equilibrium, 
the matrix-valued function $\Psi$ for the LK functional formula (\ref{eq:complete_LK_intro}). 
The Lyapunov equation is a common element with the present paper, but
only a submatrix of the Lyapunov equation solution is used in \cite[Prop.~2.1]{Michiels.2019}, 
the product with a  matrix exponential is required for any value of $s$ in $\Psi(s;\tilde Q)$, and the integral expressions in (\ref{eq:complete_LK_intro}) 
still would have to be evaluated to obtain a LK functional value. 
Concerning 
Legendre tau, 
the 
system matrix  (respectively a similar matrix) has already successfully  been used for algebraic  Riccati equations in the context of optimal control  \cite{Ito.1987}.

{\itshape Structure.} The paper is organized as follows. Sec.~\ref{sec:ODE_Approx} describes the numerical approach, and Sec.~\ref{sec:quadLowerBound} gives the formula for the quadratic lower bound, which is applied to an example in Sec.~\ref{sec:Example}. 
In Sec.~\ref{sec:role_of_part_stab},  we interpret the approach in terms of partial stability of the approximating ODE. 
Finally, Sec.~\ref{sec:Convergence} addresses convergence, before  Sec.\ \ref{sec:Conclusion} concludes the paper.

{\itshape Notation.}
The space of continuous  $\mathbb R^n$-valued functions on the interval $[a,b]$ is denoted by $C([a,b],\mathbb R^n)$, in short $C$, and square integrable functions by $L_2([a,b],\mathbb R^n)$ or $L_2$. 
We write $(w_k)_{k\in\mathcal I}$ for a vector with entries $w_k$, e.g., $(w_k)_{k\in\{0,\ldots,N\}}=[w_0,\ldots, w_N]^\top$, or  $(w_k)_k$ if the index set is clear from the context. Similar holds for matrices.
The set of eigenvalues of 
$A\in \mathbb R^{n\times n}$ is $\sigma(A)$, and 
$A$ is said to be Hurwitz if all eigenvalues have negative real parts. 
Moreover, $Q\succ 0_{n\times n}$ $(Q\succeq 0_{n\times n})$ denotes positive (semi)definiteness of $Q\in\mathbb R^{n\times n}$, implicitly requiring that $Q=Q^\top$. 
The zero vector in $\mathbb R^n$ is $0_n$, the vector-valued zero function on $[a,b]$ is $0_{n_{[a,b]}}$, the $m\times n$ zero matrix   $0_{m\times n}$, and the identity matrix in $\mathbb R^{n\times n}$ is $I_n$. 
Given $x\in \mathbb R^n$, we write $\|x\|_2$ for the Euclidean norm, whereas $\|x\|$ can be any arbitrary norm in $\mathbb R^n$. 
The Kronecker product of two matrices $A$ and $B$ is $A\otimes B$, and $A^-$ denotes a generalized inverse. 
To emphasize the structure of a block matrix, e.g., $A=[A_1\;A_2]$, with differently sized submatrices, $A_1\in\mathbb R^{n\times nN}$, $A_2\in \mathbb R^{n\times n}$, we write $A=[\;{\rule[.5ex]{2.5ex}{0.3pt}} A_1 {\rule[.5ex]{2.5ex}{0.3pt}}\;\;\; A_2\;]$. We use $\stackrel !=$ to  mark  a requirement, and $\stackrel{(\ldots)}=$ if the relation is explained by $(\ldots)$. 
The set of class-K functions is defined by 
$
\mathcal K = \{\kappa\in C([0,\infty),\mathbb R_{\geq 0}): \kappa(0)=0, \text{ strictly increasing}\}$. For the formal definition of  
$D_{(\mathrm{eq})}^+V$, see, e.g., \cite[Sec.~5.2]{Hale.1993}. 

\section{The numerical approach} \label{sec:ODE_Approx}
\subsection{ODE-Approximation Schemes of RFDEs}
Given a continuous initial function $x_0\in C([-\delay,0],\mathbb R^n)$,  the state  $x_t\in C([-\delay,0],\mathbb R^n)$ of the RFDE at time $t\geq 0$ 
 is  
defined by $x_t(\theta)=x(t+\theta)$, $\theta\in[-\delay,0]$.  Thus, it  
 represents the solution segment on $[t-\delay,t]$, cf.\ Fig.~\ref{fig:sol_segment}/\ref{fig:sol_state}. An ODE approximation has to 
address a finite-dimensional state vector 
instead. 
In the simplest case, this state vector  
$y(t)$ 
at time $t$ 
approximates 
 the values of 
 the segment 
$x_t$  in  
$N\!+\!1$ ordered points $\tilde \theta_0=-\delay,\;\ldots,\; \tilde \theta_{N}=0$,  
\begin{align} \label{eq:y_vec}
\begin{bmatrix}
x(t-\delay)\\
x(t+\tilde \theta_1)
\\
\vdots 
\\
x(t+\tilde \theta_{\!N-1})\!
\\
x(t)
\end{bmatrix}
=
\begin{bmatrix}
x_t(-\delay)\\
x_t(\tilde \theta_1)
\\
\vdots 
\\
x_t(\tilde \theta_{\!N-1})\!
\\
x_t(0)
\end{bmatrix}
\approx
\underbrace{\begin{bmatrix}
y^0(t)\\
y^1(t)
\\
\vdots 
\\
y^{N-1}(t)
\\
y^N(t)
\end{bmatrix}}_{y(t)\in \mathbb R^{n(N+1)}}
=:\!
\left[\begin{array}{@{}c}
z^0(t)\\
z^1(t)
\\
\vdots 
\\
z^{N-1}(t)\!
\\[0.2em]
\hline
\\[-0.8em]
\hat x(t)
\end{array}
\right]\!.
\nonumber \\[-1.2em]
\end{align}
Henceforth, upper indices  $k\in\{0,\ldots,N\}$ address vector-valued components $y^k\hspace{-0.5pt}(t)\hspace{-1pt}\in \hspace{-0.5pt}\mathbb R^n$.
Whenever the special interest in 
$y^N$ shall be emphasized, we use the indicated decomposition $y=[z^\top\!,\hat x^\top]^\top\!$. 
For 
$\tilde \theta_k$ in  (\ref{eq:y_vec}), 
a non-equidistant grid  
\begin{align} \label{eq:ChebNodes}
\tilde \theta_k
=\tfrac \delay 2 (\tilde \vartheta_{k}-1),
\qquad \text{with} \quad
\tilde \vartheta_k
=-\cos(\tfrac k N \pi),
\end{align}
$k\!\in\!\{0,\ldots,N\}$,    
built from 
shifting and scaling  classical  
Chebyshev nodes\footnote{also called Gauss–Lobatto Chebyshev nodes (cf.\ Table \ref{tab:polynomialApprox}) or  Chebyshev points of the second kind (despite of referring to extrema of the  `Chebyshev polynomials of the first kind') or endpoints-and-extrema Chebyshev nodes.} 
$\tilde \vartheta_k\in[-1,1]$  to $\tilde \theta_k\in [-\delay,0]$,  
 has proven to be advantageous \cite{Trefethen.2000}. 
The latter is 
also at the core of the open-source Matlab toolbox Chebfun by Trefethen and co-workers \cite{Driscoll.2014}, from which we can benefit in the implementations. 
 
It remains to find the ODE 
 \begin{align} \label{eq:y_ODE}
\dot y(t)=A_{\ind y}\, y(t),
\end{align}
$A_{\ind y}\in \mathbb R^{n(N+1)\times n(N+1)}$, that describes the dynamics of $y$. To this end, we use exemplarily the Chebyshev collocation method and the Legendre tau method combined with a change of basis. The resulting system matrices $A_y$ are given by (\ref{eq:A_y_C}) and (\ref{eq:A_y_L}) in the appendix.
Fig.~\ref{fig:sol_ODE} shows how a solution of (\ref{eq:y_ODE})  looks like, provided the initial condition  $y(0)$, 
given by the blue points, is a discretization
 of the initial function $x_0\in C([-\delay,0],\mathbb R^n)$, cf.\ (\ref{eq:phi2y}) with $\phi=x_0$.

\begin{figure}
		\centering
\subfloat[
RFDE solution, state $x_t$ as solution segment
		\label{fig:sol_segment}
	]
	{
	\includegraphics{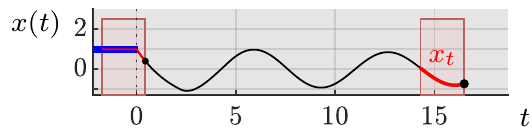} 
}	
\\
\subfloat[
evolution of the RFDE state $x_t$
		\label{fig:sol_state}
	]
	{
\includegraphics{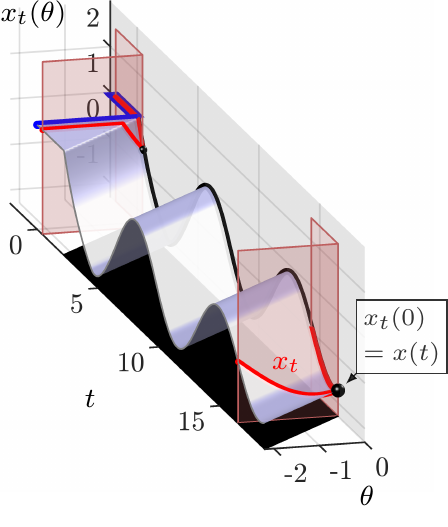}
}	
\iftoggle{IsTwocolumn}{}{\quad}
\subfloat[
components $y^k(t)$ of the ODE solution ($N=16$, $A_y$ 
from (\ref{eq:A_y_C}))
		\label{fig:sol_ODE}
	]
		{\hspace{-1.6cm}
		
		\includegraphics[]{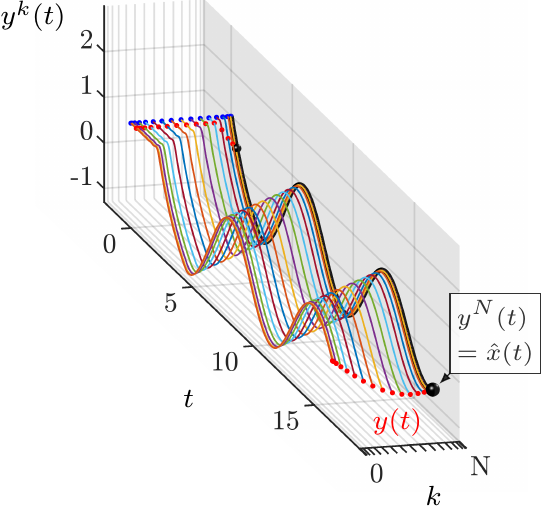}
\hspace{-1em}
}	
	\caption{Solution of $\dot x(t)=-0.5x(t)-x(t-2.2)$ for the initial function $x_0(\theta)\equiv 1$. 
	}
	\label{fig:sol}
\end{figure}

\subsection{An Approximation Scheme for the LK Functional} \label{sec:Partial_Lyap_Eq}

We are going to set up a Lyapunov function 
$V_y\colon\mathbb R^{n(N+1)}\to \mathbb R$
for the approximating ODE (\ref{eq:y_ODE}) (in fact, a partial Lyapunov function, see Sec.~\ref{sec:role_of_part_stab}). 
To this end, we make the quadratic ansatz 
\begin{align}\label{eq:Vy}
V_{\ind y}(y)=y^\top P_{\ind y}\, y, 
\end{align}
with $P_{\ind y}=P_{\ind y}^\top\in \mathbb R^{n(N+1)\times n(N+1)}$ to be determined. 
The derivative of $V_{\ind y}$ 
along solutions shall be 
$-y^\top Q_{\ind y} y$ with a  prescribed symmetric 
matrix $Q_{\ind y}$
\begin{align} \label{eq:DV_Qy}
D_{(\ref{eq:y_ODE})}^+V_{\ind y}(y)=y^\top (P_{\ind y}A_{\ind y}+A_{\ind y}^\top P_{\ind y}) y \stackrel !=-y^\top Q_{\ind y} y,
\end{align}
$ \forall y\in \mathbb R^{n(N+1)}$.
Thus, the unknown matrix $P_{\ind y}$ is obtained by solving the Lyapunov equation 
\begin{align}\label{eq:LyapEq_y}
P_{\ind y}A_{\ind y}+A_{\ind y}^\top P_{\ind y}=-Q_{\ind y}.
\end{align}  
See Appendix \ref{sec:LegCoordinates} for a description in Legendre coordinates (indicated by a subscript $\zeta$ at the matrices). 
We construct the right-hand side of (\ref{eq:DV_Qy}) 
according to a discretization of the right-hand side of  (\ref{eq:DfV_complete}) with freely chosen matrices $Q_0,Q_1\succ 0_{n\times n}$, $Q_2\succeq 0_{n \times n}$. Hence, a straightforward choice of 
$Q_{\ind y}$ in 
(\ref{eq:LyapEq_y}) becomes visible from 
\begin{align}
\iftoggle{IsTwocolumn}{&}{}
D_{(\ref{eq:y_ODE})}^+V_{\ind y}(y) 
\iftoggle{IsTwocolumn}{}{&}
\stackrel ! = 
- (y^N) \!^\top Q_0 y^N \!
-(y^0)\!^\top Q_1 y^0
- \!\sum_{k=0}^N (y^k )\!^\top Q_2 y^k w_k
\nonumber\\
&=-y^\top \!\!
\left(
\left[\begin{array}{@{}c@{}c@{}c@{}c@{}c@{}} 
Q_1  &&&&\\ 
& 0_{n\times n} \!&&& \\ 
&&\ddots && \\ 
&&& 0_{n\times n} \! &\\ 
&&&& Q_0
\end{array}\right]
+
\left[\begin{array}{@{}c@{}c@{}c@{}c@{}c@{}} 
w_0 Q_2  &&&&\\ 
&\!\!\! w_1 Q_2   &&& 
\\ 
&& \ddots && \\ 
\\ 
&&&& w_N Q_2 
\end{array}\right]
\right)
y
 \nonumber 
\\
&=:-y^\top Q_{\ind y} y\,,
\label{eq:Q_quadForm}
\end{align}
where $w_k\in \mathbb R$ are integration weights, see Appendix \ref{sec:num_int}. 
Sec.~\ref{sec:splitting} will present discretization-scheme-dependent modifications of $Q_y$ that aim at improved convergence properties. 

Altogether, we 
solve a discretization of the original problem (\ref{eq:DfV_complete}), and thus $V_y(y)$ in (\ref{eq:Vy}) is intended to be an approximation of the LK functional $V(\phi)$. Convergence aspects will be addressed in Sec.~\ref{sec:Convergence}. 
Hence, given a prescribed argument $\phi\in C([-\delay,0],\mathbb R^n)$, which might be $\phi=x_t$ for some $t\geq 0$, or, 
without loss of generality, $\phi=x_0$ at $t=0$, we 
can obtain a numerical approximation for the evaluation $V(\phi)$. 
To this end, the argument $y$ in $V_y(y)$ must be chosen correspondingly. 
Such a discretization $y$ of $\phi$ can be  obtained by evaluating 
the vector-valued function 
$\phi$ 
at the gridpoints (\ref{eq:ChebNodes}) and 
stacking
these $(N+1)$ vectors in    
\begin{align} \label{eq:phi2y}
y
=
\left[\begin{array}{c}
\\[-0.25em]
\vert
\\
z
\\
\vert
\\[0.75em]
\hat x
\end{array}
\right]
=
\begin{bmatrix}
\phi(-\delay)\\
\phi(\tilde \theta_1)
\\
\vdots 
\\
\phi(\tilde \theta_{\!N-1})\!
\\
\phi(0)
\end{bmatrix}. 
\end{align}
Strictly speaking, (\ref{eq:phi2y}) is the  interpolatory discretization presupposed in the Chebyshev collocation method. 
If $\phi$ is a polynomial of order $N$ or less, (\ref{eq:phi2y}) also agrees with  
the coordinate transform (\ref{eq:T_yc}) of the 
discretization 
in the Legendre tau method 
(\ref{eq:zeta_discretization}), but otherwise the latter might give a slightly deviating vector  $y$ (pointwise evaluations of the approximating polynomial).   

To sum up, 
we only have to solve the Lyapunov equation (\ref{eq:LyapEq_y}), to obtain the approximation $V(\phi)\approx V_y(y)$.

\subsection{Existence, Uniqueness, and Non-Negativity}\label{sec:ExistenceUniqueness}

Note that $Q_{\ind y}$ in (\ref{eq:Q_quadForm}) is a positive semidefinite, but not necessarily positive definite matrix. 
Let us revisit some properties of the Lyapunov equation (\ref{eq:LyapEq_y}) in this rather uncommon semidefinite case, without further  assumptions on the involved matrices. 
See \cite[p.\ 284]{Hinrichsen.2005}, and \cite[Thm.\ 1]{Carlson.1962} for Lemma \ref{lem:LyapEquationProperties}c.
\begin{lemma}\label{lem:LyapEquationProperties}
Consider $PA+A^\top P=-Q$,\; $A,Q\in \mathbb R^{\nu\times \nu}$. 
\\
(a) If $\sigma(A)\cap(-\sigma(A))=\emptyset$, then  a unique solution $P$ exists.
\\
(b) If $Q=Q^\top$ and $P$ is a solution, then $P^\top$ is also a solution. If, additionally, (a) holds, then $P=P^\top$.
\\
(c) If $Q\succeq 0_{\nu\times \nu}$, $P=P^\top$, and $i_0(A)=0$, then $i_+(P)\leq i_-(A)$ and $i_-(P)\leq i_+(A)$, where 
$i_{-,0,+}$ are 
the numbers of eigenvalues with negative, zero, and positive real parts.
\end{lemma}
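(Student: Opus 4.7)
The plan is to handle the three claims in increasing order of difficulty, relying on vectorization for (a), an elementary transposition for (b), and an invariant-subspace decomposition for (c).

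For (a), I would rewrite $PA+A^\top P = -Q$ as the linear system $(A^\top\otimes I_\nu + I_\nu\otimes A^\top)\,\mathrm{vec}(P) = -\mathrm{vec}(Q)$. The Kronecker sum on the left has eigenvalues $\lambda_i+\lambda_j$ with $\lambda_i,\lambda_j\in\sigma(A)$ (a classical fact), so it is nonsingular iff $\sigma(A)\cap(-\sigma(A))=\emptyset$, which immediately yields existence and uniqueness.

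For (b), I would transpose $PA+A^\top P=-Q$ and use $Q=Q^\top$ to obtain $P^\top A + A^\top P^\top = -Q$, showing that $P^\top$ solves the same equation. Under the additional hypothesis of (a), the unique solution must coincide with its transpose, forcing $P=P^\top$.

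Part (c) is the main piece of work. The idea is to exploit the trajectory identity $\tfrac{\mathrm d}{\mathrm d t}\bigl[x(t)^\top P\,x(t)\bigr] = -x(t)^\top Q\,x(t)$ along $x(t)=e^{At}v$, which follows directly from the Lyapunov equation. Since $i_0(A)=0$, I would decompose $\mathbb R^\nu = V_-\oplus V_+$ into the real $A$-invariant subspaces associated with eigenvalues of negative and positive real parts respectively; note $\dim V_{\pm}=i_{\pm}(A)$. For $v\in V_+$ the trajectory decays as $t\to-\infty$, so integrating the identity over $(-\infty,0]$ and using $Q\succeq 0_{\nu\times\nu}$ gives $v^\top P v = -\int_{-\infty}^{0} x(t)^\top Q\,x(t)\,\mathrm d t \leq 0$. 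The symmetric backward/forward argument for $v\in V_-$ yields $v^\top P v \geq 0$. The main obstacle is then to convert these two sign restrictions into the claimed inertia bounds, which I would do by dimension counting: let $W_+$ be the positive eigenspace of the symmetric matrix $P$, so $\dim W_+ = i_+(P)$. On $W_+\setminus\{0_\nu\}$ the form $v^\top P v$ is strictly positive, while on $V_+$ it is nonpositive, hence $W_+\cap V_+=\{0_\nu\}$. The dimension formula gives $i_+(P)+i_+(A)=\dim W_+ + \dim V_+ \leq \nu = i_+(A)+i_-(A)$, so $i_+(P)\leq i_-(A)$. An analogous argument with the negative eigenspace $W_-$ of $P$ and $V_-$ delivers $i_-(P)\leq i_+(A)$.
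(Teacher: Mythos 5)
Your proposal is correct. Note, however, that the paper does not prove Lemma~\ref{lem:LyapEquationProperties} at all: it simply cites \cite[p.~284]{Hinrichsen.2005} for (a),(b) and \cite[Thm.~1]{Carlson.1962} for (c), so your argument is not a variant of the paper's proof but a self-contained replacement for the cited literature. Parts (a) and (b) are the standard vectorization and transposition arguments and match what any reference would give. Part (c) is the interesting piece: you prove the relevant half of the Carlson--Schneider inertia theorem by the Lyapunov-theoretic route --- integrating $\tfrac{\mathrm d}{\mathrm dt}[x^\top P x]=-x^\top Qx$ along trajectories over the stable and antistable invariant subspaces $V_\mp$ (which exist and have dimensions $i_\mp(A)$ precisely because $i_0(A)=0$), obtaining $v^\top Pv\geq 0$ on $V_-$ and $v^\top Pv\leq 0$ on $V_+$, and then converting these sign constraints into inertia bounds by the intersection/dimension count $W_+\cap V_+=\{0_\nu\}$, $W_-\cap V_-=\{0_\nu\}$. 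This is sound: the improper integrals converge because $e^{At}v$ decays exponentially on the respective invariant subspace (no imaginary-axis eigenvalues), and the strict positivity of $v^\top Pv$ on $W_+\setminus\{0_\nu\}$ is exactly what forces trivial intersection. Compared to the purely algebraic proofs in the inertia-theorem literature, your analytic argument is more elementary and makes transparent why the semidefiniteness of $Q$ enters only through the sign of the integral; the price is the (mild) need to justify convergence of the integrals, which you should state explicitly in a polished write-up.
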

\begin{remark}\label{rem:LyapCond}
Existence of the 
LK functional $V$ 
in (\ref{eq:DfV_complete})
is analogously ensured by the time-delay counterpart of Lemma \ref{lem:LyapEquationProperties}a, 
the so-called Lyapunov condition \cite[Def.~2.6]{Kharitonov.2013}.  
\end{remark}
\begin{proposition}\label{prop:properties}
Let $Q_y\succeq 0_{n(N+1)\times n(N+1)}$ be given. If 
$A_y$ is Hurwitz, then there exists a unique solution $P_y$ in (\ref{eq:LyapEq_y}). Moreover, $P_y=P_y^\top$ is positive semidefinite. 
\end{proposition}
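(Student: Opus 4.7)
My plan is to apply the three parts of Lemma~\ref{lem:LyapEquationProperties} in sequence, since everything needed is already contained there; the proposition is essentially the standard Hurwitz case of the Lyapunov equation, only with a semidefinite (rather than definite) right-hand side.

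First, I would verify the spectral separation hypothesis of Lemma~\ref{lem:LyapEquationProperties}a. Because $A_y$ is Hurwitz, every $\lambda\in\sigma(A_y)$ satisfies $\mathrm{Re}(\lambda)<0$, so every $-\lambda\in-\sigma(A_y)$ has $\mathrm{Re}(-\lambda)>0$. Hence the two spectra lie in disjoint open half-planes and $\sigma(A_y)\cap(-\sigma(A_y))=\emptyset$, which gives existence and uniqueness of $P_y$ solving (\ref{eq:LyapEq_y}).

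Second, for the symmetry I would point to the explicit form of $Q_y$ in (\ref{eq:Q_quadForm}): it is built from symmetric blocks $Q_0,Q_1,Q_2$ (recall $Q\succ 0$ and $Q\succeq 0$ implicitly require $Q=Q^\top$ per the notation conventions), and is therefore itself symmetric. Lemma~\ref{lem:LyapEquationProperties}b then yields $P_y=P_y^\top$, using the uniqueness from the previous step.

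Third, for the semidefiniteness I would invoke Lemma~\ref{lem:LyapEquationProperties}c with $A=A_y$, $P=P_y$, $Q=Q_y$. Since $A_y$ is Hurwitz, $i_0(A_y)=0$ and $i_+(A_y)=0$, and the hypotheses $Q_y\succeq 0$, $P_y=P_y^\top$ have just been established. The conclusion $i_-(P_y)\leq i_+(A_y)=0$ means that the symmetric $P_y$ has no negative eigenvalues, i.e.\ $P_y\succeq 0_{n(N+1)\times n(N+1)}$.

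There is no real obstacle here: the statement is a direct specialization of Lemma~\ref{lem:LyapEquationProperties}, and one could alternatively give the one-line integral-representation proof $P_y=\int_0^\infty e^{A_y^\top t}Q_y\,e^{A_y t}\,\mathrm dt$, which converges by the Hurwitz property, is manifestly symmetric and positive semidefinite, and is easily checked to solve~(\ref{eq:LyapEq_y}). I would mention this as a remark only, since the lemma-based argument keeps the proof self-contained within the framework already set up in the paper and needs no additional background on matrix exponentials.
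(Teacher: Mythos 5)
Your argument is exactly the paper's proof: Lemma~\ref{lem:LyapEquationProperties}a with $\sigma(A_y)\subset\mathbb C^-$ for existence and uniqueness, Lemma~\ref{lem:LyapEquationProperties}b for symmetry, and Lemma~\ref{lem:LyapEquationProperties}c with $i_0(A_y)=i_+(A_y)=0$ for positive semidefiniteness, just written out in more detail. The integral-representation remark is a fine optional aside but adds nothing essential.
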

\begin{proof}
Lemma \ref{lem:LyapEquationProperties}a with $\sigma(A)\subset\mathbb C^-$, Lemma \ref{lem:LyapEquationProperties}b, and Lemma \ref{lem:LyapEquationProperties}c with $i_0(A)=i_+(A)=0$.
\end{proof}
Consequently, if the zero equilibrium of the ODE approximation (\ref{eq:y_ODE}) is asymptotically stable, and $D_{(\ref{eq:y_ODE})}^+V_y(y)$ is chosen according to  (\ref{eq:Q_quadForm}) and thus nonpositive, 
then 
existence, uniqueness, and nonnegativity of $V_y(y)$ in Sec.~\ref{sec:Partial_Lyap_Eq} are ensured.

\subsection{Structure of the Result} 

 To get an impression of how the Lyapunov equation solution $P_y$ looks like, we  consider an example with $n=1$. As will be demonstrated, only little implementation effort is required.

\begin{figure} 
		\centering

\includegraphics{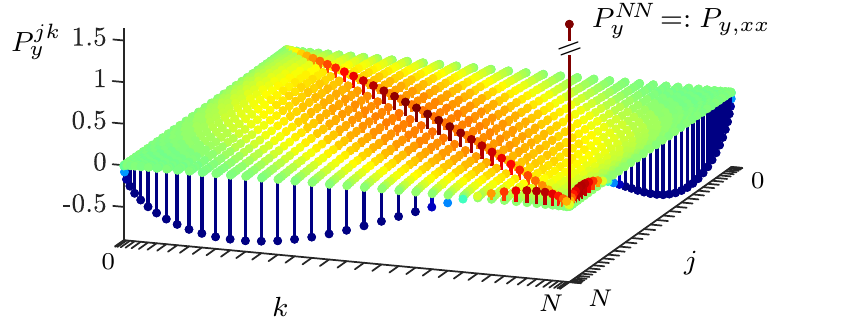} 

	\caption{Entries of the matrix $P_{\ind y}$ in Example \ref{exmp:P_structure} ($N=40$).
	}
	\label{fig:P_structure}
\end{figure}

\begin{example} \label{exmp:P_structure}
Let $\dot x(t)=-0.5x(t)-x(t-2.2)$  and $Q_0=Q_1=1$, $Q_2=0$ in (\ref{eq:Q_quadForm}). 
We get the solution $P_y$ of (\ref{eq:LyapEq_y}) via\footnote{
If $Q_2\neq 0_{n\times n}$ and $A_y=A_y^L$, then 
\texttt{Tcy'*Qc2*Tcy}  from (\ref{eq:Qzeta2}) is added to $\texttt{Q}$, 
with  \texttt{Qc2=kron(delay*diag([1./(2*(0:N-1)+1),1]),Q2)}.  
}  \vspace{0.2em}
{\small\begin{verbatim} 
 Q=blkdiag(Q1,zeros(n*(N-1)),Q0); P=lyap(A',Q);
\end{verbatim}}\noindent
in Matlab, 
provided $A_y$ is assigned to $\texttt{A}$ (see Rem.~\ref{rem:Matlab_AyC} or Rem.~\ref{rem:Matlab_AyL} in the appendix). 
The structure of $P_{\ind y}$ for $N=40$ is depicted in Fig.~\ref{fig:P_structure}. It  stems from the Legendre tau method, i.e., $A_{\ind y}=A_{\ind y}^L$ from (\ref{eq:A_y_L})  is used in the Lyapunov equation (or, equivalently, $A_{\zeta}=A_{\zeta}^L$ from (\ref{eq:A_zeta_L}) in the Lyapunov equation from Appendix \ref{sec:LegCoordinates}). 
However,  Chebyshev collocation with $A_{\ind y}=A_{\ind y}^C$ from (\ref{eq:A_y_C}) 
	gives 
almost the same 
 picture of $P_{\ind y}$.
\end{example}

In Fig.\ \ref{fig:P_structure}, the combs on the last column, the last row, and the diagonal as well as the striking right lower element of the matrix $P_y$    are also existent with a refined grid. 
Thus, $V_{\ind y}(y)=y^\top P_{\ind y}\, y=\sum_{j=0}^N \sum_{k=0}^N (y^j)^\top P_{\ind y}^{jk} y^k$ is not the discretized version of a Lebesque integral $\int_{-\delay}^0\int_{-\delay}^0 \phi^\top\!(\xi)\, P(\xi,\theta) \,\phi(\theta)\,\mathrm d \theta \mathrm d \xi$. 
Instead, the combs suggest that 
\begin{align} \label{eq:V_y_matrix}
\iftoggle{IsTwocolumn}{&}{}
V_{\ind y}(y)
\iftoggle{IsTwocolumn}{}{&}
=y^\top P_{\ind y} y
=
\begin{bmatrix} 
\rule[-1ex]{0.3pt}{1.5ex} \\
z \vphantom{\hat P^\top_x}\\   
\rule[0ex]{0.3pt}{1.5ex} \\ 
\hat x \vphantom{\hat P^\top_x}
\end{bmatrix}^\top 
\begin{bmatrix}
 P_{{\ind y},zz} 
& 
\begin{matrix}  
\rule[-0ex]{0.3pt}{1.5ex} \vphantom{\rule[-0.5ex]{0.3pt}{1.5ex}  }\\ 
 P_{{\ind y},xz}^\top \\    
\rule[-0.2ex]{0.3pt}{1.5ex}  \vphantom{\rule[-0ex]{0.3pt}{1.5ex}}
\end{matrix}
\\[1.5em]
{\rule[.5ex]{1.5ex}{0.3pt}}  
\; P_{{\ind y},xz}\,
{\rule[.5ex]{1.5ex}{0.3pt}} 
\;
&
P_{{\ind y},xx}
\end{bmatrix}
\begin{bmatrix} 
\rule[-1ex]{0.3pt}{1.5ex} \\
z \vphantom{\hat P^\top_x}\\   
\rule[0ex]{0.3pt}{1.5ex} \\ 
\hat x \vphantom{\hat P^\top_x}
\end{bmatrix}
\\
&=
\hat x^\top P_{{\ind y},xx} \hat x
+ 
2 \sum_{k=0}^{N-1} \hat x^\top P_{{\ind y}, xz}^k z^k 
+ 
\sum_{j=0}^{N-1} \sum_{\substack{k=0\\k\neq j}}^{N-1} 
(z^j)^\top  P_{{\ind y},zz}^{jk} z^k 
\iftoggle{IsTwocolumn}{
\nonumber\\[-1em]
&\quad}{}
+
\sum_{k=0}^{N-1} (z^k)^\top P_{{\ind y},zz}^{kk} z^k
\label{eq:V_y_sums}
\end{align}
describes, through the (discrete $\hookrightarrow$ continuous) correspondences indicated by (\ref{eq:phi2y}) and by $k$ vs. $\theta$ in  Fig.~\ref{fig:sol}  
\begin{align*}
z^k&=\phi(\tilde\theta_k), \;k\!\hspace{0.5pt}\in \!\{0,\ldots, N\!-\!1\} 
\quad 
 &\hookrightarrow  
\quad 
&\phi(\theta),\;\theta\in[-\delay,0),
\\
z^j&=\phi(\tilde\theta_j), \;j\!\hspace{0.5pt}\in \!\{0,\ldots, N\!-\!1\} 
\quad 
 &\hookrightarrow  
\quad 
&\phi(\xi),\;\xi\in[-\delay,0),
\\
\hat x&=\phi(0)& \hookrightarrow   \quad &\phi(0), 
\end{align*}
the discrete version of some 
\begin{subequations}\label{eq:complete_LK_with_kernels}
\begin{align} \label{eq:complete_LK}
V(\phi) 
&=
\phi^\top\!(0) \hspace{0.5pt}P_{\mathrm{xx}} \hspace{0.75pt}\phi(0)
+
2 \int_{-\delay}^0 \phi^\top\!(0) \hspace{0.5pt}P_{\mathrm{xz}}(\theta) \hspace{0.5pt}\phi(\theta)\,\mathrm d \theta
\iftoggle{IsTwocolumn}{\nonumber \\ &\quad}{}
+
\int_{-\delay}^0\int_{-\delay}^0 \phi^\top\!(\xi) P_{\mathrm{zz}}(\xi,\theta) \hspace{0.5pt}\phi(\theta)\,\mathrm d \theta\,\mathrm d\xi
\nonumber  \\&\quad
+ 
\int_{-\delay}^0 \phi^\top\!(\theta) \, P_{\mathrm{zz,diag}}(\theta) \,\phi(\theta)\,\mathrm d \theta . 
\end{align}
Note that the latter exactly reflects the known structure of complete-type and related LK functionals given in (\ref{eq:complete_LK_intro}).

\subsection{Validation via Numerical Integration}\label{sec:numInt}
To be more precise, the structure of complete-type and related LK functionals is the one in (\ref{eq:complete_LK}), and the kernel functions can be identified in (\ref{eq:complete_LK_intro}) as 
\begin{alignat}{5}
P_{\mathrm{zz}}(\xi,\theta)&=A_1^\top \LM(\xi-\theta;\tilde Q) A_1,
\;
 &\;P_{\mathrm{xz}}(\theta)&=\LM(-\delay-\theta;\tilde Q)A_1, 
\nonumber\\* 
P_{\mathrm{zz,diag}}(\theta) &=Q_1+(\delay+\theta)Q_2, 
&P_{\mathrm{xx}}&=\LM(0;\tilde Q).
\label{eq:P_in_completeTypeLK}
\end{alignat}
\end{subequations}
For the sake of validation, we also go the other way around and discretize  the known formula of $V(\phi)$ 
by interpolatory quadrature rules (cf.\ Table \ref{tab:polynomialApprox}).  That is, replacing the integrals in (\ref{eq:complete_LK})  by weighted sums from evaluations at the grid points. 
In Appendix \ref{sec:num_int}, we write the 
result as a quadratic form
\begin{align}\label{eq:P_quad_Clenshaw_Curtis_}
V(\phi)\approx y^\top \!P_{\ind{y}}^{\,\mathrm{quad} } y  
\end{align} 
like (\ref{eq:V_y_matrix}).
Taking  for $\Psi$ in (\ref{eq:P_in_completeTypeLK}) the semi-analytical solution approach from \cite{GarciaLozano.2004}, 
the picture of  the resulting $P_{\ind{y}}^{\,\mathrm{quad} }$ 
for Example \ref{exmp:P_structure} is indeed hardly distinguishable from Fig.~\ref{fig:P_structure}. See Sec.~\ref{sec:Example} for further numerical comparisons. 
\begin{remark}
Both the ODE-based approach from Sec.~\ref{sec:Partial_Lyap_Eq} and the numerical-integration-based approach from Sec.~\ref{sec:numInt} provide an approximation $V_y(y)=y^\top P_y y$. The former seeks for an approximative solution of the defining equation (\ref{eq:DfV_complete}).      
In contrast, the latter already starts with the exact knowledge of the LK functional (\ref{eq:complete_LK_intro}), presupposing knowledge of $\Psi$, and only has to describe a 
discretization thereof. In so far, 
the numerical-integration-based approach 
is related to discretizations of the known $V(\phi)$ 
already 
proposed in the literature 
-- be it based on piecewise cubic polynomials that approximate $\phi$ \cite{Medvedeva.2015,Medvedeva.2013} 
or, recently, on a Legendre series truncation of $\phi$ 
\cite{Bajodek.2023} 
(also used in 
\cite{Bajodek.2024,Bajodek.2022e}), 
or a certain fundamental-matrix-dependent  discontinuous approximation of $\phi$   \cite{Gomez.2019b,Egorov.2017,Egorov.2014}.
With the exception of the latter approach (which addresses 
zero $Q_0$ and $Q_2$), integral terms 
with $\Psi$ 
must still be evaluated. 
To our best knowledge, 
applying interpolatory 
quadrature rules (cf.\ Table \ref{tab:polynomialApprox})  to 
(\ref{eq:complete_LK_with_kernels})  
has not yet been considered. 
\end{remark}

\section{The Quadratic Lower Bound} \label{sec:quadLowerBound}
As a result of the preceding section, we have an approximation of the LK functional. 
However, in applications, we also need the quadratic lower bound (\ref{eq:pos_def_LK}). 
If $Q_0,Q_1\succ 0_{n\times n}, Q_2\succeq 0_{n\times n}$,   
existence of a non-zero\footnote{\label{fn:zerok1}In contrast to quadratic forms from finite-dimensional matrices, in infinite dimensions coercivity 
of the associated bilinear form (existence of a quadratic lower bound) is a stronger concept than positive definiteness (positivity for any nonzero element). 
The same holds for 
the partial 
concepts. 
Consequently, despite of $V_y$ being partially positive definite w.r.t.\ $\hat x$ (Def.~\ref{def:partPosDef}),  
the largest possible coefficient in (\ref{eq:k1_Vy}) as $N\to\infty$ could become $k_1\to 0$, 
cf.\ Rem.~\ref{eq:cubic_lower_bound}. 
}
coefficient $k_1>0$ 
is proven in \cite[Lem.\ 2.10]{Kharitonov.2013}, given the RFDE equilibrium is exponentially\footnotemark[\getrefnumber{fn:UAS_UES}]
 stable.
In a discrete version for the approximation $V_y$,
the 
bound (\ref{eq:pos_def_LK}) becomes $\forall y=[z^\top, \hat x ^\top ]^\top, 
 z\in \mathbb R^{nN},\hat x\in \mathbb R^n:$
\begin{align} \label{eq:k1_Vy}
k_1 \| \hat x\|_2^2 
\leq 
 V_{\ind{y}}(y). 
\end{align}
Since solely  $\hat x=y^N$ is considered, (\ref{eq:k1_Vy}) does not refer to the common $\lambda_{\min}(P_y)\| y\|_2^2 \leq  V_{\ind{y}}(y)$ 
mentioned in the introduction. 
Why this discrete version of (\ref{eq:pos_def_LK}) still also makes sense in a Lyapunov analysis of the approximating ODE, will be 
 explained in Sec.~\ref{sec:role_of_part_stab}.

The main contribution of the present section, Lemma \ref{lem:part_pos_def_bound_Schur}, immediately leads to the searched bound (\ref{eq:k1_Vy}) in Thm.\ \ref{thm:k1}. For the sake of readability, we consider a general positive semidefinite matrix $P$ with a left upper submatrix $Z$, instead of $P_y$ and $P_{y,zz}$ introduced in (\ref{eq:V_y_matrix}).  
The lemma is based on the generalized Schur complement (\ref{eq:SchurComplement}), cf.\ \cite{Horn.2005}, where $Z^-$ is a generalized matrix inverse of $Z$, e.g., the Moore-Penrose inverse. If $Z$ is nonsingular, then $Z^-=Z^{-1}$. 
\begin{lemma}\label{lem:part_pos_def_bound_Schur}
Let $P=[\begin{smallmatrix} Z & B \\ B^\top & X \end{smallmatrix}]$ with $Z=Z^\top\in \mathbb R^{p\times p}, B\in \mathbb R^{p\times n}, X=X^\top \in \mathbb R^{n\times n}$. If $P$ is positive semidefinite, then
\begin{align}\label{eq:ABC_Schur}
\min_{\substack{z\in \mathbb R^p\\ x\in \mathbb R^n\setminus\{0_n\}}}\!
\frac{1}{\|x\|_2^2}
\begin{bmatrix}
z \\ x
\end{bmatrix}^{\!\top} \!\!
\begin{bmatrix}Z & B \\ B^\top & X \end{bmatrix}
\begin{bmatrix}
z \\ x
\end{bmatrix}
&=
\lambda_{\min}(P/Z),
\\
\text{where}\quad 
\label{eq:SchurComplement}
P/Z 
&= X-B^\top Z^- B. 
\end{align}
The minimum is attained by 
$ [\begin{smallmatrix}
z \\ x
\end{smallmatrix}]
= 
[\begin{smallmatrix}
-Z^-B v \\ v
\end{smallmatrix}]
$, 
with $v$ being an eigenvector in 
$ (P/Z) \, v = v\, \lambda_{\min}(P/Z) $.
\end{lemma}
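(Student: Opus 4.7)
The plan is to minimize first over $z$ for a fixed $x$, obtaining the Schur complement as the reduced quadratic form, and then to minimize over $x\neq 0_n$ on the unit sphere via the Rayleigh quotient.

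Before doing so I would establish the prerequisite inclusion $\operatorname{Range}(B)\subseteq\operatorname{Range}(Z)$, which makes the use of the generalized inverse legitimate. It follows directly from $P\succeq 0$: for any $u\in\ker Z$ one has $u^\top Z u = 0$, so the map $t\mapsto \bigl[\begin{smallmatrix} tu\\ v\end{smallmatrix}\bigr]^{\!\top}\! P\bigl[\begin{smallmatrix} tu\\ v\end{smallmatrix}\bigr] = 2t\,u^\top B v + v^\top X v$ is affine in $t$ and nonnegative for every $v$, forcing $u^\top B=0$. Hence $\ker Z \perp \operatorname{Range}(B)$, and since $Z=Z^\top$, this gives $\operatorname{Range}(B)\subseteq \operatorname{Range}(Z)$.

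Next, with $x$ held fixed, $q(z;x) := z^\top Z z + 2z^\top B x + x^\top X x$ is convex in $z$ because $Z\succeq 0$, so a minimum is characterised by the stationarity condition $Z z = -Bx$. This linear system is solvable precisely thanks to the inclusion above, and $z^\star=-Z^- B x$ is a particular solution (using any generalized inverse with $Z Z^- Z = Z$, e.g.\ the Moore–Penrose one, which is symmetric). Substituting back, the cross term cancels one of the two $Z^-$ contributions and leaves the reduced value $q(z^\star;x) = x^\top(X - B^\top Z^- B)x = x^\top (P/Z)\,x$. This value is independent of the particular generalized inverse, because distinct choices differ only on $\ker Z$, which is orthogonal to $\operatorname{Range}(B)$ by the previous step.

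Finally, minimising $x^\top (P/Z)\, x/\|x\|_2^2$ over $x\neq 0_n$ is the classical symmetric Rayleigh-quotient problem: $P/Z$ is symmetric (since $X=X^\top$ and $B^\top Z^- B$ is symmetric when $Z^-$ is the Moore–Penrose inverse), so the infimum equals $\lambda_{\min}(P/Z)$ and is attained at any unit eigenvector $v$ of $P/Z$ for that eigenvalue. Combined with the inner minimiser, the overall optimum is $[\,z^\top,x^\top]^\top = [-v^\top B^\top(Z^-)^\top,\, v^\top]^\top = [-v^\top B^\top Z^-,\, v^\top]^\top$, as claimed. The only subtlety I anticipate is the singular case, handled by the range-inclusion argument and the independence of $q(z^\star;x)$ from the choice of generalized inverse; everything else is routine two-stage minimisation.
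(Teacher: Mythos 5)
Your proof is correct and follows essentially the same route as the paper: completing the square in $z$ (equivalently, the generalized Aitken block-diagonalization via $w=z+Z^-Bx$) to reduce to the generalized Schur complement, followed by a Rayleigh-quotient argument in $x$. The only difference is that you prove the key fact $ZZ^-B=B$ (via the range inclusion $\operatorname{Range}(B)\subseteq\operatorname{Range}(Z)$) from scratch, whereas the paper cites it from the literature; your argument for it is valid.
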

\begin{proof}
Let us replace $z$ by $w:=z+Z^-B\,x$, 
which 
amounts to 
the coordinate transformation  
\begin{align}
\begin{bmatrix}
z \\ x
\end{bmatrix}
=
\begin{bmatrix}
I_p & -Z^-B \\ 0_{n\times p} & I_n
\end{bmatrix}
\begin{bmatrix}
w \\ x
\end{bmatrix}=:T_{\ind {yq}} \begin{bmatrix}
w \\ x
\end{bmatrix}.
\end{align}
We arrive at the so-called 
generalized Aitken 
block-diagonalization of $P$ in   
\begin{align}
\iftoggle{IsTwocolumn}{&}{}
\begin{bmatrix}
z \\ x
\end{bmatrix}^\top
\begin{bmatrix}Z & B \\ B^\top & X \end{bmatrix}
\begin{bmatrix}
z \\ x
\end{bmatrix}
\iftoggle{IsTwocolumn}{}{&}=
\begin{bmatrix}
w \\ x
\end{bmatrix}^\top
T_{\ind {yq}}^\top 
\begin{bmatrix} Z & B \\ B^\top & X \end{bmatrix}
T_{\ind {yq}}
\begin{bmatrix}
w \\ x
\end{bmatrix}
\nonumber\\
\iftoggle{IsTwocolumn}{&\hspace{1.3cm}}{&}
=
\begin{bmatrix}
w \\ x
\end{bmatrix}^\top\!
\begin{bmatrix}
Z & -ZZ^-B+B \\
\;-B^\top Z^-Z+B^\top \; & \;\;X-B^\top Z^- B\;
\end{bmatrix}
\begin{bmatrix}
w \\ x
\end{bmatrix}
\nonumber\\
\iftoggle{IsTwocolumn}{&\hspace{1.3cm}}{&}
=
 w^\top Z w + x^\top (P/Z) \,x,   
\label{eq:AitkenBlockDiag}
\end{align}
with the last step being based on    
$-ZZ^-B+B=0_{p\times n}$, which holds   
if $P$ is positive semidefinite \cite[Thm.\ 1.19]{Horn.2005}. The submatrix $Z$ of $P$ 
 is also positive semidefinite due to Cauchy's Interlacing Theorem, and thus 
(\ref{eq:AitkenBlockDiag}) is lower bounded by
\begin{align*}
\nonumber\\[-1.7em]
 w^\top Z \,w + x^\top (P/Z)\, x 
\;\geq \;
x^\top (P/Z) \,x 
\;\geq \;
\lambda_{\min}(P/Z) \|x\|_2^2. 
\end{align*}
The bound is attained for $w=0_p$ and 
$x=v$. 
\end{proof}
The following theorem is not only useful for the ODE-based approach from Sec.~\ref{sec:Partial_Lyap_Eq}. It is as well applicable to the numerical-integration-based results from Sec.~\ref{sec:numInt}. 

\begin{theorem}\label{thm:k1}
If $P_y$ in $V_y(y)=y^\top P_y y$ is positive semidefinite, then the largest possible coefficient in (\ref{eq:k1_Vy}) is 
\begin{align} \label{eq:kappa1_V_y}
k_1=
\lambda_{\min}(P_{\ind{y}}/P_{\ind{y,zz}}),
\end{align}
where $P_{\ind{y,zz}}$ denotes the left upper $nN\times nN$ submatrix of $P_y$ and $(\cdot / \cdot)$ is the generalized
 Schur complement (\ref{eq:SchurComplement}).
\end{theorem}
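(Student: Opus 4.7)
The plan is to reduce the theorem to a direct application of Lemma \ref{lem:part_pos_def_bound_Schur}. First, I would note that the bound $k_1 \|\hat x\|_2^2 \leq V_y(y)$ must hold for all $y = [z^\top, \hat x^\top]^\top \in \mathbb R^{nN}\times \mathbb R^n$. Splitting on whether $\hat x = 0_n$ or not, the case $\hat x = 0_n$ reduces to $0 \leq V_y([z^\top, 0_n^\top]^\top)$, which is automatic because $P_y \succeq 0_{n(N+1)\times n(N+1)}$ (so this case places no constraint on $k_1$). For $\hat x \neq 0_n$, the bound is equivalent to
\begin{align*}
k_1 \leq \inf_{\substack{z\in \mathbb R^{nN}\\ \hat x\in\mathbb R^n\setminus\{0_n\}}} \frac{V_y(y)}{\|\hat x\|_2^2}.
\end{align*}

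The largest admissible $k_1$ is therefore exactly this infimum, provided the infimum is attained. To compute it, I would identify the block decomposition of $P_y$ shown in (\ref{eq:V_y_matrix}): the upper-left $nN \times nN$ block $Z := P_{y,zz}$, the upper-right $nN \times n$ block $B := P_{y,xz}^\top$, and the lower-right $n \times n$ block $X := P_{y,xx}$. Since $P_y$ is positive semidefinite by hypothesis, Lemma \ref{lem:part_pos_def_bound_Schur} applies verbatim with $p = nN$, yielding
\begin{align*}
\min_{\substack{z\in \mathbb R^{nN}\\ \hat x\in\mathbb R^n\setminus\{0_n\}}} \frac{y^\top P_y\, y}{\|\hat x\|_2^2} = \lambda_{\min}(P_y/P_{y,zz}),
\end{align*}
where the Schur complement is formed according to (\ref{eq:SchurComplement}). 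This is exactly (\ref{eq:kappa1_V_y}).

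The only step requiring care is confirming that the infimum is a true minimum so that the value $\lambda_{\min}(P_y/P_{y,zz})$ is actually achieved by some pair $(z,\hat x)$ with $\hat x \neq 0_n$ — this is guaranteed by the constructive witness given in the lemma, namely $\hat x = v$ (an eigenvector of $P_y/P_{y,zz}$ associated with its minimum eigenvalue) and $z = -P_{y,zz}^- P_{y,xz}^\top v$. I do not foresee a substantive obstacle; the proof is essentially one sentence invoking the lemma, with a brief verification that the degenerate case $\hat x = 0_n$ does not tighten the bound.
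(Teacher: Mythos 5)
Your proposal is correct and follows exactly the paper's route: the paper's proof is a one-line application of Lemma \ref{lem:part_pos_def_bound_Schur} to $P=P_y$ with $Z=P_{y,zz}$ as identified in (\ref{eq:V_y_matrix}). Your additional remarks on the degenerate case $\hat x=0_n$ and on attainment of the minimum are correct and merely make explicit what the paper leaves implicit.
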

\begin{proof}
Lemma \ref{lem:part_pos_def_bound_Schur} 
applied 
to 
$P=P_y$ with $Z=P_{y,zz}$ as in (\ref{eq:V_y_matrix}).  
\end{proof}
Testing whether $P_y$ is positive semidefinite is not even required if $V_y$ originates from the ODE-based 
approach in Sec.~\ref{sec:Partial_Lyap_Eq}.
If $A_y$ is 
Hurwitz, 
the only thing to do is to evaluate  (\ref{eq:kappa1_V_y}).

\begin{corollary} \label{cor:lowerBound}
Let $V_y(y)=y^\top P_y y$, where $P_y$ is a solution of 
(\ref{eq:LyapEq_y}) for a given positive semidefinite matrix $Q_y$. 
If $A_y$ is Hurwitz, 
then (\ref{eq:k1_Vy}) holds with $k_1$ from  (\ref{eq:kappa1_V_y}).  
\end{corollary}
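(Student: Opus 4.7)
The plan is to recognize that this corollary is just a composition of two results that have already been established in the excerpt: Proposition \ref{prop:properties} (existence, uniqueness, and positive semidefiniteness of $P_y$) and Theorem \ref{thm:k1} (the Schur-complement formula for the largest admissible $k_1$). So my proof would consist of verifying that the hypotheses of Proposition \ref{prop:properties} are met, reading off its conclusion, and then feeding that conclusion into Theorem \ref{thm:k1}.

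First, I would check the hypotheses of Proposition \ref{prop:properties}. The matrix $Q_y$ is assumed positive semidefinite and $A_y$ is assumed Hurwitz. Proposition \ref{prop:properties} then guarantees that the Lyapunov equation (\ref{eq:LyapEq_y}) has a unique solution $P_y$, that $P_y = P_y^\top$, and that $P_y$ is positive semidefinite. In particular, the $V_y$ built from this $P_y$ via $V_y(y) = y^\top P_y y$ is well defined and nonnegative.

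Second, I would invoke Theorem \ref{thm:k1}, whose only standing hypothesis on $P_y$ is positive semidefiniteness — exactly what we just inherited from Proposition \ref{prop:properties}. Theorem \ref{thm:k1} then immediately yields that (\ref{eq:k1_Vy}) holds with the largest possible $k_1$ being the one in (\ref{eq:kappa1_V_y}). The proof is essentially a one-liner citing these two results.

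There is no real obstacle here; the only subtlety worth flagging is that the generalized Schur complement $P_y / P_{y,zz}$ in (\ref{eq:kappa1_V_y}) involves a generalized inverse of $P_{y,zz}$, which could, a priori, be singular. However, this is harmless: Lemma \ref{lem:part_pos_def_bound_Schur} is formulated for an arbitrary generalized inverse (e.g., the Moore--Penrose inverse), and Cauchy's interlacing theorem, already cited in its proof, ensures that $P_{y,zz}$ inherits positive semidefiniteness from $P_y$, so $-Z Z^- B + B = 0$ holds on the relevant range and the Schur-complement expression is independent of the chosen generalized inverse. It is also worth emphasizing, as the text preceding the corollary already hints, that no separate positive-semidefiniteness test on $P_y$ is needed — the ODE-based construction automatically supplies it via Proposition \ref{prop:properties}, so in practice one only has to verify that $A_y$ is Hurwitz and then evaluate (\ref{eq:kappa1_V_y}).
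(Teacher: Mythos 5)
Your proposal is correct and matches the paper's proof exactly: the paper likewise cites Proposition~\ref{prop:properties} to obtain positive semidefiniteness of $P_y$ and then applies Theorem~\ref{thm:k1}. The extra remark about the generalized inverse of $P_{y,zz}$ is accurate but not needed beyond what Lemma~\ref{lem:part_pos_def_bound_Schur} already covers.
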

\begin{proof} 
By 
Prop.~\ref{prop:properties}, $P_{\ind y}$ is positive semidefinite. 
Consequently, Thm.~\ref{thm:k1} applies. 
\end{proof}
See Appendix \ref{sec:combined_coordinates} for an evaluation in other  coordinates.

\section{Example and Comparison}\label{sec:Example}
We compare the thus obtained bound with 
known quadratic lower bounds  (\ref{eq:pos_def_LK})
on the LK functional  
(\ref{eq:complete_LK_with_kernels}). 
These known formulae for the coefficient in 
$k_1\|x(t)\|_2^2 \leq V(x_t)$ 
are 
\begin{align*}
k_1&=\max \alpha 
\tag*{\cite[Lem.\ 2.10]{Kharitonov.2013}} 
\\*
&\hspace{1.5em} \mathrm{ s.t. }\;\; \left[\begin{smallmatrix} Q_0 & 0_{n\times n}\\ 0_{n\times n} & Q_1 \end{smallmatrix}\right]
+ \alpha  
\left[\begin{smallmatrix} A_0^\top+ A_0 &A_1\\ A_1^\top & 0_{n\times n} \end{smallmatrix}	\right]
\succeq 0_{2n\times 2n},
\qquad
\\*
 k_1&=\min\left\{ \tfrac{\lambda_{\min}(Q_0)}{2\|A_0\|_2+\|A_1\|_2}, \tfrac{\lambda_{\min}(Q_1)}{\|A_1\|_2}
\right\},
 \tag*{\cite[Prop.\ 1]{MelchorAguilar.2007}}
\end{align*} 
provided the equilibrium is exponentially\footnote{\label{fn:UAS_UES}
equivalently, asymptotically since 
(\ref{eq:lin_RFDE}) is a linear autonomous RFDE. In linear RFDEs with bounded delays, uniform asymptotic stability and uniform  exponential stability are equivalent \cite[Thm.~5.3 in Ch.~6]{Hale.1993}. Moreover, in autonomous or periodic RFDEs (in contrast to neutral FDEs), asymptotic stability is always uniform \cite[Lemma 1.1 in Ch.~5]{Hale.1993}. } 
stable and $Q_{0}, Q_1\succ 0_{n\times n}, Q_2\succeq 0_{n\times n}$.
Two issues should be noted. 

Firstly, since the LK functional satisfies by construction the monotonicity condition of the common LK theorem, cf.\ (\ref{eq:RFDE_k3}),   the existence of a quadratic lower bound with $k_1>0$ (or actually even $k_1\geq 0$, cf.\ Thm.~\ref{thm:nec_and_sufficient}) 
is also the crucial missing step that proves asymptotic stability 
via the LK functional. 
However, the above formulae 
are only valid if exponential (equivalently, asymptotic)   
stability has been proven beforehand. Hence, the stability analysis must already be done  by other means in a separate step. 
For instance, this can be achieved via frequency-domain based methods, e.g., via the eigenvalues of $A_y$.
Having thus $A_y$ already at hand, the approach in the present paper becomes even more convenient. 
\begin{remark}\label{rem:Psi_criterion}
As a consequence of the above issue, how at all to conclude stability from the LK functional (\ref{eq:complete_LK_with_kernels}) or the involved delay Lyapunov matrix function $\Psi$ has long been an open question. It has only recently been resolved by  Egorov et al.\ \cite{Egorov.2017} and Gomez et al.\ \cite{Gomez.2019b}. The criterion is equivalent to requiring that, for some $\tilde Q\succ 0_{n\times n}$,  
\begin{align}
\tilde P_{\mathrm{zz}}(\xi,\eta):=
\Psi(\xi-\eta;\tilde Q)
\end{align}
is a positive definite kernel, in the sense that the block matrix $(\tilde P_{\mathrm{zz}}(  \theta_j,  \theta_k))_{jk}$ 
must be positive semidefinite, 
with an a priori bound on the discretization resolution of the 
grid $(\xi,\eta) \in \{  \theta_j,  \theta_k \}_{jk}\subset[-\delay,0]\times [-\delay,0]$.  
Despite of a completely different framework, 
the result  can be brought in relation to  
  Sec.~\ref{sec:numInt} 
by rewriting the matrix in (\ref{eq:P_quad_Clenshaw_Curtis_}) as 
\begin{align}\label{eq:P_quad_Clenshaw_Curtis_structure}
P_y^{\mathrm{quad}}=S^\top (\tilde P_{\mathrm{zz}}(\tilde \theta_j,\tilde \theta_k))_{jk} S+D, 
\end{align}
with  $S=\mathrm{diag}((w_k)_k)\otimes A_1 +\left[\begin{smallmatrix} 0_{n\times nN} & I_n \\ 0_{nN\times nN} & 0_{nN\times n}\end{smallmatrix}\right]$ and $D=\mathrm{blkdiag}((w_k(Q_1+(\delay+\tilde\theta_k)Q_2))_k)$, cf.~(\ref{eq:P_num_int_complete_LK}) with (\ref{eq:P_in_completeTypeLK}).  
The first term in (\ref{eq:P_quad_Clenshaw_Curtis_structure}) clearly preserves the positive semidefiniteness of $(\tilde P_{\mathrm{zz}}(\tilde \theta_j,\tilde \theta_k))_{jk}$, and $D$ is only an added block diagonal matrix that inherits positive semidefiniteness from $Q_1,Q_2$.
\end{remark}
Secondly, of course  the LK functional changes as the delay changes. 
Note that, however, the above stated  formulae for $k_1$ do not depend on the value of the delay.

\begin{figure}
		\centering
		\captionsetup[subfigure]{justification=centering}
\subfloat[
convergence of $k_1$ (for $\delay=2$)
		\label{fig:quadratic_lower_bound_over_N_a}
	]
	{
\includegraphics{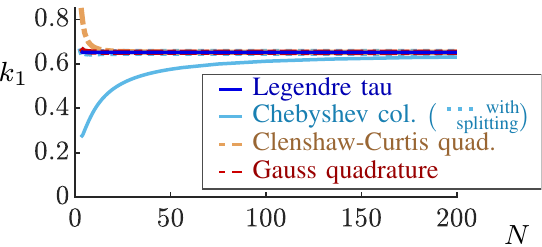}
}
\subfloat[error 
in Fig.~\ref{fig:quadratic_lower_bound_over_N_a} \label{fig:quadratic_lower_bound_over_N_b}]
	{
\includegraphics{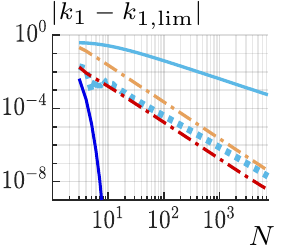}
}
\\
		\subfloat[
delay dependence of 
minimum eigenvalues in the decomposition~(\ref{eq:AitkenBlockDiag}), $P=P_y$ , $Z=P_{y,zz}$
		\label{fig:quadratic_lower_bound_over_delay_b}
	]
	{	
\includegraphics{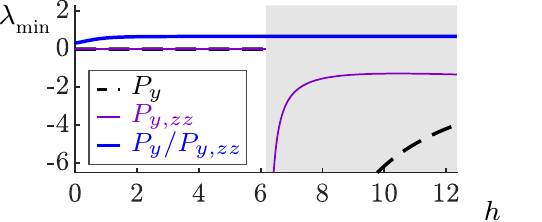}
}
		\subfloat[
error 
of the \newline critical delay  value
		\label{fig:ex2_hc_error_over_N}
	]
	{
\includegraphics{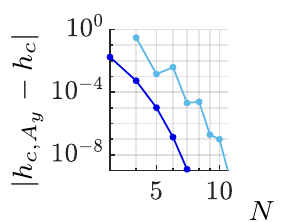}
}
\\
\subfloat[delay dependence of the bound
		\label{fig:quadratic_lower_bound_over_delay}
	]
	{

\begin{tikzpicture}[]
	\path[clip] (0,0) rectangle (5.5,3);
	
\node[inner sep=0pt,anchor=south west] (fig_pdf) at (0,0)
		    {\includegraphics[]{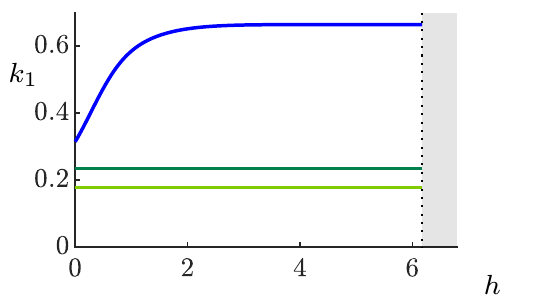}}; 
	
\node [anchor=north,text=blue,scale=0.95,] (node_Thm)
at ($(fig_pdf.north)+(0,-0.3)$)
	{\footnotesize 
	Thm.\ \ref{thm:k1}
	};

\node [anchor=north,scale=0.95, align=center,
text={rgb,1:red,0;green,0.3;blue,0.1}
] (node_1) 
at ($(fig_pdf.north)+(0,-1.3)$)
	{\footnotesize  \cite[Lem.\ 2.10]{Kharitonov.2013}  
	};
	
\node [anchor=north,scale=0.95, align=center,
text={rgb,1:red,0.2;green,0.5;blue,0}
] (node_2) 
at ($(fig_pdf.north)+(0,-1.95)$)
	{\footnotesize   \cite[Prop.\ 1]{MelchorAguilar.2007} 
	};
\end{tikzpicture}

}
		\subfloat[
{ $V(\phi)$ error}	\label{fig:Vconv}
	]
	{
\includegraphics{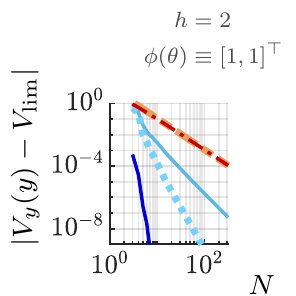}
}
	\caption{ 
	Example \ref{exmp:k1}. In particular, Fig.~\ref{fig:quadratic_lower_bound_over_delay} shows the improved quadratic lower bound. (Figures a,b,d,f share the same legend). 
	}
	\label{fig:k1}
\end{figure}

\begin{example} \label{exmp:k1}
For all delay values $\delay$ that are smaller than
$\delay_c:={\arccos(-0.9)}/{\sqrt{1-0.9^2}}
\approx 6.17$, the equilibrium of
\begin{align}\label{eq:example_RFDE}
\dot x(t)=\begin{bmatrix} -2& 0\\0& -0.9\end{bmatrix} x(t) + \begin{bmatrix} -1& 0\\-1& -1\end{bmatrix} x(t-\delay)
\end{align}
is asymptotically stable \cite[Example 3.2]{Scholl.2023}. 
Let $Q_0=Q_1=I_2, Q_2=0_{2\times 2}$.  
For any given 
$h>0$ (affecting $A_y$), 
the Lyapunov equation solution 
$P_y$ 
can be computed 
as in Example \ref{exmp:P_structure}. We get $k_1$ in (\ref{eq:kappa1_V_y}) via the 
additional lines  \vspace{0.2em}
{\small\begin{verbatim} 
 p=mat2cell(P,n*[N,1],n*[N,1]);  
 k1=min(eig(p{2,2}-p{2,1}*(p{1,1}\p{1,2})))
\end{verbatim}}\noindent 
in Matlab (as $P_{y,zz}$ is nonsingular). 
We also consider the numerical-integration-based $P_y^{\mathrm{quad}}$ from (\ref{eq:P_num_int_complete_LK}) and (\ref{eq:P_num_int_G_complete_LK}). 
Fig.~\ref{fig:quadratic_lower_bound_over_N_a} shows the convergence of $k_1$ for all approaches.
\end{example}

Fig.~\ref{fig:k1} also gives some further insights. Fig.~\ref{fig:quadratic_lower_bound_over_N_b} certificates a surprisingly fast convergence for the Legendre tau method. This is  also confirmed by other examples (if $Q_2\neq 0_{n\times n}$, the  Lyapunov equation right-hand side from Sec.~\ref{sec:splitting} below should be used).
For the Chebyshev collocation method, we are going to introduce a splitting approach 
in Sec.~\ref{sec:splitting},  
which gives  
an improved rate of convergence, cf.~Fig.~\ref{fig:quadratic_lower_bound_over_N_b}.

Fig.~\ref{fig:quadratic_lower_bound_over_delay_b} (Legendre tau, $N=1000$)  shows the interplay of the matrices in 
(\ref{eq:AitkenBlockDiag}), once the asymptotic stability is lost for delays larger than  $\delay\approx 6.17$. 
We are going to prove in Thm.~\ref{thm:nec_and_sufficient} that  positive semidefiniteness of $P_y$ is indeed necessary and sufficient for $A_y$ being Hurwitz. 

Let us consider the boundary $\delay_{c,A_y}$ between the white and gray delay region in Fig.~\ref{fig:quadratic_lower_bound_over_delay_b}. 
It marks the smallest delay  
at which the matrix $A_y$ is no longer Hurwitz (equivalently, where no longer a positive semidefinite solution $P_y$ exists), which can, e.g., be fine estimated by a bisection method.  
 Already with a rough discretization resolution $N$,  
this boundary
 reflects the analytically known critical delay $\delay_c$ of (\ref{eq:example_RFDE}) quite  precisely, and its rapid convergence is shown in Fig.~\ref{fig:ex2_hc_error_over_N} for both the Legendre tau and the Chebyshev collocation method. 

Most importantly, Fig.~\ref{fig:quadratic_lower_bound_over_delay} reveals that  the largest possible quadratic lower bound depends on the value of the delay. Thm.~\ref{thm:k1} clearly  gives a  less conservative value of $k_1$ than the   
 known formulae (green lines). 
For non-small delays, the bound is even improved by a multiple.

Fig.~\ref{fig:Vconv} shows the rapid convergence of the numerical result for  $V(\phi)$ with an exemplary argument $\phi$. 

\begin{table}
\footnotesize
\centering
\begin{tabular}{|r  c c c|}
\hline
$k_1$  &\cite[Prop.\ 1]{MelchorAguilar.2007} & \cite[Lem.\ 2.10]{Kharitonov.2013}   & Thm.\ \ref{thm:k1}
\\
\hline
\cite{MelchorAguilar.2007}, Example 5.1 & 0.7500 & 0.8229 & 1.4596
\\
\hline
\cite{MelchorAguilar.2007}, Example 5.2 & 0.6000 & 2.3238 & 3.8660
\\
\hline
\cite{MelchorAguilar.2007}, Example 5.4 & 0.1464 & 0.1978 & 0.5229
\\
\hline
\end{tabular}
\caption{Improvements of the quadratic lower bound for three physical examples from the literature. 
}
\label{tab:k1_examples}
\end{table}

A remark on non-complete functionals is in order. 
\begin{remark}\label{eq:cubic_lower_bound}
If $Q_1=Q_2=0_{n\times n}$, only a local cubic lower bound  on $V$   
is known to exist, and non-existence\footnotemark[\getrefnumber{fn:zerok1}] of a positive quadratic 
one is proven for \cite[Example~2.1]{Kharitonov.2013}. 
Indeed, for this example, $k_1$ from (\ref{eq:kappa1_V_y}) 
converges to zero as $N$ increases.
\end{remark}

 Finally, the reduced conservativity of $k_1$, already indicated by Fig.~\ref{fig:quadratic_lower_bound_over_delay}, is confirmed by other examples in Table \ref{tab:k1_examples}.

\section{Interpretation in Terms of Partial Stability}\label{sec:role_of_part_stab}
Note that $V_y$ obtained in Sec.~\ref{sec:Partial_Lyap_Eq} does not necessarily qualify as a Lyapunov function for the  ODE  (\ref{eq:y_ODE}) since, if $Q_2=0_{n\times n}$,  the matrix $Q_y$ in the Lyapunov function derivative (\ref{eq:Q_quadForm}) is not positive definite. Even if $Q_2\succ 0_{n\times n}$, 
the involved   $Q_y$ is theoretically positive definite for any finite $N$, but 
the smallest eigenvalue of $Q_y$ converges to zero as $N$ increases (the denser the grid, the smaller the integration weights $w_k$). 
Moreover, the lower bound (\ref{eq:k1_Vy}) on $V_y$ does not fit with the classical Lyapunov theory. 
The present section explains why $V_y$ is still meaningful for a  stability analysis of the approximating ODE. Within the presented approach, the lower bound  (\ref{eq:k1_Vy}) is exactly what is required. 
First, we 
clarify what we are actually looking for 
when we target stability in a RFDE.
\subsection{Stability in RFDEs}
Having in mind the classical Lyapunov theorem for ODEs, one might wonder why the lower bound in (\ref{eq:pos_def_LK}) relies on $\|x(t)\|$ and not on the norm of the RFDE state $x_t$. The latter addresses the norm in $C([-\delay,0],\mathbb R^n)$ defined by 
\begin{align}\label{eq:state_norm}
\|x_t\|_C 
 =  
\max_{\theta\in[-\delay,0]}\|x_t(\theta)\|.
\end{align}
For Lyapunov functions in ODEs, both the positive-definiteness bound (\;$\kappa_1(\|x\|)\leq V(x)$, $\kappa_1\in \mathcal K$\;) and the monotonicity requirement (\;$D_f^+V(x)\leq -\kappa_3(\|x\|)$, $\kappa_3\in \mathcal K$\;) refer to the norm of the ODE state. Thus,  one would expect 
(\ref{eq:state_norm}) at these places when transferring Lyapunov's results from $x(t)\in \mathbb R^n$ to $x_t=\phi \in C([-\delay,0],\mathbb R^n)$. 
However, this is not the case in 
the following common LK theorem -- neither in the left inequality of (\ref{eq:RFDE_k12}) nor in (\ref{eq:RFDE_k3}). Instead of $\|\phi\|_C=\|x_t\|_C$, only $\|\phi(0)\|=\|x_t(0)\|=\|x(t)\|$ occurs.  As usual, the theorem refers to  general autonomous RFDEs 
\begin{align}\label{eq:RFDE}
\dot x(t) &= f(x_t),
\end{align}
with $f(0_{n_{[-\delay,0]}})=0_n$ and  $f$ locally Lipschitz.

\begin{theorem}[LK Theorem {\cite[Thm.\ 5-2.1]{Hale.1993}}] \label{thm:LK}
If there is a continuous $V\colon C([-\delay,0],\mathbb R^n)\to \mathbb R_{\geq 0}$ such that, for all $\phi$ in a domain $
G \subseteq C([-\delay,0],\mathbb R^n)$,  $0_{n_{[-\delay,0]}}\in G$, it holds 
\begin{align}\label{eq:RFDE_k12}
\kappa_1(\|\phi(0)\|
)
\leq V(\phi)
&\leq \kappa_2(\|\phi\|_C)
\\
\label{eq:RFDE_k3}
D_{(\ref{eq:RFDE})}^+
V(\phi)&\leq -\kappa_3(\|\phi(0)\|
) ,
\end{align}
with some class-K functions $\kappa_{1,2,3}\in \mathcal K$, 
then  the zero equilibrium of (\ref{eq:RFDE}) is asymptotically stable. 
\end{theorem}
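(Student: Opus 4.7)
The plan is to establish stability and attractivity separately, exploiting the finite-delay structure to translate bounds on $\|x(t)\|$ into bounds on $\|x_t\|_C$. The usual Krasovskii-style argument is delicate here because the lower bound on $V$ in (\ref{eq:RFDE_k12}) and the decay rate in (\ref{eq:RFDE_k3}) both involve only the pointwise quantity $\|\phi(0)\|$ rather than the full state norm $\|\phi\|_C$.

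For stability, I would fix $\epsilon>0$ and choose $\delta\in(0,\epsilon]$ small enough that $\kappa_2(\delta)<\kappa_1(\epsilon)$, which is possible since $\kappa_1,\kappa_2\in\mathcal K$. For initial data $x_0\in G$ with $\|x_0\|_C<\delta$, the monotonicity condition (\ref{eq:RFDE_k3}) and nonnegativity of $V$ give $V(x_t)\leq V(x_0)\leq \kappa_2(\|x_0\|_C)<\kappa_1(\epsilon)$ for all $t\geq 0$. Combined with the left inequality in (\ref{eq:RFDE_k12}), this yields $\|x(t)\|<\epsilon$ for all $t\geq 0$. To upgrade this to $\|x_t\|_C<\epsilon$, I split $x_t(\theta)=x(t+\theta)$: for $t+\theta\geq 0$ the pointwise bound applies directly, while for $t+\theta<0$ we fall back on $\|x_0(t+\theta)\|\leq\|x_0\|_C<\delta\leq\epsilon$.

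For attractivity, integrating (\ref{eq:RFDE_k3}) together with $V\geq 0$ yields $\int_0^\infty \kappa_3(\|x(s)\|)\,\mathrm ds<\infty$. The main obstacle is that mere integrability of a nonnegative function does not force it to decay to zero. To close this gap I would invoke a Barbalat-type argument: stability (just shown) implies $\|x_t\|_C$ is uniformly bounded on $[0,\infty)$, so by local Lipschitz continuity of $f$ the derivative $\dot x(t)=f(x_t)$ is bounded, and hence $x(\cdot)$ is uniformly continuous. Composition with the continuous $\kappa_3$ preserves uniform continuity, so Barbalat's lemma forces $\kappa_3(\|x(t)\|)\to 0$, and strict monotonicity of $\kappa_3\in\mathcal K$ yields $\|x(t)\|\to 0$. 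Finally, for $t\geq \delay$ the finite-delay identity $\|x_t\|_C=\max_{s\in[t-\delay,t]}\|x(s)\|$ upgrades pointwise decay to state decay, giving $\|x_t\|_C\to 0$.

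The hardest step, in my view, is the Barbalat application: one has to verify that the a priori bound $\|x_t\|_C<\epsilon$ obtained from the stability argument is genuinely uniform in $t$, so that the boundedness of $\dot x$ does not depend on $t$ and uniform continuity really follows. Everything else is bookkeeping around the difference between the pointwise norm $\|\phi(0)\|$ appearing in the hypotheses and the state norm $\|\phi\|_C$ demanded by the conclusion, which is reconciled by the fact that on each window of length $\delay$ the supremum over past values is eventually dominated by the current-time behaviour.
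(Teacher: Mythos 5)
Your argument is correct and is essentially the classical proof: the paper itself does not reprove this cited result of Hale, but the inequality chain it records after the theorem statement, $\kappa_1(\|x(t)\|)\leq V(x_t)\leq V(x_0)\leq \kappa_2(\|x_0\|_C)$, is exactly your stability step, and your Barbalat-type attractivity argument (integrability of $\kappa_3(\|x(s)\|)$ plus uniform continuity of $x(\cdot)$ forcing $\|x(t)\|\to 0$, then $\|x_t\|_C=\max_{s\in[t-\delay,t]}\|x(s)\|\to 0$) is the standard mechanism behind Hale's Thm.\ 5-2.1. The one point to make explicit is the step ``$\|x_t\|_C$ bounded $\Rightarrow$ $\dot x(t)=f(x_t)$ bounded'': in the infinite-dimensional space $C([-\delay,0],\mathbb R^n)$ bounded sets are not compact, so this needs $f$ to map bounded sets into bounded sets -- which holds under the usual reading of ``locally Lipschitz'' as Lipschitz on bounded sets (and is a standing assumption in Hale's framework), but does not follow from pointwise local Lipschitz continuity alone.
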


The key to the above question is that there are two, obviously equivalent, definitions of asymptotic stability in the RFDE.   
Starting from the same norm ball for the initial function $x_0$, they differ in the condition on the implication side: 
Either 
 the state 
$x_t$ 
with the norm (\ref{eq:state_norm}) is taken into account (Def.~\ref{def:RFDE_Stab}a), or the pointwise solution  
$x(t)\in \mathbb R^n$  is considered (Def.~\ref{def:RFDE_Stab}b).   
\begin{definition}[Lyapunov stability in RFDEs]
 \label{def:RFDE_Stab}
The zero equilibrium of (\ref{eq:RFDE}) is asymptotically stable if 
\\
a) $\begin{aligned} 
\begin{array}{@{}l@{\;\;\,}l@{\;\;\,}l@{\;\,}l}
\forall \varepsilon>0, 
\exists \delta
>0:
& 
\|x_0\|_C < \delta
&\Longrightarrow 
&\forall t\geq 0: \|x_t\|_C < \varepsilon
\\
\text{and} \quad \exists r>0:  &
\|x_0\|_C< r 
&\Longrightarrow  
&
\|x_t\|_C \to 0 \text{ as } t\to \infty
\end{array}
\end{aligned}$
\\
or, equivalently, \\
b) $\begin{aligned} \begin{array}{@{}l@{\;\;\,}l@{\;\;\,}l@{\;\,}l}
\forall \varepsilon>0, 
\exists \delta
>0: 
 &\|x_0\|_C < \delta
&\Longrightarrow 
&\forall t\geq 0: \|x(t)\|
 < \varepsilon
\\
\text{and} \quad \exists r>0:  
&\|x_0\|_C< r 
&\Longrightarrow 
&\|x(t)\|
 \to 0 \text{ as } t\to \infty.
\end{array}
\end{aligned}$
\end{definition}

In terms of the whole state $x_t$, the pointwise consideration in Def.\ \ref{def:RFDE_Stab}b refers only to the boundary value $x(t)=x_t(0)$ in Fig.~\ref{fig:sol_state}. 
The classical LK theorem, Thm.~\ref{thm:LK}, addresses  Def.~\ref{def:RFDE_Stab}b  since,  $\forall t\geq 0$,
\begin{align}
\kappa_1(\|x(t)\|
)\stackrel{(\ref{eq:RFDE_k12})}\leq V(x_t) \stackrel{(\ref{eq:RFDE_k3})}\leq V(x_0) \stackrel{(\ref{eq:RFDE_k12})}\leq \kappa_2(\|x_0\|_C)
\end{align}
gives a pointwise estimation $\|x(t)\|
\leq \kappa_1^{-1}(\kappa_2(\|x_0\|_C))$ to indicate  stability.
A theorem that addresses  Def.~\ref{def:RFDE_Stab}a would instead rely on $\kappa_1(\|\phi\|_C)$ in  (\ref{eq:RFDE_k12}) and 
$\kappa_3(\|\phi\|_C)$ in (\ref{eq:RFDE_k3}), as has been expected above. Such a theorem is also valid \cite[Thm.\ 30.1]{Krasovskii.1963}, but these bounds are quite restrictive and  not satisfied by common LK functionals. 
 
\subsection{Partial Stability in ODEs}
In the approximating ODE, cf.\ Fig.~\ref{fig:sol_ODE}, the state $
y(t) \in \mathbb R^{n(N+1)}$ 
represents the RFDE state $x_t\in C([-\delay,0],\mathbb R^n)$, and its last vector-valued component $y^N(t)=\hat x(t) \in \mathbb R^n$ represents the pointwise solution value $x(t)\in \mathbb R^n$.
While Def.~\ref{def:RFDE_Stab}a translates to the 
usual\footnote{The choice of the norm $\|y\|_{\infty}=\max_{k\in\{0,\ldots,N\}} \|y^k\|$ is irrelevant due to the equivalence of norms in finite-dimensional spaces.} definition of asymptotic stability in the ODE, 
Def.~\ref{def:RFDE_Stab}b amounts to the concept of partial asymptotic stability with respect to (w.r.t.) $\hat x$. 
Again, we give the definition for a general class of systems. These are ODEs where $y(t)$ is partitioned into two parts, $z(t)\in \mathbb R^p$ and $\hat x(t)\in \mathbb R^n$, with $\mathrm{dim}(y(t))=p+n$, and the latter part $\hat x(t)$ is of special interest. That is, we consider 
 autonomous~ODEs 
\begin{align} \label{eq:partial_ODE}
\frac{\mathrm d}{\mathrm d t} 
\begin{bmatrix}
z(t) \\
\hat x(t)
\end{bmatrix}
&=
\begin{bmatrix}
f^z(z(t),\hat x(t)) \\
f^x(z(t),\hat x(t))
\end{bmatrix}
\end{align}
with 
$\left[\begin{smallmatrix}
f^z(0_p,0_n) \\
f^x(0_p,0_n)
\end{smallmatrix}\right]=0_{p+n}$
and $f^{z,x}$ locally Lipschitz. 
\vspace{0.5em}
\begin{definition}[Lyapunov-Rumyantsev partial stability]\label{def:partialStab}~\newline
The zero equilibrium of (\ref{eq:partial_ODE}) is partially stable w.r.t.\ $\hat x$ if
\begin{align*}
\forall \varepsilon>0, \exists \delta>0: 
\quad 
\Big \| 
\Big[\begin{smallmatrix}
z(0) \\
\hat x(0)
\end{smallmatrix}\Big]
\Big\| <\delta \; \Longrightarrow\;  \forall t\geq 0: \|\hat x(t)\|<\varepsilon. 
\end{align*}
It is partially asymptotically stable w.r.t.\ $\hat x$ if, additionally, 
\begin{align*}
\quad\;\;
\exists r>0: \quad\;\;
\Big \| 
\Big[\begin{smallmatrix}
z(0) \\
\hat x(0)
\end{smallmatrix}\Big]
\Big\| <r \;  \Longrightarrow \; \|\hat x(t)\|\to 0 \text{ as } t\to \infty.
\\[-2.5em]
\end{align*}
\end{definition}
For an in-depth discussion of this stability concept, see \cite{Vorotnikov.1998}. As in Def.~\ref{def:RFDE_Stab}b for stability in RFDEs, the initial value deviations consider the whole state, but the implications address only the part $\hat x$ that is of special interest. 

The following partial stability  
theorem fits well with Thm.~\ref{thm:LK} (note that an upper bound $V_y(\left[\begin{smallmatrix} z \\ \hat x \end{smallmatrix} \right])\leq \kappa_2(\|\left[\begin{smallmatrix} z \\ \hat x \end{smallmatrix} \right]\|$) always exists).
\begin{theorem}[Peiffer and Rouche 1969
{\cite[Thm.\ II]{Peiffer.1969}}]. \label{thm:partialLyap}
If there is a continuous $V_y\colon\mathbb R^{p+n}\to\mathbb R_{\geq 0}$, $V_y(0_{p+n})=0$, such that, for all $\left[\begin{smallmatrix} z \\ \hat x \end{smallmatrix} \right]$ 
in a domain $
G\subseteq \mathbb R^{p+n}$, $0_{p+n}\in G$, it holds  
\begin{align}\label{eq:partial_pos_def}
\kappa_1(\|\hat x\|) \leq V_y(\left[\begin{smallmatrix} z \\ \hat x \end{smallmatrix} \right]),
\end{align}
with $\kappa_1 \in \mathcal K$, and $D_{(\ref{eq:partial_ODE})}^+V_y(\left[\begin{smallmatrix} z \\ \hat x \end{smallmatrix} \right])\leq 0$,  then the zero equilibrium of (\ref{eq:partial_ODE}) is partially stable w.r.t.\ $\hat x$. If, additionally, $\forall \left[\begin{smallmatrix} z \\ \hat x \end{smallmatrix} \right]\in G:$ 
\begin{align}\label{eq:partial_monotonicity}
 D_{(\ref{eq:partial_ODE})}^+V_y(\left[\begin{smallmatrix} z \\ \hat x \end{smallmatrix} \right]) \leq -\kappa_3(\|\hat x\|)
\end{align}
with $\kappa_3 \in \mathcal K$, and there exists $r>0$ such that $\left\|\left[\begin{smallmatrix} z(0) \\ \hat x(0) \end{smallmatrix} \right]\right\|<r$ implies that $\|f^x(z(t),\hat x(t))\|$ is bounded for all $t\geq 0$, then 
 it 
is partially asymptotically stable w.r.t.\ $\hat x$.
\end{theorem}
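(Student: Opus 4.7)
The plan is to imitate the classical Lyapunov stability proof, carefully working around the fact that the lower bound and the strict monotonicity bound only control the $\hat x$-component.

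For partial Lyapunov stability, given $\varepsilon > 0$, I would first exploit continuity of $V_y$ at the origin together with $V_y(0_{p+n}) = 0$ to pick $\delta > 0$ such that $\|[z(0)^\top,\hat x(0)^\top]^\top\| < \delta$ implies $V_y(y(0)) < \kappa_1(\varepsilon)$, shrinking $\delta$ if necessary so that the initial state lies in $G$. The nonpositivity $D_{(\ref{eq:partial_ODE})}^+V_y \leq 0$ yields monotonicity $V_y(y(t)) \leq V_y(y(0))$ as long as the trajectory stays in $G$. Combining with the partial positive-definiteness bound (\ref{eq:partial_pos_def}) and applying $\kappa_1^{-1}$ (which exists since $\kappa_1\in\mathcal{K}$ is strictly increasing and continuous with $\kappa_1(0)=0$) gives $\|\hat x(t)\| < \varepsilon$. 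A standard continuation argument rules out leaving $G$ before this bound breaks, so the estimate persists for all $t\geq 0$.

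For partial asymptotic stability, the strategy is a Barbalat-type argument. Using the already established partial stability I pick $r > 0$ small enough that the trajectory remains in $G$ and $\|f^x(z(t),\hat x(t))\| \leq M$ holds uniformly in $t\geq 0$ by assumption. Since $V_y(y(t))$ is nonnegative and nonincreasing, it converges to some $V^*\geq 0$. Suppose $\|\hat x(t)\|\not\to 0$; then there exist $\eta > 0$ and a diverging sequence $t_n\to\infty$ with $\|\hat x(t_n)\|\geq \eta$. The uniform bound on $f^x$ makes $\hat x(\cdot)$ globally Lipschitz with constant $M$, so on an interval of length $\eta/(2M)$ around each $t_n$ one has $\|\hat x(t)\|\geq \eta/2$. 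Passing to a subsequence to make these intervals pairwise disjoint and integrating (\ref{eq:partial_monotonicity}) over each, I collect a total decrease of $V_y$ of order $\sum_n \kappa_3(\eta/2)\cdot \eta/(2M) = \infty$, contradicting $V_y\geq 0$. Hence $\hat x(t)\to 0$.

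The main obstacle is precisely this second part: unlike the classical case, monotonicity and the $\kappa_3$-bound say nothing about $z(t)$, which might drift or oscillate. One therefore cannot directly conclude that $V_y(y(t))\to 0$, only that it converges; the missing link from convergence of $V_y$ to $\hat x(t)\to 0$ is supplied by the boundedness assumption on $\|f^x\|$, which ensures $\hat x(\cdot)$ is uniformly continuous and thereby makes the Barbalat-style summation legitimate. Without that hypothesis, transient spikes of $\hat x$ consistent with $V_y\downarrow V^*>0$ could not be excluded.
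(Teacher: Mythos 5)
The paper does not prove this theorem at all: it is quoted verbatim as a known result of Peiffer and Rouche \cite[Thm.~II]{Peiffer.1969}, so there is no in-paper argument to compare against. Judged on its own, your proof is the standard one for this classical statement and is essentially sound. The stability half is the usual continuity-plus-monotonicity argument, with $\kappa_1$ applied only to $\|\hat x\|$; the attractivity half is the correct Barbalat/annulus argument, and you rightly identify that the boundedness hypothesis on $\|f^x\|$ is exactly what turns ``$V_y(y(t))$ converges'' into ``$\hat x(t)\to 0$'' by making $\hat x(\cdot)$ uniformly continuous, so that each excursion $\|\hat x(t_n)\|\geq\eta$ forces a quantifiable, non-summable drop of $V_y$ over disjoint intervals of length $\eta/M$.

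The one step you wave at too quickly is ``a standard continuation argument rules out leaving $G$.'' In the full-state Lyapunov theorem that argument works because $V$ traps the \emph{entire} state in a compact sublevel set inside $G$; here $V_y$ only bounds $\|\hat x(t)\|$, so nothing in your estimates prevents $z(t)$ from exiting $G$ (or escaping to infinity in finite time), after which $D^+V_y\leq 0$ no longer applies and the monotonicity chain breaks. The classical fix — and the setting in which Peiffer–Rouche actually prove the theorem — is to take $G$ to be a cylinder $\{\|\hat x\|<H\}\times\mathbb R^{p}$ and to assume $z$-continuability of solutions; the same caveat applies to your choice of $r$ in the asymptotic part. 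For the use the paper makes of the theorem (linear dynamics, quadratic $V_y$, $G=\mathbb R^{p+n}$) this is harmless, but as written your proof does not establish the statement for an arbitrary domain $G$, and you should either restrict the shape of $G$ or add the continuability hypothesis explicitly.
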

As in the classical LK theorem for RFDEs (Thm.~\ref{thm:LK}), both the (partial) positive-definiteness condition (\ref{eq:partial_pos_def}) and the monotonicity requirement\footnote{Criteria that come without the boundedness condition below (\ref{eq:partial_monotonicity}) require a full monotonicity condition $D_{(\ref{eq:partial_ODE})}^+V_y(y)\leq -\kappa_3(\|y\|), \kappa_3\in \mathcal K,$ cf.~\cite{Lakshmikantham.1993}.}
 (\ref{eq:partial_monotonicity}) consider only the part of special interest $\hat x(t)=y^N(t) \approx x(t)=x_t(0)$. 
We call $V$ 
in Thm.\ \ref{thm:partialLyap} a \textit{partial Lyapunov function}.

To sum up, the discretization of 
Def.~\ref{def:RFDE_Stab}b  for  RFDE stability is exactly the definition of Lyapunov-Rumyantsev partial stability w.r.t.\ $\hat x$ (Def.~\ref{def:partialStab}). 
Moreover, the Lyapunov-Krasovskii 
theorem for stability in the RFDE (Thm.~\ref{thm:LK}) becomes Peiffer and Rouche's theorem for partial stability (Thm.~\ref{thm:partialLyap}).

\subsection{Equivalence of Stability and Partial Stability in the Approximating ODE}  
In general ODEs, the concept of partial stability is a weaker concept than stability. 
We can still focus without doubt on partial stability if the equivalence between Def.\ \ref{def:RFDE_Stab}a and \ref{def:RFDE_Stab}b is reflected by the ODE approximation, so that proving partial stability w.r.t.\ $\hat x$ is already sufficient for proving stability.
\begin{condition}\label{cond:partialStabImplStab}
The zero equilibrium of the ODE approximation (\ref{eq:y_ODE}) is (asymptotically)  stable if and only if it is partially (asymptotically) stable w.r.t.\ $\hat x$. 
\end{condition}
To verify this condition for the discretization schemes at hand, we consider a 
 result from the realm of total stability.
\begin{lemma}{\cite[Thm.~3.11.3]{Lakshmikantham.1969}}. 
\label{lem:part_stab_impl_stab}
If the zero equilibrium of the auxiliary system 
\begin{align} \label{eq:partialStabImpliesStab}
\dot z &= f^z(z,0_n)
\end{align}
is asymptotically stable, then, in (\ref{eq:partial_ODE}), partial (asymptotic) stability w.r.t.\ $\hat x$ of the zero equilibrium implies (asymptotic) stability of the zero equilibrium.
\end{lemma}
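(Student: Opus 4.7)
The plan is to view the $z$-equation in (\ref{eq:partial_ODE}) as a perturbation of the auxiliary system (\ref{eq:partialStabImpliesStab}), with the perturbation term $f^z(z,\hat x)-f^z(z,0_n)$ small whenever $\hat x$ is small. Since partial stability controls $\hat x(t)$ and the unperturbed $z$-dynamics are asymptotically stable by assumption, a total-stability (robustness under persistent disturbances) argument should yield the claim.

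First I would invoke a converse Lyapunov theorem (Massera/Kurzweil) for (\ref{eq:partialStabImpliesStab}): its asymptotic stability supplies a smooth positive-definite $W\colon\mathbb R^p\to\mathbb R_{\geq 0}$, locally Lipschitz in $z$, with $\nabla W(z)^\top f^z(z,0_n)\leq -\alpha(\|z\|)$ for some $\alpha\in\mathcal K$ on a neighborhood of $0_p$. Using local Lipschitz continuity of $f^z$ and $W$, the derivative of $W$ along trajectories of (\ref{eq:partial_ODE}) then obeys
\begin{equation*}
\dot W = \nabla W(z)^\top f^z(z,\hat x) \leq -\alpha(\|z\|) + C\,\|\hat x\|
\end{equation*}
on some neighborhood of the origin, for a constant $C>0$ coming from the local Lipschitz bounds.

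Given $\varepsilon>0$, I would pick a sublevel set $\{W\leq c\}\subset\{\|z\|<\varepsilon\}$ and choose $\eta>0$ so small that $C\eta<\tfrac{1}{2}\min_{W(z)=c}\alpha(\|z\|)$, so that $\dot W<0$ on $\{W=c\}$ whenever $\|\hat x\|<\eta$. The partial stability hypothesis supplies a $\delta>0$ such that $\|[z(0)^\top,\hat x(0)^\top]^\top\|<\delta$ forces $\|\hat x(t)\|<\eta$ for all $t\geq 0$; shrinking $\delta$ further so that also $W(z(0))<c$, the trajectory cannot exit $\{W<c\}$ and hence $\|z(t)\|<\varepsilon$, yielding full stability. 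For the asymptotic case, partial asymptotic stability gives $\|\hat x(t)\|\to 0$, whereupon a comparison or LaSalle-type limit argument on the same estimate for $\dot W$ forces $W(z(t))\to 0$ and thus $\|z(t)\|\to 0$.

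The main obstacle I expect is the asymptotic part: $\|\hat x(t)\|\to 0$ is not, by itself, integrable, so one cannot directly integrate $\dot W\leq -\alpha(\|z\|)+C\|\hat x(t)\|$ to obtain convergence of $z(t)$. The cleanest remedy is probably an input-to-state style argument treating $\hat x$ as an input to the auxiliary system, exploiting that asymptotic stability of the unperturbed equation upgrades to an asymptotic-gain property; alternatively, one can split $[0,\infty)$ so that beyond some $T$ the disturbance is arbitrarily small, use the already-established stability on $[0,T]$ to place $z(T)$ in a small sublevel set, and iterate this bootstrap to drive $W(z(t))$ below any prescribed threshold.
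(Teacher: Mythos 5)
Your argument is sound, and it is worth noting that the paper does not prove this lemma at all: it is imported verbatim as \cite[Thm.~3.11.3]{Lakshmikantham.1969}, a classical result from the theory of total stability (stability under persistently acting perturbations). Your proposal is therefore a self-contained reconstruction rather than an alternative to anything in the paper, and it follows the standard route for such results: a converse Lyapunov theorem for the autonomous auxiliary system $\dot z = f^z(z,0_n)$ (asymptotic stability is automatically uniform here, so Kurzweil/Massera applies and yields a smooth $W$ with $\dot W\leq-\alpha(\|z\|)$ locally), the perturbation estimate $\dot W\leq-\alpha(\|z\|)+C\|\hat x\|$ from the local Lipschitz bounds, and the sublevel-set barrier argument combined with the $\eta$ supplied by partial stability. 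You correctly identify the only delicate point, namely that $\|\hat x(t)\|\to 0$ is not integrable, so the attractivity claim cannot be obtained by integrating the differential inequality; your bootstrap (for each target level $c'<c$, wait until $C\|\hat x(t)\|<\tfrac12\min_{c'\leq W\leq c}\alpha(\|z\|)$, conclude finite-time entry into $\{W<c'\}$ and forward invariance of that sublevel set) closes this gap cleanly, and the local-ISS reading of the same estimate is an equally valid packaging. Two minor points to make explicit in a polished write-up: the attractivity radius must be taken as the minimum of the stability radius $\delta$ and the radius $r$ from partial asymptotic stability, and confinement to the compact set $\{W\leq c\}\times\{\|\hat x\|\leq\varepsilon\}$ is what guarantees both forward completeness and the validity of the constants $L$ and $C$ along the whole trajectory.
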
 
Loosely speaking, 
for 
reasonable 
approximations the latter 
seems to be a matter of course since, if $x(t)$ for $t\geq 0$ could be forced 
to remain 
zero, then, for $t\geq \delay$, the solution segment $x_t$ is zero,  
which should at least asymptotically be reflected by 
${z(t)\to 0_p}$ as $t\to\infty$. 
In 
terms of the linear ODE (\ref{eq:y_ODE}), 
Lemma~\ref{lem:part_stab_impl_stab} only refers to the  
submatrix $A_{\ind y, zz}:=({A_{\ind y}}^{jk})_{j,k\in\{0,\ldots,N-1\}}$. 
For collocation schemes like $A_{\ind y}=A_{\ind y}^C$  in Appendix \ref{sec:ChebCol}, 
 stability of this submatrix is clearly neither affected by the RFDE coefficient matrices $A_0,A_1$ (occurrence only in the last block-row), nor the delay $h$ (scalar factor), nor the dimension $n$ (Kronecker product with $I_n$). 
For tau methods, an analogous independence can be achieved by first applying a change of basis w.r.t.\ the $z$-coordinates. In appropriate coordinates, setting, e.g., $A_0=A_1=0_{n\times n}$ does not alter the submatrix eigenvalues. 
The next lemma formulates the thus motivated coordinate invariant form of Lemma \ref{lem:part_stab_impl_stab} for the linear ODE. 
 Whether it applies is, consequently, no question of $A_0,A_1,\delay$, but it is rather a question of the discretization scheme. 
\begin{corollary}\label{cor:partialStablImpl}
Consider the linear ODE (\ref{eq:y_ODE}).
If there exists a change of coordinates w.r.t.\ $z$, where   
$[z^\top, \hat x^\top]^\top = T \,[v^\top,\hat x^\top]^\top$,    
such that the left upper $nN\times nN$ submatrix 
of $T^{-1} A_y T$   
is Hurwitz, then Condition \ref{cond:partialStabImplStab} holds. 
\end{corollary}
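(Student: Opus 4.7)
The only nontrivial direction is that partial (asymptotic) stability w.r.t.\ $\hat x$ implies (asymptotic) stability; the converse is immediate from $\|\hat x\|\leq \|[z^\top,\hat x^\top]^\top\|$. My plan is to apply Lemma~\ref{lem:part_stab_impl_stab} \emph{after} passing to the $v$-coordinates, and then pull the conclusion back to the original coordinates.

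First, I would spell out the structure of the coordinate change. Since $\hat x$ on the left equals $\hat x$ on the right in $[z^\top,\hat x^\top]^\top = T\,[v^\top,\hat x^\top]^\top$, the invertible matrix $T$ is necessarily block upper triangular,
\begin{align*}
T=\begin{bmatrix} T_1 & T_2 \\ 0_{n\times nN} & I_n \end{bmatrix},
\end{align*}
with $T_1\in\mathbb R^{nN\times nN}$ invertible. Consequently, $T$ and $T^{-1}$ are bounded linear bijections in finite dimensions, so (asymptotic) stability of the zero equilibrium of $\dot y = A_y y$ is equivalent to (asymptotic) stability of the zero equilibrium of the transformed ODE $\frac{\mathrm d}{\mathrm dt}[v^\top,\hat x^\top]^\top = \tilde A\,[v^\top,\hat x^\top]^\top$, $\tilde A:=T^{-1}A_y T$. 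Crucially, because the $\hat x$-coordinate is preserved componentwise, partial (asymptotic) stability w.r.t.\ $\hat x$ is \emph{the same notion} in both coordinate systems.

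Next, I would partition $\tilde A = \bigl[\begin{smallmatrix} \tilde A_{vv} & \tilde A_{v\hat x} \\ \tilde A_{\hat x v} & \tilde A_{\hat x \hat x}\end{smallmatrix}\bigr]$ with $\tilde A_{vv}\in\mathbb R^{nN\times nN}$. The auxiliary system from Lemma~\ref{lem:part_stab_impl_stab}, obtained by setting $\hat x\equiv 0_n$ in the $v$-equation, is the linear ODE $\dot v = \tilde A_{vv} v$. By hypothesis $\tilde A_{vv}$ is Hurwitz, so this auxiliary system is (globally) asymptotically stable. Lemma~\ref{lem:part_stab_impl_stab} then yields that partial (asymptotic) stability w.r.t.\ $\hat x$ of the transformed system implies (asymptotic) stability of the transformed system. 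Transferring this back via the equivalence established in the previous paragraph gives the nontrivial direction of Condition~\ref{cond:partialStabImplStab}, and combining it with the trivial direction completes the proof.

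The single delicate point is the justification that partial stability w.r.t.\ $\hat x$ is an invariant concept under $T$. This is precisely where the block upper triangular form of $T$ (i.e., the restriction ``change of coordinates w.r.t.\ $z$'') is essential: a generic linear bijection on $\mathbb R^{n(N+1)}$ would mix $z$ and $\hat x$ and thereby change the meaning of ``$\hat x$'', so Lemma~\ref{lem:part_stab_impl_stab} could not be transported back. Beyond this structural observation, the remainder of the argument is a direct application of the cited lemma to the linear ODE $\dot v = \tilde A_{vv} v$.
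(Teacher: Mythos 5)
Your proposal is correct and follows essentially the same route as the paper: the paper's proof is the one-line application of Lemma~\ref{lem:part_stab_impl_stab} to the transformed system $\frac{\mathrm d}{\mathrm dt}[v^\top,\hat x^\top]^\top = T^{-1}A_yT\,[v^\top,\hat x^\top]^\top$, whose auxiliary system is $\dot v=\tilde A_{vv}v$ with $\tilde A_{vv}$ Hurwitz. You merely make explicit the details the paper leaves implicit (the block upper triangular form of $T$, the invariance of both stability notions under a coordinate change that fixes $\hat x$, and the trivial converse direction), all of which are accurate.
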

\begin{proof}
Lemma \ref{lem:part_stab_impl_stab} with  (\ref{eq:partial_ODE}) given by $\frac{\mathrm d}{\mathrm dt}\left[\begin{smallmatrix} v\\{\hat x}\end{smallmatrix}\right]= T^{-1} A_y T \left [\begin{smallmatrix} v\\  {\hat x}\end{smallmatrix}\right]  $.  
\end{proof}
For $A_{\ind y}=A_{\ind y}^C$ from the Chebyshev collocation method (\ref{sec:ChebCol}), the submatrix $A_{\ind y, zz}$ 
can indeed proven to be  Hurwitz for any discretization resolution $N$ \cite[Prop.~2]{Diekmann.2019},\cite{Solomonoff.1989}. Thus, by 
Corollary \ref{cor:partialStablImpl} 
 (with $v=z$), Condition \ref{cond:partialStabImplStab} holds. For other discretization schemes we refer to \cite[Sec.~4.3.2]{Canuto.2006}. 
For the Legendre tau method, we consider 
the coordinates described in (\ref{eq:chi}), where $v=[(\zeta^0)^\top, \ldots, (\zeta^{N-1})^\top]^\top$ consists of the first $N$ of the $N+1$ Legendre coordinates. 
Then, Corollary~\ref{cor:partialStablImpl} (with $T^{-1} A_y T=T_{\chi\zeta }  A_\zeta T_{\chi\zeta}^{-1}$ and $T_{\chi\zeta}$ from (\ref{eq:chi}))
can numerically 
 shown to be true
 for relevant values of $N$. 

Consequently, Condition \ref{cond:partialStabImplStab} 
is not only a reasonable assumption for an ODE that approximates an RFDE, but, regarding the discretization of a RFDE, it can even be confirmed as a property of the underlying  discretization schemes.

\subsection{
Proving Stability in the ODE via $V_y(y)$}\label{sec:provingStabVy}

The main result of this section, Thm.~\ref{thm:V_as_partialLyap},  shows that 
$V_{\ind y}$ 
from Sec.~\ref{sec:Partial_Lyap_Eq}  
indeed always qualifies as a partial Lyapunov function for (\ref{eq:y_ODE}) if the equilibrium is asymptotically stable. 
As a side effect, Thm.\ \ref{thm:nec_and_sufficient} gives a necessary and sufficient stability criterion in terms of $P_y$. 
We introduce the following wording.
\begin{definition}\label{def:partPosDef}
Let $\hat x$-pd be the abbreviation for 'partially positive definite w.r.t.\ the components $\hat x$'.  
We call \\ 
(a) a function $U:  \mathbb R^{p+n}\to\mathbb R;\; y=[\begin{smallmatrix} z \\ \hat x \end{smallmatrix}]\mapsto U(y)$ 
$\hat x$-pd on 
$\Omega\subseteq \mathbb R^{p+n}$ if 
it is positive semidefinite ($\forall y \in \Omega : U(y)\geq 0$, $U(0_{p+n})=0$) 
and
$
\forall y=[\begin{smallmatrix} z \\ \hat x \end{smallmatrix}]\in \Omega \text{ with } \|\hat x\|\neq 0: 
U(y)>0.  
$ \\
(b) a symmetric matrix $M =M^\top\in \mathbb R^{(p+n)\times (p+n)}$ 
$\hat x$-pd if  $U(y)= y^\top M y$ is $\hat x$-pd on 
$\mathbb R^{p+n}$.
\end{definition}
Analogously to local, or in terms of  $U(y)=y^\top M y$ even global, positive definiteness, cf.\ \cite[Lemma 4.3]{Khalil.2002}, partial positive definiteness can be expressed via a class-K function.
\begin{lemma}\label{lem:xpd}
$M =M^\top\in \mathbb R^{(p+n)\times (p+n)}$ is $\hat x$-pd if and only if $\exists \kappa\in \mathcal K$ such that $\forall [\begin{smallmatrix} z \\ \hat x \end{smallmatrix}]\in \mathbb R^{p+n}: \kappa(\|\hat x\|)\leq [\begin{smallmatrix} z \\ \hat x \end{smallmatrix}]^\top M [\begin{smallmatrix} z \\ \hat x \end{smallmatrix}]$. 
\end{lemma}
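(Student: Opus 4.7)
The strategy is to connect the definition of $\hat x$-pd to the Schur-complement decomposition already developed in Lemma~\ref{lem:part_pos_def_bound_Schur}. The ``if'' direction is immediate and I will dispatch it first; all the work sits in the ``only if'' direction.

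For the ``if'' direction, suppose $\kappa\in\mathcal K$ with $\kappa(\|\hat x\|)\le y^\top M y$ for every $y=[z^\top,\hat x^\top]^\top$. Then $\kappa(0)=0$ gives $0\le 0$ at $y=0$ and $0\le y^\top M y$ for all $y$ with $\hat x=0$; for $\hat x\ne 0$ the strict monotonicity of $\kappa$ yields $y^\top M y\ge\kappa(\|\hat x\|)>0$. Hence $U(y)=y^\top M y$ meets both parts of Def.~\ref{def:partPosDef}(a).

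For the ``only if'' direction, assume $M$ is $\hat x$-pd. Since $U(y)=y^\top M y\ge 0$ on all of $\mathbb R^{p+n}$, the matrix $M$ is positive semidefinite, so Lemma~\ref{lem:part_pos_def_bound_Schur} applies with $P=M$, $Z=M_{zz}$ being the left upper $p\times p$ block. Writing $M=[\begin{smallmatrix}Z&B\\B^\top&X\end{smallmatrix}]$, the generalized Aitken block diagonalization from (\ref{eq:AitkenBlockDiag}) yields
\begin{align*}
y^\top M y = w^\top Z w + \hat x^\top (M/Z)\,\hat x,\qquad w=z+Z^-B\hat x .
\end{align*}
For any $\hat x\ne 0$ choose $z=-Z^-B\hat x$, so that $w=0$. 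Partial positive definiteness forces $\hat x^\top (M/Z)\hat x>0$; as $\hat x$ was arbitrary nonzero, $M/Z\succ 0_{n\times n}$.

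Now invoke Theorem~\ref{thm:k1} (equivalently, the explicit minimum value in Lemma~\ref{lem:part_pos_def_bound_Schur}) to obtain $y^\top M y\ge \lambda_{\min}(M/Z)\,\|\hat x\|_2^2$ for every $y$, with $\lambda_{\min}(M/Z)>0$. Since all norms on $\mathbb R^n$ are equivalent, there exists $c>0$ with $\|\hat x\|_2^2\ge c\,\|\hat x\|^2$, and the function
\begin{align*}
\kappa(s) := c\,\lambda_{\min}(M/Z)\,s^2
\end{align*}
is continuous, strictly increasing, and vanishes at $s=0$, hence $\kappa\in\mathcal K$ and satisfies the required bound. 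The only subtlety worth flagging is the use of the generalized inverse $Z^-$ when $Z$ is singular; this is precisely the case handled by the identity $-ZZ^-B+B=0_{p\times n}$ from \cite[Thm.~1.19]{Horn.2005} that was already exploited in the proof of Lemma~\ref{lem:part_pos_def_bound_Schur}, so no additional argument is needed here.
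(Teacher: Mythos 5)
Your proof is correct. Note that the paper itself gives no proof of Lemma~\ref{lem:xpd}; it only remarks that the statement is the partial analogue of the standard fact that a quadratic form $x^\top P x$ is (globally) positive definite iff it admits a class-K lower bound in $\|x\|$ (cf.\ \cite[Lemma 4.3]{Khalil.2002}). What you have done is make that analogy precise by reusing the machinery the paper already has: the ``if'' direction is the routine check against Def.~\ref{def:partPosDef}, and the ``only if'' direction correctly extracts positive semidefiniteness of $M$ from $U\geq 0$, applies the Aitken block diagonalization of Lemma~\ref{lem:part_pos_def_bound_Schur}, tests the form along $z=-Z^-B\hat x$ to force $M/Z\succ 0_{n\times n}$, and then reads off the quadratic lower bound $\lambda_{\min}(M/Z)\|\hat x\|_2^2$ from the same lemma, converting to the arbitrary norm $\|\hat x\|$ by norm equivalence. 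This is arguably the most natural route given the paper's setup, and it has the added benefit of exhibiting the \emph{largest} quadratic $\kappa$, tying the lemma directly to the bound (\ref{eq:kappa1_V_y}) of Thm.~\ref{thm:k1}. The only point worth stating explicitly (it is implicit in your argument) is that for $\hat x=0_n$ the claimed inequality $\kappa(0)=0\leq y^\top M y$ is covered by $M\succeq 0$, so the bound indeed holds for \emph{all} $y\in\mathbb R^{p+n}$ as the lemma requires.
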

Regarding $y^\top \!Q_y y=-D_{(\ref{eq:partial_ODE})}^+\!V_y(y)$, 
Lemma \ref{lem:xpd} refers to the class-K function  in 
(\ref{eq:partial_monotonicity}). For  $Q_y$ in  (\ref{eq:Q_quadForm}) or (\ref{eq:Qzeta2}), we can choose 
\begin{align}\label{eq:kappa3_Qy}
\kappa_3(\|\hat x\|_2):=\lambda_{\min}(Q_0
) \,\|\hat x\|_2^2\leq y^\top Q_y y.
\end{align}  
Rather decisive is whether the Lyapunov equation solution $P_y$ is also $\hat x$-pd, as it is required in (\ref{eq:partial_pos_def}) with $V_y(y)=y^\top P_y \,y$. 
\begin{lemma}\label{lem:P_xpd}
Let  $P_{\ind y}=P_y^\top$ be a 
solution of 
(\ref{eq:LyapEq_y}) for a $\hat x$-pd $Q_{\ind y}$. 
If $P_y$ is positive semidefinite, then it is even $\hat x$-pd.
\end{lemma}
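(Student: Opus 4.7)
The plan is to argue by contradiction along ODE trajectories. Suppose the conclusion fails, so that there exists some $y_0 = [z_0^\top,\hat x_0^\top]^\top$ with $\hat x_0 \neq 0_n$ yet $V_y(y_0) = y_0^\top P_y y_0 = 0$. The strategy is to ``flow'' $y_0$ under the linear ODE (\ref{eq:y_ODE}) and exploit that $V_y$ can neither increase (by the Lyapunov equation) nor drop below zero (by positive semidefiniteness of $P_y$), forcing rigidity that contradicts the $\hat x$-partial positive definiteness of $Q_y$.

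Concretely, I would first consider the trajectory $y(t) = e^{A_y t} y_0$, which by construction satisfies, along (\ref{eq:y_ODE}),
\begin{align*}
V_y(y(t)) = V_y(y_0) - \int_0^t y(s)^\top Q_y\, y(s)\,\mathrm d s,
\end{align*}
using $\dot V_y(y(s)) = -y(s)^\top Q_y y(s)$ from (\ref{eq:DV_Qy}). Because $Q_y$ is $\hat x$-pd, in particular $Q_y \succeq 0$, the integrand is nonnegative, so $V_y(y(t)) \leq V_y(y_0) = 0$. Combined with $V_y(y(t)) \geq 0$ from the assumed positive semidefiniteness of $P_y$, this yields $V_y(y(t)) \equiv 0$ for all $t \geq 0$.

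Differentiating, $y(t)^\top Q_y\, y(t) = 0$ for all $t \geq 0$. Since $Q_y$ is $\hat x$-pd, Lemma \ref{lem:xpd} supplies a class-K function $\kappa$ with $\kappa(\|\hat x(t)\|) \leq y(t)^\top Q_y y(t) = 0$, whence $\hat x(t) = 0_n$ for all $t \geq 0$. Evaluating at $t = 0$ gives $\hat x_0 = 0_n$, contradicting the assumption $\hat x_0 \neq 0_n$. Hence $V_y(y) > 0$ whenever $\hat x \neq 0_n$, and $P_y$ is $\hat x$-pd as claimed.

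The argument uses nothing beyond the semidefiniteness of $P_y$, the sign of the Lyapunov-equation right-hand side, and the well-posedness of the linear ODE (no Hurwitz assumption on $A_y$ is needed). The only mild subtlety is noting that $V_y(y_0) = 0$ together with $P_y \succeq 0$ propagates along the flow by monotonicity; the passage from the integral identity to pointwise vanishing of $y(t)^\top Q_y y(t)$ is straightforward by continuity in $t$, and the final step is a direct application of the $\hat x$-pd characterization from Lemma \ref{lem:xpd}.
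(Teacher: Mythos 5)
Your proof is correct, but it reaches the contradiction by a genuinely different mechanism than the paper. The paper's argument is purely algebraic and does not touch the ODE at all: from $y_c^\top P_y y_c=0$ and $P_y\succeq 0$ it deduces $P_y y_c=0_{n(N+1)}$ (writing $P_y=C^\top C$, so $\|Cy_c\|_2^2=0$), and then substitutes directly into the Lyapunov equation (\ref{eq:LyapEq_y}) to get $y_c^\top Q_y y_c = -y_c^\top(P_yA_y+A_y^\top P_y)y_c = 0$, contradicting $\hat x$-positive definiteness of $Q_y$ — two lines, no trajectories. You instead flow $y_0$ under (\ref{eq:y_ODE}) and squeeze $V_y(y(t))$ between the nonincreasing upper bound coming from $Q_y\succeq 0$ and the lower bound $0$ coming from $P_y\succeq 0$, concluding $V_y(y(t))\equiv 0$ and hence $y(t)^\top Q_y y(t)\equiv 0$; evaluating at $t=0$ gives the same contradiction. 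Every step of yours checks out: $Q_y\succeq 0$ is indeed part of Definition \ref{def:partPosDef}, the integral identity follows from (\ref{eq:DV_Qy}), and the passage to pointwise vanishing of the continuous nonnegative integrand is sound (Lemma \ref{lem:xpd} is not even needed at the end — the definition of $\hat x$-pd applied to $Q_y$ suffices). What the paper's route buys is brevity and the reusable linear-algebra fact that a semidefinite quadratic form vanishes only on the kernel of its matrix; what your route buys is a LaSalle-type interpretation consistent with the partial-stability reading of Section \ref{sec:role_of_part_stab}, at the cost of invoking well-posedness of the flow where none is needed. Neither argument requires $A_y$ to be Hurwitz, as you correctly note.
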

\begin{proof}
The result is shown by contradiction. Assume 
there exists a $y_c=[\begin{smallmatrix} z_c \\ \hat x_c \end{smallmatrix}]$ with $\|\hat x_c\|\neq 0$ such that $y_c^\top P_{\ind y} y_c=0$. 
Then    
$P_{\ind y} y_c 
=0_{n(N+1)}$ 
(cf.\ a decomposition $P_y=C^\top C$ in  $y_c^\top P_y y_c=\|C y_c\|_2^2 =0$, $C^\top C y_c=0_{n(N+1)}$), 
which leads by (\ref{eq:LyapEq_y}) to $y_c^\top Q_{\ind y} y_c=0$, contradicting that $Q_{\ind y}$ is $\hat x$-pd. 
\end{proof}

\begin{lemma}\label{lem:V_as_partialLyap_ifPpsd} 
Let  $P_{\ind y}=P_y^\top$ be a solution of (\ref{eq:LyapEq_y}) for a $\hat x$-pd $Q_{\ind y}$. 
Consider Thm.~\ref{thm:partialLyap} in terms of 
 partial asymptotic stability  w.r.t.\ $y^N=\hat x$ for the zero equilibrium in (\ref{eq:y_ODE}).  
If $P_y$ is positive semidefinite, 
then, under Cond.~\ref{cond:partialStabImplStab}, $V_{\ind y}(y)=y^\top P_{\ind y} \,y$ satisfies the conditions on a partial Lyapunov function 
in Thm. \ref{thm:partialLyap}. 
\end{lemma}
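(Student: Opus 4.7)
The plan is to verify the three hypotheses of Thm.~\ref{thm:partialLyap} for $V_{\ind y}(y) = y^\top P_{\ind y}\, y$ in turn: the partial positive-definite lower bound (\ref{eq:partial_pos_def}), the partial monotonicity condition (\ref{eq:partial_monotonicity}), and the boundedness of $\|f^x(z(t),\hat x(t))\|$ along trajectories starting in some ball of radius $r>0$. Continuity and $V_{\ind y}(0_{n(N+1)})=0$ are immediate from the quadratic form.

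For (\ref{eq:partial_pos_def}), I would chain Lemma~\ref{lem:P_xpd} with Lemma~\ref{lem:xpd}. Since $Q_{\ind y}$ is assumed $\hat x$-pd and $P_{\ind y}$ is by hypothesis positive semidefinite, Lemma~\ref{lem:P_xpd} yields that $P_{\ind y}$ is itself $\hat x$-pd, and then Lemma~\ref{lem:xpd} supplies a $\kappa_1\in \mathcal K$ with $\kappa_1(\|\hat x\|)\leq V_{\ind y}(y)$. For (\ref{eq:partial_monotonicity}), the Lyapunov equation (\ref{eq:LyapEq_y}) together with the construction (\ref{eq:DV_Qy}) gives $D_{(\ref{eq:y_ODE})}^+V_{\ind y}(y) = -y^\top Q_{\ind y}\,y$; invoking the explicit bound (\ref{eq:kappa3_Qy}) — or equivalently Lemma~\ref{lem:xpd} applied to the $\hat x$-pd matrix $Q_{\ind y}$ — furnishes the required $\kappa_3\in \mathcal K$.

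The main obstacle is the boundedness of $\|f^x\|$ along trajectories, which does not follow directly from the partial Lyapunov inequalities since $V_{\ind y}$ does not control the growth of $z(t)$. Here Cond.~\ref{cond:partialStabImplStab} is the essential ingredient. The two verified conditions above already meet the hypotheses of the partial (merely Lyapunov) stability part of Thm.~\ref{thm:partialLyap}, hence the zero equilibrium of (\ref{eq:y_ODE}) is partially stable w.r.t.\ $\hat x$. Under Cond.~\ref{cond:partialStabImplStab}, partial stability is equivalent to Lyapunov stability of the full zero equilibrium, so for every $\varepsilon>0$ there is $r>0$ with $\|y(0)\|<r \Rightarrow \|y(t)\|<\varepsilon$ for all $t\geq 0$. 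Because (\ref{eq:y_ODE}) is linear, $f^x$ is a linear function of $y$, whence boundedness of $y(t)$ immediately yields boundedness of $\|f^x(z(t),\hat x(t))\|$ along every such trajectory. This closes the last hypothesis of Thm.~\ref{thm:partialLyap} and completes the proof.
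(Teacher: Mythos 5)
Your proposal is correct and follows essentially the same route as the paper's proof: conditions (\ref{eq:partial_pos_def}) and (\ref{eq:partial_monotonicity}) via Lemma~\ref{lem:P_xpd} and Lemma~\ref{lem:xpd}, and the boundedness of $\|f^x\|$ by first deducing partial stability, then upgrading it to full stability via Cond.~\ref{cond:partialStabImplStab} so that trajectories stay bounded and $f^x$ remains bounded along them. The only cosmetic difference is that you invoke linearity of $f^x$ where the paper invokes continuity and compactness of the trajectory closure; both close the argument.
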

\begin{proof}
In Thm.~\ref{thm:partialLyap},   (\ref{eq:partial_pos_def}) and (\ref{eq:partial_monotonicity}) hold by Lemma~\ref{lem:P_xpd} and
\ref{lem:xpd}. 
The  boundedness condition on $\|f^x(z(t),\hat x(t))\|$ in Thm. \ref{thm:partialLyap} 
is also ensured: 
due to Cond.~\ref{cond:partialStabImplStab}, the already provable partial stability implies stability, which is accompanied by compactness of the trajectories, and the image under the continuous mapping $f^x$ remains compact.
\end{proof}

We are led to the desired interpretation of the function $V_y$ whenever the ODE equilibrium is   asymptotically stable. 

\begin{theorem} \label{thm:V_as_partialLyap}
If $A_{\ind y}$ is Hurwitz  and Cond.~\ref{cond:partialStabImplStab} applies, then 
$V_{\ind y}$ 
from Sec.~\ref{sec:Partial_Lyap_Eq} 
is a partial Lyapunov function for 
(\ref{eq:y_ODE}).
\end{theorem}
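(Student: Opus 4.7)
The plan is to chain the preceding lemmas so that each hypothesis of Thm.~\ref{thm:partialLyap} is discharged. First I would invoke Prop.~\ref{prop:properties}: since $A_y$ is Hurwitz and $Q_y$ in (\ref{eq:Q_quadForm}) is positive semidefinite (it is the sum of two block-diagonal matrices with psd blocks, given $Q_0,Q_1\succ 0_{n\times n}$, $Q_2\succeq 0_{n\times n}$ and non-negative quadrature weights $w_k$), the Lyapunov equation (\ref{eq:LyapEq_y}) has a unique solution $P_y=P_y^\top\succeq 0_{n(N+1)\times n(N+1)}$. Uniqueness and symmetry are what we need to treat $V_y(y)=y^\top P_y y$ as a well-defined quadratic form.

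Next I would show that the matrix $Q_y$ is $\hat x$-pd in the sense of Def.~\ref{def:partPosDef}. This is immediate from the block structure in (\ref{eq:Q_quadForm}): the bottom-right block contains $Q_0\succ 0_{n\times n}$, so $y^\top Q_y y \geq \lambda_{\min}(Q_0)\|\hat x\|_2^2$, which is exactly (\ref{eq:kappa3_Qy}). By Lemma~\ref{lem:xpd}, $Q_y$ is $\hat x$-pd, and by construction $-D_{(\ref{eq:y_ODE})}^+V_y(y)=y^\top Q_y y$ already satisfies the monotonicity condition (\ref{eq:partial_monotonicity}) with $\kappa_3(s)=\lambda_{\min}(Q_0)s^2\in\mathcal K$.

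Having both $P_y\succeq 0$ and $Q_y$ being $\hat x$-pd, Lemma~\ref{lem:P_xpd} yields that $P_y$ is itself $\hat x$-pd, so by Lemma~\ref{lem:xpd} there exists $\kappa_1\in\mathcal K$ with $\kappa_1(\|\hat x\|)\leq V_y(y)$, i.e.\ the partial positive-definiteness bound (\ref{eq:partial_pos_def}) is satisfied. At this point Lemma~\ref{lem:V_as_partialLyap_ifPpsd} applies directly: under Cond.~\ref{cond:partialStabImplStab} all hypotheses of Thm.~\ref{thm:partialLyap} are in force, including the boundedness of $\|f^x(z(t),\hat x(t))\|$ along trajectories (which follows because provable partial stability combines with Cond.~\ref{cond:partialStabImplStab} to give full stability, hence bounded trajectories, and $f^x$ is continuous and linear).

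There is essentially no hard step; the proof is a sequence of invocations of Prop.~\ref{prop:properties}, Lemmas~\ref{lem:xpd}, \ref{lem:P_xpd}, \ref{lem:V_as_partialLyap_ifPpsd}. The only point requiring a moment's care is verifying that $Q_y$ is $\hat x$-pd rather than merely positive semidefinite; this is what makes the ``$\hat x$-pd'' propagation from $Q_y$ to $P_y$ possible and what distinguishes $V_y$ as a genuine \emph{partial} Lyapunov function, even though, as noted after Sec.~\ref{sec:Partial_Lyap_Eq}, it need not be a classical Lyapunov function in the full state $y$.
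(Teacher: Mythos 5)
Your proposal is correct and follows essentially the same route as the paper: Prop.~\ref{prop:properties} gives $P_y\succeq 0$ from $A_y$ Hurwitz, the $\hat x$-positive-definiteness of $Q_y$ is read off the block structure via (\ref{eq:kappa3_Qy}), and Lemma~\ref{lem:V_as_partialLyap_ifPpsd} closes the argument under Cond.~\ref{cond:partialStabImplStab}. You merely unpack the steps (Lemmas~\ref{lem:xpd} and \ref{lem:P_xpd}, the boundedness of $\|f^x\|$) that the paper delegates to Lemma~\ref{lem:V_as_partialLyap_ifPpsd}.
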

\begin{proof}
If $A_y$ is Hurwitz, 
$P_y$ is positive semidefinite by Prop.~\ref{prop:properties}. As $Q_y$ in Sec.~\ref{sec:Partial_Lyap_Eq} is  $\hat x$-pd, Lemma~\ref{lem:V_as_partialLyap_ifPpsd} applies.  
\end{proof}

Our focus is not preliminary on a stability criterion in terms of $P_y$ because we can simply compute the eigenvalues of $A_y$ to conclude stability. 
Nevertheless, the following result might still be of interest since it shows that $V_y$ must only be tested for positive semidefiniteness. Proving existence of $\kappa_1$ in (\ref{eq:partial_pos_def}) is not required due to Lemma \ref{lem:P_xpd}.
 
\begin{theorem}\label{thm:nec_and_sufficient}
Assume  Cond.~\ref{cond:partialStabImplStab} holds. Let $P_y=P_y^\top$ be a solution of 
(\ref{eq:LyapEq_y}) for a $\hat x$-pd matrix $Q_y$  (e.g., 
(\ref{eq:Q_quadForm}) or (\ref{eq:Qzeta2})). 
The zero equilibrium of the approximating ODE (\ref{eq:y_ODE}) is 
asymp\-tot\-ically stable if and only if $P_y$ is positive semidefinite. 
\end{theorem}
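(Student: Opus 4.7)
The plan is to treat the two implications separately, with each direction being a short assembly of results already developed in the paper.

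For the necessity direction, I would argue as follows. Since (\ref{eq:y_ODE}) is a finite-dimensional linear autonomous ODE, asymptotic stability of its zero equilibrium is equivalent to $A_y$ being Hurwitz. With $A_y$ Hurwitz in hand, Proposition~\ref{prop:properties} (applied with the given $\hat x$-pd, hence positive semidefinite, matrix $Q_y$) immediately yields that the Lyapunov equation (\ref{eq:LyapEq_y}) has a unique symmetric solution $P_y$ and that this $P_y$ is positive semidefinite. Note that uniqueness here rules out the possibility of some other symmetric solution being indefinite.

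For the sufficiency direction, I would chain the preceding lemmas. Assume $P_y = P_y^\top \succeq 0_{n(N+1)\times n(N+1)}$. Because $Q_y$ is $\hat x$-pd by assumption and Condition~\ref{cond:partialStabImplStab} is in force, Lemma~\ref{lem:V_as_partialLyap_ifPpsd} applies and $V_y(y)=y^\top P_y y$ qualifies as a partial Lyapunov function in the sense of Theorem~\ref{thm:partialLyap}. Specifically, the partial positive-definiteness bound (\ref{eq:partial_pos_def}) follows from Lemma~\ref{lem:P_xpd} together with Lemma~\ref{lem:xpd}, and the monotonicity bound (\ref{eq:partial_monotonicity}) follows from $\hat x$-positive-definiteness of $Q_y$ via (\ref{eq:kappa3_Qy}). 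Hence the zero equilibrium of (\ref{eq:y_ODE}) is partially asymptotically stable w.r.t.\ $\hat x$, and Condition~\ref{cond:partialStabImplStab} promotes this to asymptotic stability of the full equilibrium.

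The only point that deserves careful articulation is the logical order within the sufficiency argument, since Lemma~\ref{lem:V_as_partialLyap_ifPpsd} itself invokes Condition~\ref{cond:partialStabImplStab} to ensure the boundedness hypothesis of Theorem~\ref{thm:partialLyap} via stability. I would therefore make explicit that the chain \emph{(i)} $P_y\succeq 0$ and $Q_y$ $\hat x$-pd $\Rightarrow$ $V_y$ is a partial Lyapunov function (Lemma~\ref{lem:V_as_partialLyap_ifPpsd}), \emph{(ii)} partial Lyapunov function $\Rightarrow$ partial asymptotic stability w.r.t.\ $\hat x$ (Theorem~\ref{thm:partialLyap}), \emph{(iii)} partial asymptotic stability $\Rightarrow$ asymptotic stability (Condition~\ref{cond:partialStabImplStab}) is free of circularity because in the linear setting the boundedness condition in Theorem~\ref{thm:partialLyap} is automatic on a neighborhood once partial (non-asymptotic) stability is established from $\dot V_y \le 0$. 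Apart from clearly separating these steps, there is no genuine obstacle: everything required has been prepared in Proposition~\ref{prop:properties} and Lemmas~\ref{lem:P_xpd}--\ref{lem:V_as_partialLyap_ifPpsd}, so the proof reduces to citing them in the right order.
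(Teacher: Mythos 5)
Your proof is correct and follows essentially the same route as the paper: necessity via Prop.~\ref{prop:properties} ($A_y$ Hurwitz $\Rightarrow P_y\succeq 0$), sufficiency via Lemma~\ref{lem:V_as_partialLyap_ifPpsd}, Thm.~\ref{thm:partialLyap}, and Cond.~\ref{cond:partialStabImplStab}. One small caution on your non-circularity remark: the boundedness hypothesis of Thm.~\ref{thm:partialLyap} is \emph{not} automatic from partial stability alone, even for a linear system (partial stability controls only $\hat x(t)$, not $z(t)$, on which $f^x$ also depends); it is obtained, exactly as in the paper's proof of Lemma~\ref{lem:V_as_partialLyap_ifPpsd}, by first invoking Cond.~\ref{cond:partialStabImplStab} to upgrade partial stability to full stability, so the credit belongs to that condition rather than to linearity.
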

\begin{proof}
If $P_y\succeq 0_{n(N+1)\times n(N+1)}$, 
then 
Lemma~\ref{lem:V_as_partialLyap_ifPpsd} applies. Thus, partial asymptotic stability w.r.t.\ $\hat x$ can be proven by Thm.~\ref{thm:partialLyap}.  
The latter implies asymptotic stability by Cond.~\ref{cond:partialStabImplStab}. Conversely, if $A_y$ is Hurwitz, then $P_y\succeq 0_{n(N+1)\times n(N+1)}$ 
because of Prop.~\ref{prop:properties}.
\end{proof}

We conclude Sec.~\ref{sec:role_of_part_stab} as follows. The function $V_y$ obtained in Sec.~\ref{sec:Partial_Lyap_Eq} does not necessarily qualify as a classical  Lyapunov function. Instead, 
it is a partial Lyapunov function for a system  
in which 
proving partial stability is already sufficient for proving stability.

\section{Convergence}\label{sec:Convergence}
A sequence of refined results with enlarged $N$ should always be  considered. 
It remains to discuss convergence aspects.

\subsection{Stability Properties of the Approximating ODEs} \label{sec:stabPreserving}
The discretization scheme used in the proposed ODE-based approach should 
be stability preserving in the following sense. 
\begin{condition}\label{cond:stabilityPreservation}
Provided the discretization resolution $N$ is chosen sufficiently large, the zero equilibrium of the approximating ODE is  exponentially stable  if and only if the zero equilibrium of the RFDE is exponentially\footnotemark[\getrefnumber{fn:UAS_UES}] stable. 
\end{condition}

Chebyshev collocation has successfully been applied 
in various fields \cite{Breda.2005,Breda.2016,Breda.2018,Wolff.2021,Michiels.2019}   where Cond.~\ref{cond:stabilityPreservation} is also  desirable. 
It is known that eigenvalues of $A_y^C$ converge to the characteristic roots of the RFDE, i.e., to the solutions $s$ of $\det(sI_n-A_0-\mathrm e^{-s\delay} A_1)=0$, or, equivalently, to the eigenvalues of the infinitesimal generator 
of the $C_0$-semigroup 
 of solution operators, see \cite{Breda.2005}. 
The red points in Fig.~\ref{fig:ew_Cheb} show 
typical 
eigenvalue chains in RFDEs, and the crosses and circles demonstrate how this chain is approached by the eigenvalues of $A_y^C$.  
There are also some additional spurious eigenvalues that do not match with RFDE characteristic roots. These, however, are easily identifiable 
as they do not persist when $N$ changes \cite[Prop.~3.7]{Breda.2005}. See,  in Fig.~\ref{fig:ew_Cheb}, the crosses ($N=40$) that do not match with circles ($N=80$). Moreover, from numerical observations, they are not expected to 
hamper Cond.~\ref{cond:stabilityPreservation}, 
see also the discussions in 
\cite[p.~361]{Wolff.2021}, \cite[p.~853]{Michiels.2019}. Thus, despite of not  being  proven, Cond.~\ref{cond:stabilityPreservation} 
in practice
is 
a tenable assumption 
for the Chebyshev Collocation method. 

The Legendre tau method is similarly powerful in approximating eigenvalues, see  Fig.~\ref{fig:ew_Leg}. 
Stability preservation of this method  (Cond.~\ref{cond:stabilityPreservation}) is proven in {\cite[Thm.~5.3]{Ito.1985}}.

As a consequence, the stability-dependent characterization of $P_y$ from 
Sec.~\ref{sec:provingStabVy} is also meaningful 
for the RFDE.  
\begin{corollary}\label{thm:N_large_nec_and_sufficient}
Assume the 
discretization scheme satisfies Cond.~\ref{cond:stabilityPreservation} and \ref{cond:partialStabImplStab}.  
Provided $N$ is sufficiently large, then $P_y$ 
from Sec.~\ref{sec:Partial_Lyap_Eq} is positive semidefinite if and only if the zero equilibrium of the RFDE is 
asymptotically stable. 
\end{corollary}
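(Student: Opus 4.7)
The plan is simply to chain two equivalences already at our disposal, so no new analytical machinery is needed. First I would invoke Thm.~\ref{thm:nec_and_sufficient}, whose hypotheses are met since Cond.~\ref{cond:partialStabImplStab} is assumed and the matrix $Q_y$ in Sec.~\ref{sec:Partial_Lyap_Eq} is $\hat x$-pd. This yields the equivalence: $P_y$ is positive semidefinite if and only if the zero equilibrium of the approximating ODE (\ref{eq:y_ODE}) is asymptotically stable. Because (\ref{eq:y_ODE}) is a linear autonomous finite-dimensional system, asymptotic stability coincides with exponential stability.

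Next I would use Cond.~\ref{cond:stabilityPreservation}, which, for $N$ sufficiently large, guarantees that the zero equilibrium of (\ref{eq:y_ODE}) is exponentially stable if and only if the zero equilibrium of the RFDE (\ref{eq:lin_RFDE}) is exponentially stable. For the latter, footnote~\ref{fn:UAS_UES} records that exponential and asymptotic stability are equivalent in linear autonomous RFDEs with bounded delay.

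Composing these equivalences gives the claim: for all sufficiently large $N$,
\begin{equation*}
P_y \succeq 0
\;\Longleftrightarrow\;
\text{(\ref{eq:y_ODE}) asymptotically stable}
\;\Longleftrightarrow\;
\text{(\ref{eq:y_ODE}) exponentially stable}
\;\Longleftrightarrow\;
\text{RFDE exponentially stable}
\;\Longleftrightarrow\;
\text{RFDE asymptotically stable.}
\end{equation*}
There is no real obstacle here; the substantive work has been done in establishing Thm.~\ref{thm:nec_and_sufficient} (via Prop.~\ref{prop:properties}, Lemma~\ref{lem:V_as_partialLyap_ifPpsd} and Cond.~\ref{cond:partialStabImplStab}) and in accepting Cond.~\ref{cond:stabilityPreservation} for the discretization scheme at hand. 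The only subtlety worth mentioning explicitly is the two linear/autonomous regularity remarks that let us pass freely between asymptotic and exponential stability on both the finite- and infinite-dimensional sides; everything else is purely a matter of transitively chaining the stated biconditionals.
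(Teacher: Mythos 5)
Your proof is correct and follows essentially the same route as the paper, whose entire proof reads ``Thm.~\ref{thm:nec_and_sufficient} combined with Cond.~\ref{cond:stabilityPreservation}.'' You merely make explicit the asymptotic/exponential equivalences (footnote~\ref{fn:UAS_UES} and linearity of the ODE) that the paper leaves implicit.
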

\begin{proof}
Thm.~\ref{thm:nec_and_sufficient} combined with Cond.~\ref{cond:stabilityPreservation}.
\end{proof}

\begin{figure}
		\centering
\subfloat[
Chebyshev collocation: \newline
eigenvalues of $A_y^C$ in (\ref{eq:A_y_C})
		\label{fig:ew_Cheb}
	]
	{
\includegraphics{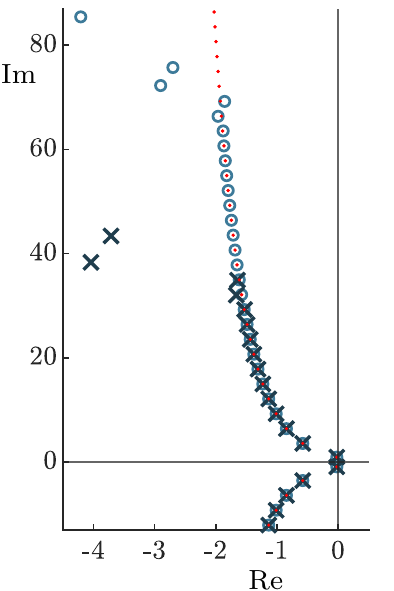}
}
\subfloat[Legendre tau:\quad eigenvalues of \newline 
 $A_\zeta^L$  in (\ref{eq:A_zeta_L}) 
or, equivalently,   
$A_y^L$  (\ref{eq:A_y_L}) \label{fig:ew_Leg}
]
	{
	\includegraphics{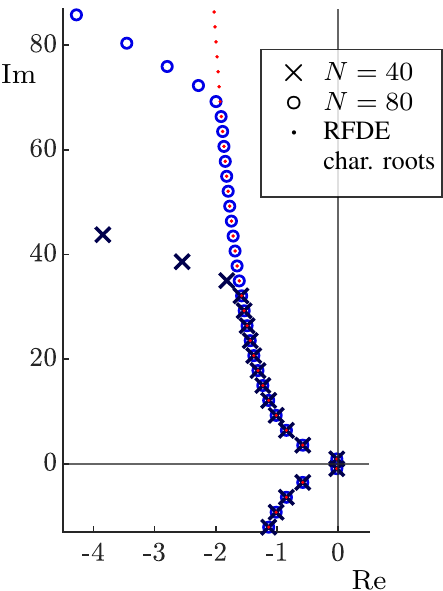}
}
	\caption{
	Characteristic roots ($A_0,A_1,\delay$ from Example \ref{exmp:P_structure}). 
	}
	\label{fig:ew}
\end{figure}
\subsection{Scheme-Dependent Improvements}\label{sec:splitting}
\subsubsection{Chebyshev collocation}  
Consider the ODE-based approach with the Chebyshev collocation method. To improve the convergence properties (indicated in Fig.~\ref{fig:k1}), we 
transform the problem of approximating $V(\phi)$ 
to a problem of approximating a modified $V_0(\phi)$  with $Q_1$ and $Q_2$ being zero. 
To this end, we choose the shift matrices $\tilde Q_1=Q_1$ and $\tilde Q_2=Q_2$ in the following splitting lemma. The idea is closely related to the derivation of complete-type functionals in  \cite[Thm.~2.11]{Kharitonov.2013}. 
\begin{lemma}[Splitting] \label{lem:splitting_V}
For $Q_0,Q_1,Q_2\in\mathbb R^{n\times n}$, let $V(\phi)=V(\phi; \;Q_0,\,Q_1,\,Q_2)$ denote a solution of (\ref{eq:DfV_complete}). 
Then
\begin{align} \label{eq:splitting_V}
&\hspace{-2.4em}V(\phi; \;\;  Q_{0},\;Q_{1},\;Q_{2}\;) =V_0(\phi)
+ V_1(\phi)+V_2(\phi)   \text{ with}
\\
V_0(\phi)&=V\big(\phi;\; \;( Q_{0}+\tilde Q_1+\delay \tilde Q_2) ,\;(Q_{1} - \tilde Q_1),\;(Q_{2}-\tilde Q_2)\;\big)
\nonumber \\
V_1(\phi) &= V(\phi; -\tilde Q_1, \tilde Q_1, 0_{n\times n})=
\int_{-\delay}^0 \phi^\top\!(\eta) \tilde Q_1  \phi(\eta)\,\mathrm d \eta
\nonumber \\
V_2(\phi)&= V(\phi; -\delay \tilde Q_2 , 0_{n\times n}, \tilde Q_2)\!= 
\!\!
\int_{-\delay}^0 
\!\!\! 
\phi^\top\!(\eta) 
(\delay+\eta)\tilde Q_2\phi(\eta) \mathrm d \eta
\nonumber
\end{align}
for arbitrarily chosen shifts $\tilde Q_1, \tilde Q_2\in \mathbb R^{n\times n}$. 
\end{lemma}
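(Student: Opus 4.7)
The plan is to exploit the linearity of the defining equation~(\ref{eq:DfV_complete}) in the triple $(Q_0, Q_1, Q_2)$: both sides of~(\ref{eq:DfV_complete}) are linear in that triple, so if $V_0, V_1, V_2$ solve (\ref{eq:DfV_complete}) for the triples $(Q_{0}+\tilde Q_1+\delay\tilde Q_2,\;Q_1-\tilde Q_1,\;Q_2-\tilde Q_2)$, $(-\tilde Q_1,\;\tilde Q_1,\;0_{n\times n})$, $(-\delay\tilde Q_2,\;0_{n\times n},\;\tilde Q_2)$, respectively, then $V_0+V_1+V_2$ solves~(\ref{eq:DfV_complete}) for the triple obtained by adding them. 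The componentwise addition is immediate: the $Q_0$-entries give $(Q_0+\tilde Q_1+\delay\tilde Q_2)+(-\tilde Q_1)+(-\delay\tilde Q_2)=Q_0$, the $Q_1$-entries give $(Q_1-\tilde Q_1)+\tilde Q_1+0=Q_1$, and the $Q_2$-entries give $(Q_2-\tilde Q_2)+0+\tilde Q_2=Q_2$. Hence, under this reduction, it only remains to verify that the two closed-form expressions claimed for $V_1$ and $V_2$ really are solutions of~(\ref{eq:DfV_complete}) for their respective special triples.

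For $V_1$, I would apply the change of variables $s=t+\eta$ to obtain $V_1(x_t)=\int_{t-\delay}^{t} x^\top(s)\tilde Q_1 x(s)\,\mathrm d s$, so by the fundamental theorem of calculus $D_{(\ref{eq:lin_RFDE})}^+V_1(x_t)=x^\top(t)\tilde Q_1 x(t)-x^\top(t-\delay)\tilde Q_1 x(t-\delay)$, which is exactly the right-hand side of~(\ref{eq:DfV_complete}) for $(-\tilde Q_1,\tilde Q_1,0_{n\times n})$. For $V_2$, I would keep the $\eta$-parametrization and use that $\partial_t x(t+\eta)=\partial_\eta x(t+\eta)$ so that
\begin{align*}
D_{(\ref{eq:lin_RFDE})}^+V_2(x_t)
&=\int_{-\delay}^{0}(\delay+\eta)\,\partial_\eta\!\left[x^\top(t+\eta)\tilde Q_2 x(t+\eta)\right]\mathrm d\eta \\
&=\Big[(\delay+\eta)x^\top(t+\eta)\tilde Q_2 x(t+\eta)\Big]_{-\delay}^{0}-\int_{-\delay}^{0}x^\top(t+\eta)\tilde Q_2 x(t+\eta)\,\mathrm d\eta \\
&=\delay\,x^\top(t)\tilde Q_2 x(t)-\int_{-\delay}^{0}x^\top(t+\eta)\tilde Q_2 x(t+\eta)\,\mathrm d\eta,
\end{align*}
the boundary term at $\eta=-\delay$ vanishing because of the factor $\delay+\eta$. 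This matches (\ref{eq:DfV_complete}) with $(Q_0,Q_1,Q_2)=(-\delay\tilde Q_2,0_{n\times n},\tilde Q_2)$.

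The argument is essentially bookkeeping enabled by the linearity of~(\ref{eq:DfV_complete}); there is no serious obstacle. The only step requiring a bit of care is the $V_2$-derivative, where the integration by parts (equivalently, a Leibniz differentiation after changing variables to $s=t+\eta$, in which case the weight becomes $\delay+s-t$ and produces the boundary contribution $\delay\,x^\top(t)\tilde Q_2 x(t)$) must be carried out consistently. Note that uniqueness of solutions of~(\ref{eq:DfV_complete}) is not needed: the lemma asks only that $V_0+V_1+V_2$ be \emph{a} solution, which is precisely what the above linearity plus direct verification establishes for arbitrary shifts $\tilde Q_1,\tilde Q_2\in\mathbb R^{n\times n}$.
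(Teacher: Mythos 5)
Your proof is correct and follows essentially the same route as the paper: the paper likewise computes $D_{(\ref{eq:lin_RFDE})}^+V_1$ and $D_{(\ref{eq:lin_RFDE})}^+V_2$ (via Leibniz differentiation of the shifted integrals, equivalent to your integration by parts) and observes that they compensate the difference between $D_{(\ref{eq:lin_RFDE})}^+V_0$ and the prescribed derivative, i.e., exactly your linearity-in-$(Q_0,Q_1,Q_2)$ argument. Your write-up just makes the componentwise bookkeeping of the three triples more explicit than the paper's terser version.
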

\begin{proof} 
For  $\phi(\eta)=x_t(\eta)=x(t+\eta)$, the derivatives
\begin{align*}
\iftoggle{IsTwocolumn}{&}{}
\tfrac{\mathrm{d} }{\mathrm d t} \! \smallint_{t-\delay}^t    \! \! x^\top\!(\xi)\tilde Q_1 x(\xi)\,\mathrm d \xi 
\iftoggle{IsTwocolumn}{}{&}
= 
x^\top\!(t) \tilde Q_1 x(t)
-x^\top\!(t-\delay)\tilde Q_{1} x(t-\delay) ,
\\
\iftoggle{IsTwocolumn}{&}{}
\tfrac{\mathrm d }{\mathrm d t } \! \smallint_{t-\delay}^t \!\!  x^\top \! (\xi) \big[(\delay+\xi-t)\tilde Q_2 \big] x(\xi) \,\mathrm d \xi
\iftoggle{IsTwocolumn}{
\\[-0.8em]
&  \hspace{8.5em}
}{&}
=
\delay x^\top\!(t) \tilde Q_2 x(t)
-\int_{t-\delay}^t \!  x^\top\!(\xi) \tilde Q_{2} x(\xi)\,\mathrm d \xi 
\\[-1.6em]\nonumber
\end{align*}
give $D_{(\ref{eq:lin_RFDE})}^+V_1(x_t)$ and $D_{(\ref{eq:lin_RFDE})}^+V_2(x_t)$.   They compensate in (\ref{eq:splitting_V}) the difference between $D_{(\ref{eq:lin_RFDE})}^+V_0(x_t)$ and $D_{(\ref{eq:lin_RFDE})}^+V(x_t)$ from (\ref{eq:DfV_complete}).   
\end{proof}
The first term  $V_0(\phi)$  in (\ref{eq:splitting_V}) can be approximated by $y^\top P_{y,0}\,y$  
from a Lyapunov equation  
with $Q_0$ in  (\ref{eq:Q_quadForm}) being replaced by $Q_{0}+Q_{1}+\delay Q_{2}$, and $Q_1$ and $Q_2$ being replaced by zero.
Since $V_1(\phi)$ and $V_2(\phi)$  in (\ref{eq:splitting_V})  are analytically known, 
these terms can be treated by a numerical integration. Their contributions are added on the \mbox{(block-)}diagonal
of $P_{y,0}$, i.e. $V(\phi)\approx y^\top  P_{y} \,y$, 
\begin{align*}
P_{y}=P_{y,0}
+
\mathrm{diag}((w_k)_k)\otimes Q_1
+ 
\mathrm{diag}((w_k(\delay+\tilde \theta_k))_k)\otimes Q_2,
\end{align*} 
where $w_k$ are integration weights, cf.\ Appendix \ref{sec:Clenshaw-Curtis}.

\subsubsection{Legendre tau} 
A separate numerical treatment of $V_1$ and $V_2$ in (\ref{eq:splitting_V}) is not required if the Legendre-tau-based approach is used. However, if $Q_2$ is nonzero, 
the following modification of $Q_y$ in
 (\ref{eq:Q_quadForm}) 
should be used in (\ref{eq:Lyap_eq_zeta}) or (\ref{eq:LyapEq_y})  
\begin{align}
&Q_{y}=\mathrm{blkdiag}(Q_1,0_{n(N-1)\times n(N-1)}, Q_0) + T_{\zeta y}^\top Q_{\zeta,2}  T_{\zeta y}\nonumber 
\\
&\text{with }Q_{\zeta,2}:= \mathrm{diag}( [\;(\tfrac \delay 2 \tfrac{2}{2k+1})_{k\in\{0,\ldots,N-1\}},\; \delay\;])\otimes Q_2 
\label{eq:Qzeta2}
\end{align}   
(the right lower component $\delay Q_2$ in $Q_{\zeta,2}$ is motivated by Lemma \ref{lem:V2_LegendreTau} in the appendix). 
Despite of not being treated separately in the numerical approach, the arising contributions for $V_1(\phi)$ and $V_2(\phi)$ within the approximation of $V(\phi)$ are still of interest for the proofs in the next sections. 
They can be obtained by solving 
Lyapunov equations with $Q_{0,1,2}$ being replaced by the matrices behind the semicolon in $V_1(\phi)=V(\phi;\ldots)$ and $V_2(\phi)=V(\phi;\ldots)$ from Lemma \ref{lem:splitting_V}. 
Appendix \ref{sec:V12_LegendreTau} shows that
 the resulting Legendre-tau-based approximations of $V_1(\phi)$ and $V_2(\phi)$ 
 give even the exact value for any $\phi$ that is a polynomial of order $N-1$ or less.

\subsection{Convergence Towards the Functional} 
\label{sec:convLyap}
We are interested in the following convergence statement.  
\begin{condition} \label{cond:conv}
For any given $\phi \in C([-\delay,0],\mathbb R^n) $, the scalar value $V_y(y)$ 
converges to $V(\phi)$ as $N$ increases. 
\end{condition}
More formally,  we use the notation $y=\pi_y
(\phi)$ to emphasize that the discretization $y\in \mathbb R^{n(N+1)}$ is uniquely determined from $\phi\in C$ (depending on the discretization scheme). Additionally, to keep track of the discretization resolution~$N$, a superscript $\N{N}$ is added, e.g.,  in $V_y^\N{N}(\cdot)=V_y(\cdot)$ and $\pi_y^\N{N}(\cdot)=\pi_y(\cdot)$. Thus, Cond.~\ref{cond:conv} can be rewritten as 
\begin{align}\label{eq:convV}
\forall \phi\in C: \quad  V_y^\N{N}(\pi_y^\N{N}(\phi))\to V(\phi),\quad (N\to\infty).
\end{align}
Motivated by the numerical results in Sec.~\ref{sec:Example}, we focus in this section on the Legendre tau method. Moreover, for this discretization scheme, we benefit from existing convergence proofs for the approximation of algebraic Riccati equations from the context of optimal control \cite{Gibson.1983,Ito.1987,Ito.1985}. 

\subsubsection{Operator-based description}Henceforth, we use that any argument $\phi\in C$ 
for 
$V(\phi)$ gives rise to  an element 
\begin{align}
\big[\begin{smallmatrix} \phi \\ \phi(0)\end{smallmatrix}\big] \in C\times \mathbb R^n\;\subset \;L_2\times \mathbb R^n=M_2
\end{align}
in the product  space $M_2=L_2([-\delay,0],\mathbb R^n)\times \mathbb R^n$. Note that $(M_2,\langle\cdot,\cdot \rangle_{M_2})$ is a Hilbert space with the natural inner product
\begin{align} \label{eq:M2prod}
\left\langle
\big[\begin{smallmatrix}  \phi_1\\ r_1
\end{smallmatrix}\big], 
\big[\begin{smallmatrix}\phi_2\\ r_2
\end{smallmatrix}\big]
\right\rangle_{M_2^{}}
= \int_{-\delay}^0 \phi_1^\top\!(\theta) \,\phi_2(\theta)\,\mathrm d \theta +r_1^\top r_2, 
\end{align}
$\phi_{1,2}\in L_2$, $r_{1,2}\in \mathbb R^n$. 
Similarly to the well-known $V_{\mathbb R^n}(x)=\langle x,Px\rangle_{\mathbb R^n}=x^\top\! P x$ in the finite-dimensional ODE setting for $x\in \mathbb R^n$, a complete-type LK functional can be written as 
\begin{align}\label{eq:V_M2_operator}
V(\phi)=V_{M_2}^{}(\big[\begin{smallmatrix} \phi \\ \phi(0)\end{smallmatrix}\big]) = \left\langle \big[\begin{smallmatrix} \phi \\ \phi(0)\end{smallmatrix}\big], \mathscr P \big[\begin{smallmatrix} \phi \\ \phi(0)\end{smallmatrix}\big]\right\rangle_{M_2}
\end{align}
with a self-adjoint operator $\mathscr P:M_2\to M_2$. 
Consider the splitting 
$V=V_0+V_{12}$ with $V_{12}=V_1+V_2$ from Lemma \ref{lem:splitting_V} ($\tilde Q_1=Q_1,\tilde Q_2=Q_2$). 
For the first part, which becomes  
\begin{align}\label{eq:V0_P0}
V_0(\phi)
=\langle \big[\begin{smallmatrix} \phi \\ \phi(0)\end{smallmatrix}\big], \mathscr P_0 \big[\begin{smallmatrix} \phi \\ \phi(0)\end{smallmatrix}\big]\rangle_{M_2}, 
\end{align}
the self-adjoint operator $\mathscr P_0:M_2\to M_2$
is described by suboperators on $L_2$ and $\mathbb R^n$ according to
\begin{align} \label{eq:operatorP}
\mathscr P_0\begin{bmatrix} \phi \\ r \end{bmatrix}
&=
\begin{bmatrix} \mathscr P_{zz} \phi  + \mathscr P_{zx} r \\ \mathscr P_{xz} \phi  + \mathscr P_{xx} r \end{bmatrix}
=
\begin{bmatrix}v \\w \end{bmatrix}, 
\text{ with }
\\
\begin{bmatrix}v(\theta)\\w\end{bmatrix}
&=
\begin{bmatrix}
\int_{-\delay}^0 P_\mathrm{zz}(\theta,\eta)\, \phi  (\eta)\,\mathrm d \eta + P_\mathrm{zx} (\theta) \,r
\\
\int_{-\delay}^0 P_\mathrm{xz}(\eta) \,\phi (\eta)\,\mathrm d\eta + P_\mathrm{xx}\,r
\end{bmatrix}. 
\label{eq:P_operator_kernel_func}
\end{align} 
Thus, (\ref{eq:complete_LK}) is regained by
(\ref{eq:V0_P0}), using (\ref{eq:operatorP})  with $r=\phi(0)$, 
\begin{align}\label{eq:V_operator}
\iftoggle{IsTwocolumn}{&}{}
V_0(\phi )
\iftoggle{IsTwocolumn}{}{&}
\stackrel{
(\ref{eq:M2prod})}= 
\int_{-\delay}^0 \phi  ^\top (\theta) \,v(\theta)\,\mathrm d \theta + \phi ^\top\!(0) \, w 
\\
&\stackrel{(\ref{eq:P_operator_kernel_func})\!}
=
\int_{-\delay}^0 \phi ^\top (\theta) \,
\Big(\int_{-\delay}^0 P_\mathrm{zz}(\theta,\eta)\,\phi(\eta)\,\mathrm d \eta + P_\mathrm{zx} (\theta) \,\phi(0)
\Big) \,\mathrm d \theta 
\nonumber\\
&\qquad 
+ \phi^\top\!(0) \, \Big(\int_{-\delay}^0 P_\mathrm{xz}(\eta) \,\phi(\eta)\,\mathrm d\eta + P_\mathrm{xx}\,\phi(0)\Big)  
\end{align} 
(to be more precise,  (\ref{eq:complete_LK}) with $P_\mathrm{zz,diag}(\theta)\equiv 0_{n\times n}$). 
The missing part $V_{12}=V_{1}+V_{2}$ in (\ref{eq:splitting_V}) can also be written as
\begin{align}
V_{12}(\phi)
&= 
\int_{-\delay}^0 \phi ^\top \!(\theta) \,
\big(Q_1+(\delay+\theta)Q_2\big) \,\phi(\theta)
\,\mathrm d \theta
\\*
&= \langle \big[\begin{smallmatrix} \phi \\ \phi(0)\end{smallmatrix}\big], \mathscr P_{12} \big[\begin{smallmatrix} \phi \\ \phi(0)\end{smallmatrix}\big]\rangle_{M_2}
\end{align}
based on the multiplication operator 
\begin{align} \label{eq:operatorP12}
&\mathscr P_{12} \begin{bmatrix} \phi \\ r\end{bmatrix}
=
\begin{bmatrix} \mathscr P_{zz,\mathrm{diag}} \phi  \\ 0_n \end{bmatrix}
=
\begin{bmatrix}v \\0_n \end{bmatrix}, 
\text{ with }
\\*
&v(\theta)
=
P_\mathrm{zz,diag}(\theta) \phi(\theta) = 
(Q_1+(\delay+\theta)Q_2) \phi(\theta).
\nonumber 
\end{align}
Nevertheless, we are going to treat $V_{12}$ separately\footnote{The term $\phi^\top\!(-\delay) Q_1 \phi(-\delay)$ would require an unbounded operator $\mathscr Q$ in the Lyapunov equation (\ref{eq:operator_Lyap_eq1}). Moreover, $\mathscr P_{12}$ is not compact. 
 }. 

\subsubsection{Convergence towards $V_0$} The operator $\mathscr P_0$ in (\ref{eq:V0_P0}) satisfies an operator-valued Lyapunov equation, cf.~\cite{Curtain.2020,Plischke.2005}. 
Its right-hand side is based on the right-hand side of (\ref{eq:DfV_complete}). Because of the splitting approach, the latter is $D_{(\ref{eq:lin_RFDE})}^+V_0(x_t)=x^\top(t) \tilde Q x(t)$ with $\tilde Q=Q_0+Q_1+\delay Q_2$, or, for $x_t=\phi$,  
\begin{align}
D_{(\ref{eq:lin_RFDE})}^+V_0(\phi)=-\phi^\top\!(0)\, \tilde Q\, \phi(0)=-\langle  \big[\begin{smallmatrix} \phi \\ \! \phi(0)\!\end{smallmatrix}\big], \mathscr Q  \big[\begin{smallmatrix} \phi \\ \!\phi(0)\!\end{smallmatrix}\big]\rangle_{M_2},
\end{align} 
 $ \mathscr Q \big[\begin{smallmatrix} \phi \\ \phi(0) \end{smallmatrix}\big] =  \big[\begin{smallmatrix} 0_{n_{[-\delay,0]}} \\ \tilde Q \,\phi(0) \end{smallmatrix}\big]$. 
Therefore, the operator-valued Lyapunov equation for the self-adjoint operator $\mathscr P_0=\mathscr P_0^*$ 
reads 
\begin{align} \label{eq:operator_Lyap_eq1}
\underbrace{
\langle \psi ,\mathscr P_0 \mathscr A \psi\rangle_{M_2} + \langle \psi , \mathscr A^*\mathscr P_0 \psi\rangle_{M_2}
}_{= 2 \langle  \psi , \mathscr P_0 \mathscr A \psi\rangle_{M_2} }
  = - \langle \psi ,\mathscr Q\psi\rangle_{M_2},  
\end{align}
$
\forall \psi \in D(\mathscr A)\subset M_2$, cf.~\cite{Curtain.2020,Plischke.2005}, 
  where $\mathscr A$ is the infinitesimal generator of the $C_0$-semigroup of solution operators on $M_2$ (which for linear RFDEs   is as well an appropriate state space), and $D(\mathscr A)$ is its domain. See, e.g.,  \cite{Curtain.2020} for  background on $\mathscr A$.

From the ODE-based approach in Sec.~\ref{sec:Partial_Lyap_Eq}, we  obtain  an approximation  $V_0(\phi)\approx V_{y,0}(y)=y^\top P_{y,0}\, y$, 
or, in the  
 notation of (\ref{eq:convV}), 
$V_{y,0}^\N{N} (\pi_y^\N{N} (\phi))$. 
Similarily to the exact 
$V_0(\phi)$ in (\ref{eq:V0_P0}),  
this approximation 
can be described  via 
\begin{align} \label{eq:V0N_P0N}
V_{y,0}^\N{N} (\pi_y^\N{N} (\phi)) 
= \langle \big[\begin{smallmatrix} \phi \\ \phi(0)\end{smallmatrix}\big], \mathscr P_0^\N{N} \big[\begin{smallmatrix} \phi \\ \phi(0)\end{smallmatrix}\big]\rangle_{M_2}
\end{align} 
with an approximated operator $\mathscr P_0^\N{N}$. 
Moreover, similarily to the exact operator $\mathscr P_0$ from  (\ref{eq:operator_Lyap_eq1}), this approximated operator $\mathscr P_0^\N{N}$ also satisfies an operator-valued Lyapunov equation, 
\begin{align} \label{eq:operator_Lyap_eq_approx}
2 \langle \psi ,\mathscr P_0^\N{N} \mathscr A^\N{N} \psi\rangle_{M_2} 
 = - \langle \psi ,\mathscr Q \psi\rangle_{M_2}  ,
\end{align}
which, however,  only 
relies on an approximation $\mathscr A^\N{N}$ instead of $\mathscr A$. See \cite{Ito.1987} 
for details.  The matrices  $A_\zeta$ or, equivalently, $A_y$ in Sec.~\ref{sec:ODE_Approx} are coordinate representations
of 
that 
$\mathscr A^\N{N}$.

It has to be shown that, $\forall \phi\in C$, the scalar value $V_0(\phi)$ in (\ref{eq:V0_P0})  is indeed  the limit of its approximations in (\ref{eq:V0N_P0N}) as ${N\to\infty}$. In terms of the operators, weak\footnote{\label{fn:weakConv}
The operator sequence $\{\mathscr P^\N{N}\}_{N}$ converges weakly to $\mathscr P$ if $\forall \varphi,\psi \in M_2$: $\lim\limits_{N\to\infty} \langle \varphi, \mathscr P ^\N{N}\psi\rangle_{M_2}
= \langle \varphi, \mathscr P \psi\rangle_{M_2}$  (i.e., $\forall \psi\in M_2:$ $\mathscr P^N \psi\stackrel{\text{weakly}}\to\mathscr P \psi$). 
It converges strongly if $\forall\psi \in M_2: \lim\limits_{N\to\infty} \| \mathscr P ^\N{N} \!\psi-\mathscr P \psi\|_{M_2}^{}\!=0$.
\\
The implications `operator norm conv.' $\Rightarrow$ `strong conv.' $\Rightarrow $ `weak conv.' hold. 
} 
 operator convergence $\mathscr P_0^\N{N}\stackrel{\text{weakly}}\to \mathscr P_0$ suffices for that objective. 
\begin{lemma}\label{lem:P0_weak_conv} 
Let (\ref{eq:V0N_P0N}) describe a Legendre-tau-based result 
for $V_0(\phi)$. Assume 
$\{
\|
\mathscr P_0^\N{N}
\|
\}_N$ is bounded\footnote{
To compute 
the norm of 
the self-adjoint operator $\mathscr P_0^\N{N}$  via  $P_{y,0}$, $P_{\zeta,0}$, or $P_{\chi,0}$, note that $\|\mathscr P_0^\N{N}\|=\sup_{\|(\phi,\phi(0))\|_{M_2}\leq 1}
\langle \big[\begin{smallmatrix} \phi \\ \phi(0)\end{smallmatrix}\big], \mathscr P_0^\N{N} \big[\begin{smallmatrix} \phi \\ \phi(0)\end{smallmatrix}\big]\rangle_{M_2}$. 
Considering, e.g., Sec.~\ref{sec:combined_coordinates}, we obtain $\|\mathscr P_0^\N{N}\|=\|P_{\tilde \chi,0}\|_2^{}$  where $P_{\tilde \chi,0}=T_{\tilde\chi \chi}^{-\top}P_{\chi,0} T_{\tilde\chi \chi}^{-1}$ and where $T_{\tilde\chi \chi}$ is such that $\inf_{\phi}\|(\phi,\phi(0))\|_{M_2}^2=\chi^\top G_\chi \chi=\|T_{\tilde\chi \chi} \chi\|_2^2$ with the infimum being taken over all $(\phi,\phi(0))\in M_2$  that have the discretization $\chi$. 
The latter is attained by (\ref{eq:DiscontinuousEndpoint}) with the  metric coefficients $G_\chi=\mathrm{diag}([(\frac \delay 2 \frac 2 {2k+1})_{k=0\ldots N-1},1])\otimes I_n$. Thus,   
$T_{\tilde\chi \chi}=G_\chi^{1/2}$.
}, and the existence and uniqueness conditions from Lemma \ref{lem:LyapEquationProperties} and  Rem.~\ref{rem:LyapCond} hold. Then 
$\mathscr P_0^\N{N}$ converges weakly to $\mathscr P_0$ as $N\to\infty$. 
\end{lemma}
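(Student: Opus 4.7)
The plan is to use the standard extraction-and-identification strategy familiar from convergence proofs for approximate operator Lyapunov and Riccati equations in \cite{Gibson.1983,Ito.1987}. The approximate equation (\ref{eq:operator_Lyap_eq_approx}) only involves the bounded finite-dimensional operator $\mathscr A^\N{N}$, while the target equation (\ref{eq:operator_Lyap_eq1}) involves the unbounded generator $\mathscr A$; I would nevertheless argue entirely in the weak operator topology, leveraging the assumed uniform bound on $\{\|\mathscr P_0^\N{N}\|\}_N$. First, since $M_2$ is a separable Hilbert space and the norms are uniformly bounded, a standard sequential weak compactness result (Banach--Alaoglu with a diagonal subsequence on a countable dense set) delivers a subsequence $\mathscr P_0^\N{N_j}$ that converges weakly as $j\to\infty$ to some bounded operator $\mathscr P^\star$ on $M_2$. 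Self-adjointness of $\mathscr P^\star$ is inherited from the self-adjointness of each $\mathscr P_0^\N{N_j}$, the latter being guaranteed by Lemma \ref{lem:LyapEquationProperties}b applied to the matrix Lyapunov equation (\ref{eq:LyapEq_y}) that underlies (\ref{eq:V0N_P0N}).

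Second, I would pass to the limit in (\ref{eq:operator_Lyap_eq_approx}) along this subsequence for an arbitrary fixed $\psi \in D(\mathscr A)$. Writing $\mathscr A^\N{N_j}\psi = (\mathscr A^\N{N_j}-\mathscr A)\psi + \mathscr A \psi$, the splitting
\begin{align*}
\langle \psi,\mathscr P_0^\N{N_j}\mathscr A^\N{N_j}\psi\rangle_{M_2}
=\langle \psi,\mathscr P_0^\N{N_j}(\mathscr A^\N{N_j}-\mathscr A)\psi\rangle_{M_2}+\langle \psi,\mathscr P_0^\N{N_j}\mathscr A\psi\rangle_{M_2}
\end{align*}
reduces the analysis to two ingredients. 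The first summand vanishes by Cauchy--Schwarz using the uniform bound $\sup_j\|\mathscr P_0^\N{N_j}\|<\infty$ together with the consistency property $\mathscr A^\N{N_j}\psi\to\mathscr A\psi$ strongly in $M_2$ on $D(\mathscr A)$; the second summand tends to $\langle \psi,\mathscr P^\star\mathscr A\psi\rangle_{M_2}$ by weak convergence. Since the right-hand side $-\langle \psi,\mathscr Q\psi\rangle_{M_2}$ is independent of $N$, the limit $\mathscr P^\star$ fulfills (\ref{eq:operator_Lyap_eq1}) on $D(\mathscr A)$. By the Lyapunov condition invoked in Rem.~\ref{rem:LyapCond} (the operator analogue of Lemma \ref{lem:LyapEquationProperties}a) this self-adjoint solution is unique, so $\mathscr P^\star=\mathscr P_0$. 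The same argument applies to every weakly convergent subsequence of $\{\mathscr P_0^\N{N}\}_N$, so all cluster points coincide, and combined with the uniform bound this yields weak convergence of the full sequence.

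The main obstacle I expect is the consistency step $\mathscr A^\N{N}\psi\to\mathscr A\psi$ strongly in $M_2$ for $\psi\in D(\mathscr A)$, which is delicate because $\mathscr A$ is unbounded and $\mathscr A^\N{N}$ acts only on a finite-dimensional polynomial subspace, so it cannot be reduced to a straightforward interpolation error estimate without accounting for the boundary condition encoded in $D(\mathscr A)$. For the Legendre tau discretization this is precisely where one would invoke the approximation estimates for the generator established in \cite[Sec.~5]{Ito.1985} and \cite{Ito.1987}; the remainder of the argument is then routine functional analysis. This domain-of-validity is also the reason why the lemma is phrased for the Legendre tau method and a separate treatment would be needed to cover Chebyshev collocation.
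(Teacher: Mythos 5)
Your argument is correct in outline, and it is in substance the proof that the paper outsources: the paper's own proof of this lemma is a one-line citation of \cite[Thm.~5.1~(i)]{Ito.1987} (with zero input operator), and what you have written is precisely the extraction-and-identification strategy that underlies that reference and \cite[Thm.~6.7]{Gibson.1983} -- uniform boundedness gives a weakly convergent subsequence, consistency of the generator approximation lets one pass to the limit in the approximate Lyapunov equation, uniqueness of the solution of (\ref{eq:operator_Lyap_eq1}) identifies every cluster point with $\mathscr P_0$, and the subsequence principle upgrades this to convergence of the full sequence. Two technical points deserve mention. First, the approximate equation (\ref{eq:operator_Lyap_eq_approx}) is, strictly speaking, a finite-dimensional identity: it holds with the orthogonal projection onto the $N$-th approximating subspace inserted on both sides, so the limit passage for a fixed $\psi\in D(\mathscr A)$ additionally uses strong convergence of these projections to the identity (a routine repair, but one your write-up glosses over when asserting the right-hand side is ``independent of $N$''). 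Second, you correctly identify the consistency $\mathscr A^\N{N}\psi\to\mathscr A\psi$ on $D(\mathscr A)$ as the load-bearing step that cannot be obtained by soft arguments; for the Legendre tau scheme this is exactly the content of \cite{Ito.1985,Ito.1987}, so your proof does not eliminate the dependence on those references but rather makes explicit where it enters, which is arguably more informative than the paper's bare citation.
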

\begin{proof} 
See \cite[Thm.~5.1 (i)]{Ito.1987}  with zero input operator and the uniqueness conditions from Sec.~\ref{sec:ExistenceUniqueness}.  
\end{proof}
In fact, this result is not at all special to the Legendre tau   method. An alternative proof from \cite[Thm.~6.7]{Gibson.1983}
applies to any discretization scheme
that satisfies 
standard conditions proving 
convergence of numerical solutions for $(x_t,x(t))$ in $M_2$. 
Lemma \ref{lem:P0_weak_conv} relies on uniform boundedness and existence assumptions. 
In the following we show that these 
can be ignored in the case of an exponentially stable RFDE equilibrium. Nevertheless, while simplifying the considerations, 
stability of the equilibrium is no necessary condition in the 
derivations. 
\begin{lemma}
If the RFDE equilibrium is exponentially stable, then the assumptions in Lemma \ref{lem:P0_weak_conv} hold. 
\end{lemma}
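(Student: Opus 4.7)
The plan is to verify, one by one, the two assumptions of Lemma \ref{lem:P0_weak_conv}: (i) the existence and uniqueness conditions from Lemma \ref{lem:LyapEquationProperties} and Remark \ref{rem:LyapCond}, and (ii) uniform boundedness of the operator-norm sequence $\{\|\mathscr P_0^\N{N}\|\}_N$. Both hinge on translating exponential stability of the RFDE into suitable spectral and decay estimates for both the RFDE generator $\mathscr A$ and its finite-dimensional approximations $\mathscr A^\N{N}$.

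For (i), exponential stability of the RFDE equilibrium forces all characteristic roots $s$ to satisfy $\Re(s)<0$. Hence if $s$ lies in the spectrum of $\mathscr A$ then $-s$ has positive real part and cannot be a characteristic root, i.e.\ the Lyapunov condition of Remark \ref{rem:LyapCond} is satisfied, yielding existence and uniqueness of $\mathscr P_0$. For the approximating ODEs, Condition \ref{cond:stabilityPreservation} (which Legendre tau is known to satisfy) delivers, for $N$ large enough, $\sigma(A_y^\N{N})\subset\mathbb{C}^-$. This immediately gives $\sigma(A_y^\N{N})\cap(-\sigma(A_y^\N{N}))=\emptyset$, so Lemma \ref{lem:LyapEquationProperties}a applies and ensures unique solvability of the discrete Lyapunov equation (\ref{eq:LyapEq_y}), equivalently of (\ref{eq:operator_Lyap_eq_approx}).

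For (ii), I would use the integral representation of the Lyapunov solution: since $\mathscr A^\N{N}$ generates a $C_0$-semigroup $\mathscr T^\N{N}(t)$ on $M_2$ (which, on the finite-dimensional range of $\mathscr A^\N{N}$, is simply $\exp(\mathscr A^\N{N} t)$), the solution of (\ref{eq:operator_Lyap_eq_approx}) admits
\begin{equation*}
\mathscr P_0^\N{N}=\int_0^\infty \bigl(\mathscr T^\N{N}(t)\bigr)^{\!*}\mathscr Q\,\mathscr T^\N{N}(t)\,\mathrm d t.
\end{equation*}
If one can establish a uniform exponential decay estimate $\|\mathscr T^\N{N}(t)\|_{M_2}\leq M\,\mathrm e^{-\omega t}$ with constants $M,\omega>0$ independent of $N$ (for $N$ large), then, since $\mathscr Q$ is bounded and $N$-independent, $\|\mathscr P_0^\N{N}\|\leq M^2\|\mathscr Q\|/(2\omega)$ uniformly in $N$, proving (ii).

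The main obstacle is precisely the uniform exponential decay of $\mathscr T^\N{N}(t)$. Condition \ref{cond:stabilityPreservation} only asserts that $A_y^\N{N}$ is Hurwitz for sufficiently large $N$; it does not a priori guarantee $N$-independent decay constants, nor that no ``creeping'' of eigenvalues toward the imaginary axis occurs along the sequence. For Legendre tau, however, this is known: one combines spectral convergence of $\mathscr A^\N{N}$ to $\mathscr A$, together with standard Trotter--Kato-type arguments as exploited in \cite{Ito.1987,Gibson.1985,Ito.1985}, to transfer the exponential stability of the original $C_0$-semigroup $\mathscr T(t)$ to a uniform exponential stability of the approximating semigroups $\mathscr T^\N{N}(t)$. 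Invoking that body of results completes the verification of both assumptions in Lemma \ref{lem:P0_weak_conv}.
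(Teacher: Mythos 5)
Your proposal is correct and follows essentially the same route as the paper: existence/uniqueness from the Hurwitz property of $A_y$ (Prop.~\ref{prop:properties}), and uniform boundedness of $\{\|\mathscr P_0^\N{N}\|\}_N$ via the integral representation $\mathscr P_0^\N{N}=\int_0^\infty (\mathscr T^\N{N}(t))^*\mathscr Q\,\mathscr T^\N{N}(t)\,\mathrm dt$ together with an $N$-uniform decay bound $\|\mathscr T^\N{N}(t)\|\leq M\mathrm e^{-\beta t}$. The ``main obstacle'' you flag is resolved exactly as you suggest -- the paper takes the uniform constants $M,\beta$ directly from the Legendre-tau stability-preservation theorem of Ito and Teglas rather than assembling them from a separate Trotter--Kato argument, so nothing is missing.
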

\begin{proof}
Let $\mathscr T(t):M_2\to M_2; \big[\begin{smallmatrix} x_0 \\ x_0(0)\end{smallmatrix} \big]\mapsto \big[\begin{smallmatrix} x_t\\x(t)\end{smallmatrix} \big]=\mathscr T(t) \big[\begin{smallmatrix} x_0\\x_0(0)\end{smallmatrix} \big]$ be the solution operator, and $\mathscr T^\N{N}(t)$ its approximation (represented by $\mathrm e^{A_y^\N{N} t}$). Due to the stability preservation property from \cite[Thm.~5.3]{Ito.1985}, $\exists M\geq 1, \beta>0,\bar N\in \mathbb N$, such that  $\forall N\geq \bar N: \|\mathscr T^\N{N}(t)\|\leq M \mathrm e ^{ -\beta t}$. Therefore, the improper integral formula  $\mathscr P^\N{N}\psi = \int_0^\infty (\mathscr T^\N{N})^*(s)\,\mathscr Q \,\mathscr T^\N{N}(s) \psi \,\mathrm ds$ is applicable, see, e.g., \cite{Gibson.1983}. Thus, with  $\|\mathscr Q \|=\|\tilde Q\|_2$, the  operators $\mathscr P^\N{N}$ are uniformly  bounded by 
$
\| \mathscr P^\N{N} \|  \leq \int_0^\infty \|\tilde Q \|_2 \|\mathscr T^\N{N}(s)\|^2 \,\mathrm d s
 \leq 
\|\tilde Q\|_2 \tfrac{M^2}{2\beta}$. Moreover, the existence and uniqueness assumptions hold by Prop.~\ref{prop:properties}.
\end{proof}
The convergence 
towards $V_0(\phi)$ does not require more than the thus established weak convergence $\mathscr P_0^\N{N} \stackrel{\text{weakly}}\to \mathscr P_0$.
However, the following stronger 
result 
will become  
helpful in Sec.~\ref{sec:lowerBoundConvergence}. 
\begin{lemma}\label{lem:P_norm_conv}
Let (\ref{eq:V0N_P0N}) describe a Legendre-tau-based result 
for $V_0(\phi)$. 
If the RFDE equilibrium is exponentially stable, 
then 
$\mathscr P_0^\N{N}$ converges in operator norm  to $\mathscr P_0$, i.e.,  it holds 
$\|\mathscr P_0^\N{N}-\mathscr P_0\|\to 0$ as $ N\to\infty $. 
\end{lemma}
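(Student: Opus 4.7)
My plan is to upgrade the weak convergence established in Lemma~\ref{lem:P0_weak_conv} to operator norm convergence by exploiting the integral representation of the Lyapunov operators together with the fact that the cost operator $\mathscr Q$ is of finite rank. The exponential stability assumption lets the analysis be phrased in terms of the improper integrals
\begin{align*}
\mathscr P_0\psi = \int_0^\infty \mathscr T^*(s)\,\mathscr Q\,\mathscr T(s)\,\psi\,\mathrm ds,
\qquad
\mathscr P_0^\N{N}\psi = \int_0^\infty (\mathscr T^\N{N})^*(s)\,\mathscr Q\,\mathscr T^\N{N}(s)\,\psi\,\mathrm ds,
\end{align*}
which are available because the uniform bound $\|\mathscr T^\N{N}(s)\|\leq M\mathrm e^{-\beta s}$ (for $N\geq\bar N$) from the preceding lemma, combined with the analogous bound on $\mathscr T(s)$ guaranteed by the exponentially stable equilibrium, makes both integrals converge in the operator norm.

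First I would fix $\varepsilon>0$ and choose $T>0$ so large that the tails satisfy $\|\mathscr Q\|\int_T^\infty\bigl(\|\mathscr T^*(s)\|\|\mathscr T(s)\|+\|\mathscr T^{\N N\!*}\!(s)\|\|\mathscr T^\N{N}(s)\|\bigr)\mathrm ds\le \|\tilde Q\|_2\, M^2\mathrm e^{-2\beta T}/\beta<\varepsilon/2$ uniformly in $N\ge\bar N$. This step is routine and depends only on the uniform exponential bounds.

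The decisive step is to show that on the compact interval $[0,T]$ the integrand converges to zero in operator norm, uniformly in $s$. Here I use the factorization
\begin{align*}
\mathscr Q=\mathscr B\,\tilde Q\,\mathscr B^*,
\qquad \mathscr B\colon\mathbb R^n\to M_2,\;\; v\mapsto \bigl[\begin{smallmatrix}0_{n_{[-\delay,0]}}\\ v\end{smallmatrix}\bigr],
\end{align*}
so that $\mathscr Q$ has rank at most $n$ and is in particular compact. Writing the difference of integrands as
\begin{align*}
(\mathscr T^\N{N})^*(s)\mathscr Q\mathscr T^\N{N}(s)-\mathscr T^*(s)\mathscr Q\mathscr T(s)
=\bigl((\mathscr T^\N{N})^*(s)-\mathscr T^*(s)\bigr)\mathscr Q\,\mathscr T^\N{N}(s)
+\mathscr T^*(s)\mathscr Q\bigl(\mathscr T^\N{N}(s)-\mathscr T(s)\bigr),
\end{align*}
both summands have the form \emph{(bounded family)}$\,\cdot\,\mathscr Q\,\cdot\,$\emph{(strongly convergent uniformly bounded family)}. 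By the Trotter--Kato-type convergence for the Legendre tau semigroup approximation (\cite[Thm.~5.3]{Ito.1985}) together with the analogous convergence of the adjoint semigroups (available in this variational setting and discussed in \cite{Gibson.1983,Ito.1987}), the factors $\mathscr T^\N{N}(s)\to\mathscr T(s)$ and $(\mathscr T^\N{N})^*(s)\to\mathscr T^*(s)$ hold strongly, uniformly in $s\in[0,T]$. The standard lemma that a compact operator composed with a uniformly bounded strongly convergent net converges in operator norm (via the precompactness of the image of the unit ball under the compact factor) then delivers operator norm convergence of each summand, again uniformly in $s\in[0,T]$. Integrating over $[0,T]$ gives a bound $<\varepsilon/2$ for all sufficiently large $N$.

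The main obstacle is precisely the last part: lifting mere strong convergence of the approximating semigroups and their adjoints to operator norm convergence of the integrand. This is where the finite-rankness of $\mathscr Q$ (coming from the fact that $\tilde Q$ acts only on the pointwise component $\phi(0)\in\mathbb R^n$ of $\bigl[\begin{smallmatrix}\phi\\\phi(0)\end{smallmatrix}\bigr]\in M_2$) is indispensable; without compactness, the argument would only yield strong convergence of $\mathscr P_0^\N{N}\to\mathscr P_0$, which is already stronger than what Lemma~\ref{lem:P0_weak_conv} provides but still weaker than the norm convergence asserted here. Combining the two estimates $\|\int_0^T\cdots\|<\varepsilon/2$ and $\|\int_T^\infty\cdots\|<\varepsilon/2$ yields $\|\mathscr P_0-\mathscr P_0^\N{N}\|<\varepsilon$ for all sufficiently large $N$, which is the claim.
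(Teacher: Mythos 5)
Your proposal is correct, but it takes a genuinely different route from the paper: the paper's proof is essentially a citation of \cite[Thm.~6.9]{Gibson.1983}, which establishes convergence even in the trace norm, together with a pointer to \cite[Thm.~2.2]{Ito.1985} for the strong convergence of the adjoint approximations $(\mathscr T^\N{N})^*(t)$. You instead reprove the operator-norm statement directly from the improper-integral representations of $\mathscr P_0$ and $\mathscr P_0^\N{N}$: a uniform tail estimate from the exponential bounds, plus the observation that $\mathscr Q=\mathscr B\tilde Q\mathscr B^*$ has rank at most $n$, so that composing it with uniformly bounded, strongly convergent families (and their adjoints, uniformly on compact time intervals, as Trotter--Kato-type results provide) yields operator-norm convergence of the integrand. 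The two non-trivial ingredients you rely on --- strong convergence of both $\mathscr T^\N{N}(t)$ and $(\mathscr T^\N{N})^*(t)$, and the uniform exponential decay for $N$ large --- are exactly the hypotheses the paper identifies as the ones Gibson's theorem needs, so the approaches rest on the same foundations. What your argument buys is self-containedness and transparency (the finite-rankness of $\mathscr Q$ is visibly the reason weak convergence upgrades to norm convergence); what it gives up is the stronger trace-norm conclusion of Gibson's theorem, which, however, is not needed for this lemma or its use in Thm.~\ref{thm:tightBound}. One presentational caveat: in your second summand $\mathscr T^*(s)\mathscr Q(\mathscr T^\N{N}(s)-\mathscr T(s))$ the compact factor sits to the \emph{left} of the strongly convergent one, so the norm convergence there is again obtained through the adjoints (finite rank of $\mathscr Q$ reduces it to $\|((\mathscr T^\N{N})^*(s)-\mathscr T^*(s))e_i\|\to 0$ for finitely many vectors $e_i$); you should make explicit that both summands, not just the first, consume the adjoint convergence.
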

\begin{proof}
See \cite[Thm.~6.9]{Gibson.1983}, where even convergence in the trace norm \cite[p.~111]{Gibson.1983} is proven. 
The result requires that not only the approximations of the solution operator $\mathscr T(t)$
converge strongly\footnotemark[\getrefnumber{fn:weakConv}], but also those of its adjoint $\mathscr T^*(t)$, which for the Legendre tau method is proven in \cite[Thm.~2.2]{Ito.1985}. 
\end{proof}

\subsubsection{Convergence towards $V$} To prove Cond.~\ref{cond:conv} on convergence towards $V=V_0+V_{12}$, it only remains to include 
$V_{12}$.
\begin{theorem}\label{prop:LegendreTau_Vconv}
If the RFDE equilibrium is exponentially stable or, more generally, if the assumptions of Lemma \ref{lem:P0_weak_conv} hold, then Cond.~\ref{cond:conv} applies for the Legendre-tau-based approach 
(with the Lyapunov equation right-hand side from (\ref{eq:Qzeta2})). 
\end{theorem}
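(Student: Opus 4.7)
The plan is to exploit the splitting $V=V_0+V_{12}$ from Lemma \ref{lem:splitting_V} (with the shifts $\tilde Q_1=Q_1$, $\tilde Q_2=Q_2$). The modification of the Lyapunov-equation right-hand side prescribed in (\ref{eq:Qzeta2}) additively decomposes the discrete value into
\begin{align*}
V_y^{\N{N}}(\pi_y^{\N{N}}(\phi))=V_{y,0}^{\N{N}}(\pi_y^{\N{N}}(\phi))+V_{y,12}^{\N{N}}(\pi_y^{\N{N}}(\phi)),
\end{align*}
so establishing (\ref{eq:convV}) reduces to two separate limit statements, one for each summand.

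For the first summand I would invoke identity (\ref{eq:V0N_P0N}) to write $V_{y,0}^{\N{N}}(\pi_y^{\N{N}}(\phi))=\langle\big[\begin{smallmatrix}\phi\\ \phi(0)\end{smallmatrix}\big],\mathscr P_0^{\N{N}}\big[\begin{smallmatrix}\phi\\ \phi(0)\end{smallmatrix}\big]\rangle_{M_2}$. Under the hypotheses of the theorem (either directly the boundedness/existence conditions of Lemma \ref{lem:P0_weak_conv}, or exponential stability, which implies them), that lemma delivers weak operator convergence $\mathscr P_0^{\N{N}}\to\mathscr P_0$ on $M_2$. Applied to the single fixed element $\big[\begin{smallmatrix}\phi\\ \phi(0)\end{smallmatrix}\big]\in M_2$, weak convergence is exactly the desired pointwise convergence $V_{y,0}^{\N{N}}(\pi_y^{\N{N}}(\phi))\to V_0(\phi)$.

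For the second summand, the discussion closing Sec.~\ref{sec:splitting} asserts that the Legendre-tau-based discretization reproduces $V_1(\phi)$ and $V_2(\phi)$, hence $V_{12}(\phi)$, exactly whenever $\phi$ is a polynomial of degree at most $N-1$. Since $V_{12}$ is a bounded quadratic form on $L_2$ (multiplication by the bounded kernel $Q_1+(\delay+\theta)Q_2$, cf.~(\ref{eq:operatorP12})) and the truncation underlying the Legendre tau method is an $L_2$-orthogonal projection of norm one, a standard three-$\varepsilon$ argument combined with Weierstrass density of polynomials in $C([-\delay,0],\mathbb R^n)$ transfers this exactness on polynomials into the limit $V_{y,12}^{\N{N}}(\pi_y^{\N{N}}(\phi))\to V_{12}(\phi)$ for every $\phi\in C$. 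Summing the two convergent sequences yields (\ref{eq:convV}).

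The main obstacle I anticipate is the bookkeeping for the $V_{12}$-step: one must verify carefully that, after the coordinate change to Legendre coefficients, the discrete quadratic form really evaluates $V_{12}$ at the truncated Legendre series of $\phi$ (rather than at $\phi$ itself), and that the resulting functional is uniformly $L_2$-continuous in $N$. Once this identification is made precise, the combination of Lemma \ref{lem:P0_weak_conv}, the exactness-on-polynomials statement, and Weierstrass density closes the argument without further ingredients.
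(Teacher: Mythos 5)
Your proposal is correct and follows essentially the same route as the paper: the same splitting $V=V_0+V_{12}$ justified by linearity of the Lyapunov equation in $Q_y$, weak convergence of $\mathscr P_0^{\N{N}}$ from Lemma \ref{lem:P0_weak_conv} for the $V_0$ part, and the identification $V_{y,12}^{\N{N}}(\pi_y^{\N{N}}(\phi))=V_{12}(\tilde\phi^{(N-1)})$ from the lemmata of Appendix \ref{sec:V12_LegendreTau} for the remainder. Your density-plus-norm-one-projection argument for the $V_{12}$ step is just the standard proof of the $L_2$-convergence of Legendre truncations that the paper instead cites from \cite[Thm.~6.2.3]{Funaro.1992}, so the two arguments coincide in substance.
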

\begin{proof}
Since $P_y$ depends linearly on $Q_y$ in the Lyapunov equation (\ref{eq:LyapEq_y}), the approximation of $V$ 
is the superposition of the 
approximations of $V_0$ and $V_{12}=V_1+V_2$ from Lemma \ref{lem:splitting_V}. 
For the first one, the convergence, $\forall  \phi\in C: \quad  V_{y,0}^\N{N}(\pi_y^\N{N}(\phi))\to V_0(\phi)$ as $N\to\infty$, is 
a consequence of the weak\footnotemark[\getrefnumber{fn:weakConv}] convergence of $\mathscr P_0^\N{N}$ proven in Lemma \ref{lem:P0_weak_conv}. Concerning the second one, the lemmata in Sec.~\ref{sec:V12_LegendreTau} show that 
$V_{12}(\phi)$ 
is approximated by $V_{y,12}^\N{N}(\pi_y^\N{N}(\phi))=V_{12}(\tilde \phi^{(N-1)})$, where $\tilde \phi^{(N-1)}$ is a Legendre series truncation of $\phi$. The convergence, $\forall  \phi\in C: \quad  V_{12}(\tilde \phi^{(N-1)}) \to V_{12}(\phi)$ as $N\to\infty$, follows from the $L_2$-convergence 
of the involved Legendre series truncation, 
$\|\phi-\tilde \phi^{(N-1)}\|_{L_2}\to 0$  as $N\to \infty$  \cite[Thm.~6.2.3]{Funaro.1992}, combined with the 
continuity\footnote{\label{fn:continuous_quadraticForm}
A quadratic form $V(x)= \langle x, \mathscr P x\rangle_X$, $\mathscr P=\mathscr P^*$, in a Hilbert space $X$ is continuous if 
$\exists k>0: \langle x, \mathscr P x \rangle_X\leq k \|x\|_X^2$, which by $\inf k=\|\mathscr P\|$ holds if $\mathscr P$ is bounded.  
Note that $V_{12}(\phi)=\langle \phi, \mathscr P_{zz,\mathrm{diag}}\phi\rangle_{L_2}\leq (\|Q_1\|_2+\delay \|Q_2\|_2)\|\phi\|_{L_2}^2$. For $V_{M_2}^{}(\psi)=\langle \psi, \mathscr P \psi\rangle_{M_2}\leq k \|\psi\|_{M_2}^2$  see \cite[p.~65]{Kharitonov.2013}.
}
 of $V_{12}$ in $L_2$.  
\end{proof}

\subsection{Quadratic Lower Bound on the Functional}\label{sec:lowerBoundConvergence}

We are going to prove that, for $N\to \infty$, the  quadratic lower bound on the approximation gives also a valid quadratic lower bound on the functional. This holds for any discretization scheme satisfying Cond.~\ref{cond:conv}.  Moreover, for the Legendre tau method, the thus obtained bound will be 
shown to be tight, 
meaning that the largest possible coefficient $k_1$ in (\ref{eq:pos_def_LK}) is obtained.  

For any discretization resolution $N$, the largest possible coefficient $k_1^\N{N}$ for the bound (\ref{eq:k1_Vy}) on the  approximation $V_y^\N{N}$ is given by (\ref{eq:kappa1_V_y}). 
Note that $k_1^\N{N}$ and, similarly, the largest possible   
coefficient $k_1=k_1^{\mathrm{opt}}$  for the bound (\ref{eq:pos_def_LK})  on the functional $V$ 
are defined by 
\begin{align}
k_1^\N{N}     
=\min\limits_{\hspace{-0.25cm}\substack{z\in \mathbb R^{nN}  \\\hat x \in \mathbb R^n\setminus\{0_n\}}\hspace{-0.25cm}} \tfrac{1}{\| \hat x \|_2^2} V_y^\N{N}(\big[\begin{smallmatrix} z \\ \hat x\end{smallmatrix}\big])
, \qquad
k_1^{\mathrm{opt}}          
 =
\inf\limits_{\hspace{-0.25cm}\substack{\phi \in C
\\ \phi(0)\neq 0_n}\hspace{-0.25cm}} \tfrac{1}{\| \phi(0) \|_2^2} V(\phi )
.
\nonumber\\[-1.25em]  \label{eq:k1_inf}
\end{align}
However, since both the functional and its approximation are quadratic, with $V(c\phi)=c^2 V(\phi)$ for any $c\in \mathbb R$ in (\ref{eq:complete_LK}) and  $V_y^\N{N}(cy)=c^2 V_y^\N{N}(y)$ in (\ref{eq:Vy}), definition (\ref{eq:k1_inf}) simplifies to  
\begin{align}
k_1^\N{N}
=
\min_{\hspace{-0.25cm}\substack{z\in \mathbb R^{nN}  \\\hat x \in \mathbb R^n\setminus\{0_n\}}\hspace{-0.25cm}}
V_y^\N{N}\big( \tfrac{1}{\| \hat x \|_2 } \big[\begin{smallmatrix} z \\ \hat x\end{smallmatrix}\big]\big) \label{eq:k1N_min_quad}
, \qquad
k_1^{\mathrm{opt}}=\inf_{\hspace{-0.15cm}\substack{\phi \in C
\\ \phi(0)\neq 0_n}\hspace{-0.15cm}} V\big( \tfrac{1}{\| \phi(0) \|_2}  \phi \big) 
.\nonumber\\[-1.25em]
\end{align}
\begin{theorem}\label{prop:k1_conv}
 If Cond.~\ref{cond:conv} holds, then $k_1=\limsup\limits
_{N\to\infty} k_1^\N{N}$ is a valid quadratic lower bound coefficient in (\ref{eq:pos_def_LK}).    
\end{theorem}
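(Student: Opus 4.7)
The plan is to simply transfer the inequality $k_1^{\N{N}}\|\hat x\|_2^2 \leq V_y^{\N{N}}(y)$, which holds by construction at each finite discretization resolution, to a statement about the functional by passing to the limit superior and invoking Condition \ref{cond:conv}. The argument is essentially a one-line chain of inequalities once the correct test points are chosen.

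First, for any $\phi \in C([-\delay,0],\mathbb R^n)$ I would set $y = \pi_y^\N{N}(\phi)$. The key observation is that, by the discretization prescription (\ref{eq:phi2y}) (and by the pointwise-evaluation property also noted for the Legendre tau method just after (\ref{eq:phi2y})), the last block component satisfies $\hat x = y^N = \phi(0)$ exactly. Therefore the minimum defining $k_1^\N{N}$ in (\ref{eq:k1N_min_quad}) is taken over a set that contains this particular $y$, and one obtains immediately
\begin{align*}
k_1^\N{N} \,\|\phi(0)\|_2^2 \;\leq\; V_y^\N{N}\bigl(\pi_y^\N{N}(\phi)\bigr).
\end{align*}

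Second, I would take $\limsup_{N\to\infty}$ of both sides. On the right, Condition \ref{cond:conv} guarantees $V_y^\N{N}(\pi_y^\N{N}(\phi)) \to V(\phi)$, so the limsup coincides with the limit and equals $V(\phi)$. On the left, the factor $\|\phi(0)\|_2^2$ is independent of $N$, hence can be pulled out of the limsup, producing
\begin{align*}
\Bigl(\limsup_{N\to\infty} k_1^\N{N}\Bigr)\,\|\phi(0)\|_2^2 \;\leq\; V(\phi),
\end{align*}
which is precisely the claimed bound $k_1 \|\phi(0)\|_2^2 \leq V(\phi)$ with $k_1$ as defined in the statement. Since $\phi \in C$ was arbitrary and the bound is trivial when $\phi(0) = 0_n$, this establishes the lower bound in the form (\ref{eq:pos_def_LK}) with $x(t)=\phi(0)$.

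I do not anticipate a substantial obstacle here; the only point that needs to be checked carefully is that $\hat x$ really equals $\phi(0)$ for the discretization being used (so that the left-hand side test value $\|\hat x\|_2$ is exactly $\|\phi(0)\|_2$ and no approximation error is incurred there). This is settled by the construction of $\pi_y^\N{N}$ in (\ref{eq:phi2y}). The use of $\limsup$ rather than $\liminf$ or $\lim$ is deliberate: it gives the sharpest coefficient the argument can deliver without presupposing that the sequence $\{k_1^\N{N}\}_N$ converges, and the inequality $k_1^\N{N}\|\phi(0)\|_2^2 \leq V_y^\N{N}(\pi_y^\N{N}(\phi))$ is preserved under $\limsup$ because the right-hand side is convergent.
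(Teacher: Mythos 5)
Your proof is correct and uses the same core mechanism as the paper's: the discretization $\pi_y^{\N{N}}(\phi)$ is a feasible point in the minimum defining $k_1^{\N{N}}$ with $\hat x$-component exactly $\phi(0)$, which gives $k_1^{\N{N}}\|\phi(0)\|_2^2\leq V_y^{\N{N}}(\pi_y^{\N{N}}(\phi))$, and Cond.~\ref{cond:conv} then lets you pass to the limit. The paper routes the same idea through a near-minimizer $\phi_\delta$ and the quantity $k_1^{\mathrm{opt}}$ from (\ref{eq:k1_inf}), whereas you apply it directly to an arbitrary $\phi$; this is only a cosmetic streamlining, not a different argument.
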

\begin{proof}
Let $\phi^{}_\delta$ give a $V(\phi^{}_\delta)$ 
that is arbitrarily close to the infimum 
 in (\ref{eq:k1N_min_quad}) according to   
\begin{alignat}{5}\label{eq:phi_delta}
\forall \delta>0, \exists   \phi^{}_\delta\in C, \| \phi^{}_\delta (0) \|_2&=1:\quad  
V(  \phi^{}_\delta )<k_1^{\mathrm{opt}}+\delta
.
\end{alignat}
The assumed convergence (\ref{eq:convV}), i.e., $\forall \phi\in C$, $\forall \varepsilon>0$, $\exists \bar N(\varepsilon,\phi)\in \mathbb N$, $\forall N\geq \bar N(\varepsilon,\phi): \vert V_y^\N{N}(\pi_y^\N{N}
(\phi))-V(\phi)\vert < \varepsilon$, shows that  
\begin{align}
\forall N\geq \bar N( \tfrac\varepsilon 2,\phi^{}_\delta):
\quad 
\vert V_y^\N{N}
(\pi_y^\N{N}
(\phi^{}_\delta))-
\underbrace{
V(\phi^{}_\delta) 
}_{\stackrel{(\ref{eq:phi_delta})}<k_1^{\mathrm{opt}}+\delta} 
\vert < \tfrac\varepsilon 2,
\label{eq:k1_delta_tile_epsilon}
\\[-3em] \nonumber
\end{align}
and thus, $\forall N\geq \bar N( \tfrac\varepsilon 2,\phi^{}_\delta):$
\begin{align}
k_1^\N{N}
\stackrel{(\ref{eq:k1N_min_quad})}=\!\!\min_{\hspace{-0.25cm}\substack{z\in \mathbb R^{nN}, \\\hat x \in \mathbb R^n\setminus\{0_n\}}\hspace{-0.25cm}}
V_y^\N{N}( \tfrac{1}{\| \hat x \|_2 } \big[\begin{smallmatrix} z \\ \hat x\end{smallmatrix}\big])
&\leq
V_y^\N{N}(\pi_y^\N{N}
(\phi^{}_\delta)) 
\label{eq:k1Nsmallerk1plusEpsilon}
\\[-1.5em]
&\!\!
\stackrel{(\ref{eq:k1_delta_tile_epsilon})}< \!
V(\phi^{}_\delta)+\tfrac{\varepsilon}{2} 
\!\stackrel{(\ref{eq:phi_delta})}<\!
k_1^{\mathrm{opt}}+\delta + \tfrac{\varepsilon}{2} 
.
\nonumber
\end{align}
Choosing $\delta=\tfrac{\varepsilon}{2}$, (\ref{eq:k1Nsmallerk1plusEpsilon}) becomes  $k_1^\N{N}< k_1^{\mathrm{opt}}+\varepsilon$. Hence, $\limsup\limits_{N\to\infty} k_1^\N{N}\leq k_1^{\mathrm{opt}}$. Any $k_1\leq k_1^{\mathrm{opt}}$ is admissible in (\ref{eq:pos_def_LK}). 
\end{proof}
For the Legendre tau method, we are going to prove that $k_1^\N{N}$ 
converges to the largest admissible coefficient $k_1^{\mathrm{opt}}$. 
The proof involves the following assumption on the 
arguments  of the minimum in (\ref{eq:k1N_min_quad}): For any $N$, we consider a vector $\big[\begin{smallmatrix} z^\N{N} \\ \hat x^\N{N}\end{smallmatrix}\big]$, with $\|\hat x^\N{N}\|_2=1$, such that $V_y^\N{N}(\big[\begin{smallmatrix} z^\N{N} \\ \hat x^\N{N}\end{smallmatrix}\big])=k_1^\N{N}$.  
By (\ref{eq:DiscontinuousEndpoint}), any $\big[\begin{smallmatrix} z^\N{N} \\ \hat x^\N{N}\end{smallmatrix}\big]$  represents a function  $\phi^\N{N}$  (we use (\ref{eq:DiscontinuousEndpoint}) since the minimizing argument is not expected to be continuous at $\theta=0$). The assumption   below is that $\phi^\N{N}$ remains uniformly bounded in $L_2$, which, however, could numerically\footnote{
The $L_2$ norm of (\ref{eq:DiscontinuousEndpoint}) can be computed from 
$\|\phi^\N{N} \|_{L_2}^2=\sum_{k=0}^{N-1}\frac h 2 \frac {2}{2k+1} \|\zeta^{ k}\|_2^2$ 
using the first $N-1$ of the $N$ subvectors in $\zeta$. These are either derived via $\zeta=T_{\zeta y} \big[\begin{smallmatrix} z\\ \hat x \end{smallmatrix}\big]$, cf.\ Rem.~\ref{rem:Matlab_AyL}, where  $ z =-P_{y,zz}^{-1}P_{y,xz}^\top \hat x$ and $\hat x =v/\|v\|_2$, see Lemma \ref{lem:part_pos_def_bound_Schur}, or are directly available if Lemma \ref{lem:part_pos_def_bound_Schur} is applied to the coordinates from (\ref{eq:chi}).
} be confirmed for all tested examples that give a nonzero $k_1$. 
\begin{theorem} \label{thm:tightBound}
Consider the Legendre tau method with (\ref{eq:Qzeta2}). 
As described above, for $\phi^\N{N}$ being related to $k_1^\N{N}$, assume that $\exists \beta>0$, $\forall N: \|\phi^\N{N}\|_{L_2}<\beta$. 
Then 
the quadratic lower bound coefficient $k_1^\N{N}$  from Cor.~\ref{cor:lowerBound} converges to 
the largest possible quadratic lower bound coefficient 
on the functional in (\ref{eq:pos_def_LK}).  
\end{theorem}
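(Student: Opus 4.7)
The plan. From Thm.~\ref{prop:k1_conv} combined with Cond.~\ref{cond:conv} (established for Legendre tau under exponential stability in Thm.~\ref{prop:LegendreTau_Vconv}), we already have $\limsup_{N\to\infty} k_1^\N{N}\le k_1^{\mathrm{opt}}$. The remaining task is to prove the matching inequality $\liminf_{N\to\infty} k_1^\N{N}\ge k_1^{\mathrm{opt}}$, and it is exactly here that the hypothesized uniform $L_2$-bound on $\phi^\N{N}$ enters decisively.

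First, I identify the minimizer $y^\N{N}=[(z^\N{N})^\top\!,(\hat x^\N{N})^\top]^\top$ with the associated element $\psi^\N{N}=(\phi^\N{N},\hat x^\N{N})\in M_2$ via (\ref{eq:DiscontinuousEndpoint}); since $\|\phi^\N{N}\|_{L_2}<\beta$ and $\|\hat x^\N{N}\|_2=1$, the sequence $\{\psi^\N{N}\}$ is bounded in the Hilbert space $M_2$. Using the splitting $V=V_0+V_{12}$ from Lemma~\ref{lem:splitting_V}, the operator representation (\ref{eq:V0N_P0N}), and the fact that the Legendre-tau approximation of $V_{12}$ is exact on polynomials of degree $N-1$ (Sec.~\ref{sec:V12_LegendreTau}), I would write
\begin{equation*}
V_y^\N{N}(y^\N{N})=\langle\psi^\N{N},\mathscr P_0^\N{N}\psi^\N{N}\rangle_{M_2}^{}+V_{12}(\phi^\N{N}).
\end{equation*}
The operator-norm convergence $\|\mathscr P_0^\N{N}-\mathscr P_0\|\to 0$ from Lemma~\ref{lem:P_norm_conv}, together with the uniform bound on $\|\psi^\N{N}\|_{M_2}$, then gives $V_y^\N{N}(y^\N{N})-V_{M_2}^{}(\psi^\N{N})\to 0$, where $V_{M_2}^{}(\psi):=\langle\psi,(\mathscr P_0+\mathscr P_{12})\psi\rangle_{M_2}^{}$ is a continuous quadratic form on $M_2$ that coincides with $V$ whenever $\psi=(\phi,\phi(0))$ with $\phi\in C$.

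By Banach--Alaoglu, a subsequence satisfies $\psi^{[N_j]}\rightharpoonup \psi^{*}=(\phi^{*},\hat x^{*})$ weakly in $M_2$; since $\mathbb R^n$ is finite-dimensional, $\hat x^{[N_j]}\to\hat x^{*}$ strongly and $\|\hat x^{*}\|_2=1$. Because $\mathscr P_0^\N{N}\succeq 0$ by Prop.~\ref{prop:properties}, the operator-norm limit satisfies $\mathscr P_0\succeq 0$, while $\mathscr P_{12}\succeq 0$ is immediate from $Q_1,Q_2\succeq 0_{n\times n}$ in (\ref{eq:operatorP12}); hence $V_{M_2}^{}$ is a nonnegative quadratic form on $M_2$, and therefore weakly lower semi-continuous, yielding $\liminf_{j\to\infty} V_{M_2}^{}(\psi^{[N_j]})\ge V_{M_2}^{}(\psi^{*})$.

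The main obstacle is the final step: transferring the defining bound $V(\phi)\ge k_1^{\mathrm{opt}}\|\phi(0)\|_2^2$, which only makes sense on $C$, to the potentially discontinuous limit $\psi^{*}\in M_2$. I would resolve this by a density argument: construct $\tilde\phi_m\in C([-\delay,0],\mathbb R^n)$ with $\tilde\phi_m(0)=\hat x^{*}$ and $\tilde\phi_m\to\phi^{*}$ in $L_2$ (for instance mollify $\phi^{*}$ on $[-\delay,-1/m]$ and linearly interpolate to $\hat x^{*}$ on $[-1/m,0]$). Continuity of $V_{M_2}^{}$ on $M_2$ (footnote~\ref{fn:continuous_quadraticForm}) then yields $V(\tilde\phi_m)=V_{M_2}^{}((\tilde\phi_m,\hat x^{*}))\to V_{M_2}^{}(\psi^{*})$, while each $V(\tilde\phi_m)\ge k_1^{\mathrm{opt}}\|\tilde\phi_m(0)\|_2^2=k_1^{\mathrm{opt}}$ by (\ref{eq:k1_inf}); hence $V_{M_2}^{}(\psi^{*})\ge k_1^{\mathrm{opt}}$. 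Because every subsequence of $\{k_1^\N{N}\}$ admits such a sub-subsequence, $\liminf_{N\to\infty} k_1^\N{N}\ge k_1^{\mathrm{opt}}$, and combined with the $\limsup$-estimate this delivers $k_1^\N{N}\to k_1^{\mathrm{opt}}$.
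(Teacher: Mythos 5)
Your proof is correct, and its quantitative core coincides with the paper's: both arguments identify the minimizer with $\psi^\N{N}=(\phi^\N{N},\hat x^\N{N})\in M_2$ via (\ref{eq:DiscontinuousEndpoint}), use the splitting of Lemma~\ref{lem:splitting_V} together with the exactness of the Legendre-tau approximation of $V_{12}$ on polynomials of degree $N-1$, and then exploit the operator-norm convergence of Lemma~\ref{lem:P_norm_conv} plus the assumed uniform bound $\|\psi^\N{N}\|_{M_2}^2\leq\beta^2+1$ to conclude $k_1^\N{N}-V(\phi^\N{N})\to 0$; the $\limsup$ half is taken from Thm.~\ref{prop:k1_conv} in both cases. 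Where you diverge is the final step. The paper extends $V$ to the class $C_{\mathrm d}$ of functions with a possible jump at $\theta=0$ (equivalently to $L_2\times\mathbb R^n$) and observes, by density of $C$ in $L_2$ and $M_2$-continuity of the quadratic form, that the infimum in (\ref{eq:k1N_min_quad}) is unchanged on the extended domain; this gives $V(\phi^\N{N})\geq k_1^{\mathrm{opt}}$ directly for every $N$, and the $\liminf$ bound follows with no compactness argument. You instead pass to a weakly convergent subsequence via Banach--Alaoglu, invoke weak lower semicontinuity of the nonnegative form $V_{M_2}$, and only then run the mollification/density argument at the weak limit $\psi^{*}$. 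This is sound (each ingredient --- strong convergence of the finite-dimensional component, $\mathscr P_0\succeq 0$ inherited in the limit, weak lsc of convex continuous forms, the subsequence principle --- checks out), but the detour is unnecessary: the very same mollification you apply to $\psi^{*}$ applies verbatim to each $\phi^\N{N}\in C_{\mathrm d}$, yielding $V_{M_2}(\psi^\N{N})\geq k_1^{\mathrm{opt}}\|\hat x^\N{N}\|_2^2=k_1^{\mathrm{opt}}$ pointwise in $N$, which is exactly the paper's shortcut. What your route buys is a reusable template for situations where one only has weak (rather than norm) operator convergence and a merely weakly lsc limit functional; what it costs here is extra machinery for a statement that is already available uniformly in $N$.
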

\begin{proof}
We denote  by $C_{\mathrm d}$ the set of functions $\phi:[-\delay,0]\to\mathbb R^n$ that are continuous on $[-\delay,0)$ and possibly have a jump discontinuity at the end point $\phi(0^-)\neq \phi(0)$. Note that $\phi^\N{N}\in C_{\mathrm d}$. The functional $V:C\to\mathbb R$ can straightforwardly be extended to arguments in $C_{\mathrm d}$ 
since $V(\phi)=V_{M_2}((\phi,\phi(0)))$ holds by (\ref{eq:V_M2_operator}), which, in fact, is defined for all $(\phi,\phi(0))\in L_2\times \mathbb R^n$. 
Also on this extended set of arguments, the value of interest from (\ref{eq:k1N_min_quad}) is still 
the infimum $k_1^{\mathrm{opt}}=\inf_{\hspace{-0.15cm}\substack{\phi \in C_{\mathrm d} 
\\ \phi(0)\neq 0_n}\hspace{-0.15cm}} V\big( \tfrac{1}{\| \phi(0) \|_2}  \phi \big)$ (even on $L_2\times \mathbb R^n$ 
it would be 
since $V_{M_2}$ is continuous\footnotemark[\getrefnumber{fn:continuous_quadraticForm}] in $M_2=L_2\times \mathbb R^n$ and $C$ is dense in $L_2$). 
With a slight abuse of notation we do not alter the name~$V$ for the extension on $C_{\mathrm d}$. 
By construction, the discretization $\pi_y^\N{N}(\phi^\N{N})=\big[\begin{smallmatrix} z^\N{N} \\ \hat x^\N{N}\end{smallmatrix}\big]$yields an argument of the minimum in (\ref{eq:k1N_min_quad}). 
First, we have to show that $\forall \varepsilon>0, \exists \bar N_1(\varepsilon)\in \mathbb N,$ such that 
\begin{align}\label{eq:V_diff_with_k1N}
\forall N\geq \bar N_1(\varepsilon):
\quad \vert \underbrace{V_y^\N{N}(\pi_y^\N{N}(\phi^\N{N}))}_{k_1^\N{N}} - V(\phi^\N{N})\vert < \varepsilon.    
 \nonumber
\\[-1.75em]
\end{align} 
According to the splitting approach (Lemma \ref{lem:splitting_V} with $\tilde Q_1=Q_1,\tilde Q_2=Q_2$), we decompose $V$ into three parts $V(\phi^\N{N})=  V_0(\phi^\N{N}) +V_1(\phi^\N{N})+V_2(\phi^\N{N})$ and its approximation  
correspondingly. The second and third term, $V_1(\phi^\N{N})$ and $V_2(\phi^\N{N})$, do not  contribute to the error in (\ref{eq:V_diff_with_k1N}) since $\phi^\N{N}(\theta)$ is an $(N-1)$-th order polynomial on $\theta\in[-\delay,0)$ for which the approximation is exact, according to the lemmata of 
Appendix~\ref{sec:V12_LegendreTau}. Therefore, it suffices to show uniform convergence on $\cup_N\{\phi^\N{N}\}$ for the approximations of $V_0$.  
Let  
$\psi^\N{N}=(\phi^\N{N} , \phi^\N{N}(0))\in M_2$. 
By assumption, $\|\psi^\N{N}\|_{M_2}^2=\|\phi^\N{N}\|_{L_2}^2+\|\phi^\N{N}(0)\|_2^2\leq\beta^2+1$. 
Thus, using (\ref{eq:V0_P0}) and (\ref{eq:V0N_P0N}), the error in (\ref{eq:V_diff_with_k1N}) becomes $\vert  \langle \psi^\N{N}
, \mathscr P_0^\N{N} \psi^\N{N}\rangle_{M_2} - \langle \psi^\N{N}, \mathscr P_0 \psi^\N{N}\rangle_{M_2}\vert  \leq  \| \mathscr P_0^\N{N} -\mathscr P_0\| \, (\beta^2+1)$. 
By Lemma \ref{lem:P_norm_conv}, the latter converges to zero, and thus 
(\ref{eq:V_diff_with_k1N}) holds. 
Consequently, 
$\forall N\geq \bar N_1(\varepsilon)$: 
\begin{align}
k_1^\N{N} \stackrel{(\ref{eq:V_diff_with_k1N})}>  
V(\phi^\N{N})- \varepsilon 
\geq \inf_{\hspace{-0.15cm}\substack{\phi \in C_{\mathrm d}
\\ \phi(0)\neq 0_n}\hspace{-0.15cm}} V\big( \tfrac{1}{\| \phi(0) \|_2}  \phi \big)
-\varepsilon
\stackrel{(\ref{eq:k1N_min_quad})}= k_1^{\mathrm{opt}}-\varepsilon.
\nonumber
\\[-1.25em]
\label{eq:lower_k1N}
\end{align}
With $\bar N_0(\varepsilon):=\bar N(\tfrac \varepsilon 2,\phi_{\delta=\varepsilon/2})$  from Thm.~\ref{prop:k1_conv}, we obtain 
\begin{align*}
\forall N\!\geq \max\{\bar N_0(\varepsilon),\bar N_1(\varepsilon)\}: \quad  \! k_1^{\mathrm{opt}}-\varepsilon \stackrel {(\ref{eq:lower_k1N})}< k_1^\N{N} \stackrel{(\ref{eq:k1Nsmallerk1plusEpsilon})} 
<k_1^{\mathrm{opt}}+\varepsilon, 
\end{align*}
completing the proof of $\vert k_1^\N{N}-k_1^{\mathrm{opt}}\vert \to 0$ $(N\to \infty)$. 
\end{proof}

\section{Conclusion}\label{sec:Conclusion}

The present paper shows that the counterpart of LK functionals for RFDEs are not classical Lyapunov functions for ODEs, but rather they correspond to partial Lyapunov functions, i.e., Lyapunov functions that prove partial stability. The latter are still simply obtained by solving a Lyapunov equation. Using the system matrix of an  approximating ODE, the result gives an approximation of the LK functional $V(\phi)$. 
Note that {Fig.~\ref{fig:P_structure}} yields the structure of complete-type LK functionals without any prior knowledge.  
For an appropriate ODE approximation with a sufficiently large discretization resolution $N$, the involved matrix $P_y$   is positive semidefinite if and only if the  RFDE 
equilibrium is asymptotically stable. 
A formula for a partial positive-def\-initeness bound on the functional approximation is derived. 
When it is applied to the Legendre-tau ODE-based result, a rapid convergence of the resulting lower bound coefficient is observed as $N$ increases. Its limit is shown to be  the best possible quadratic lower bound coefficient $k_1$ on the LK functional. Examples demonstrate that the latter significantly improves known results. In particular, the obtained $k_1$ depends on the delay, which is not the case in  existing formulae.
For the sake of validation, the present paper also proposes a numerical integration of the 
LK functional formula by Clenshaw-Curtis and Gauss quadrature rules. For these, the lower bound formula is purposeful as well. However, the ODE-based approach is expected to provide approximations of LK functionals even in more general cases where the LK functional is not known analytically.

\appendix

Table \ref{tab:polynomialApprox} classifies the employed polynomial methods.
The following appendix also includes some implementation hints.
\renewcommand{\arraystretch}{1.5}
\begin{table*}
\centering
\iftoggle{IsTwocolumn}{\scriptsize}{\newscriptsize}
{
\begin{tabular}{|p{4cm} @{\hspace{1.25em}} @{} p{2.5cm} ||p{2.5cm}@{\hspace{1.5em}} @{}p{0.95cm} p{0.95cm}|p{2cm} @{\hspace{1.5em}} @{} p{0.95cm} p{0.95cm}|}
 \hline
\multicolumn{2}{|c||}{\textbf{polynomial approximation} (of functions)}
&
\multicolumn{3}{c|}{\textbf{spectral methods} (for differential equations)}
&
\multicolumn{3}{@{}c@{}|}{\textbf{numerical integration} (for integral expressions)}
\\
\hline
\multicolumn{2}{|p{7cm}@{}||}{
{ (*) JP stands for Chebyshev,  Legendre, or other Jacobi polynomials}
}
&
\hfill JP:
& Chebyshev & Legendre &
\hfill JP:& Chebyshev & Legendre
\\
\hline
\vspace{0.2em}\multirow{2}{4cm}{\textbf{interpolation} / 
coincidence in  the chosen nodes (natural basis: Lagrange polynomials w.r.t.\ the chosen nodes),
\newline
\vspace{0pt} \hspace{0.5em} equivalently,  \hspace{0.5em} 
`discrete expansion' with the $0$-th to $N$-th JP as basis, 
related to an approximation of the series truncation below via quadrature 
}
&
\vspace{-0.2em}Gauss-Lobatto nodes \newline (extrema of the $N$-th JP)\vspace{0.8em}
&
\multirow{2}{1.1\linewidth}{\vspace{0.1em}\newline
\textbf{collocation} / 
\newline pseudospectral method / 
\newline method of selected points /
\newline at the nodes vanishing residual
}
&
\vspace{-0.2em}
\method{Chebyshev col.}
&
&
\vspace{-0.25em}
\multirow{2}{1.1\linewidth}{
\vspace{0em}\newline
\textbf{interpolatory quadrature} /
\newline 
integration of an interpolating polynomial  instead of the original function
}
&
\vspace{-0.2em}\method{Clenshaw-Curtis}
&
\\
\cline{2-2}\cline{4-5}\cline{7-8}
&
\vspace{-0.2em}Gauss nodes  (roots of the $(N+1)$-th JP)\vspace{0.8em}
&
&\multicolumn{2}{@{}c @{}|}{\begin{tabular}{@{}c @{}}-- \hspace{4em} --\\[-0.5em]
{
\!\!(no boundary nodes for}\\[-0.75em]
{
boundary conditions)}\\[-2em]
\end{tabular}}
&&
&
\vspace{-0.2em}\method{Gauss quad.}
\\
\hline
\multicolumn{2}{|p{6.5cm}||}{
\vspace{-0.2em}
\textbf{series truncation} / 
\newline
orthogonal\footnotemark projection to the $0$-th to $N$-th JP / \newline 
`continuous expansion' with the $0$-th to $N$-th JP as basis / 
\newline
generalized Fourier truncation / 
least squares best approximation
}
&
\textbf{Galerkin-like methods}, \newline e.g., 
Galerkin method \newline or  tau method / Lanczos' tau method / Galerkin with boundary bordering
&
&
\vspace{1em}\method{Legendre tau}
&
\vspace{0.4em}
\textbf{--} 
(the projection requires itself integral evaluations)
& \vspace{2em}\hspace{1.5em}--
& \vspace{2em}\hspace{1.5em}--
\\
\hline
\end{tabular}
}
\caption{
Classification of the used  methods 
 ( ``/'' marks synonymous terms). See   
\cite{Hesthaven.2007}, \cite{Trefethen.2008,
Trefethen.2000,Canuto.2006,Funaro.1992} 
for details.
\label{tab:polynomialApprox}}
\end{table*}
\renewcommand{\arraystretch}{1}
\footnotetext{orthogonal w.r.t.\ the (weighted) inner product in which the chosen basis polynomials are orthogonal. 
In (\ref{eq:zeta_discretization}), the modified $\zeta^N$  makes the projection non-orthogonal, unless the discretization is interpreted in terms of (\ref{eq:DiscontinuousEndpoint}). 
}

\subsection{ODEs that Approximate RFDEs}

We consider ODE approximations for (\ref{eq:lin_RFDE}) from two spectral methods: Chebyshev collocation 
and Legendre tau.

\subsubsection{Chebyshev collocation method}\label{sec:ChebCol}

By interpolation,  the 
vector $y(t)$ 
at time $t$  
in (\ref{eq:y_vec}), cf.\ Fig.~\ref{fig:sol},  determines an $N$-th order approximating polynomial for $x_t$. More specifically, 
\begin{align}\label{eq:x_t_Lagrange_basis}
x_t(\theta) \approx \sum_{k=0}^{N} \,y^k(t) \;\ell_k\big(\vartheta(\theta)\big),
\end{align}
where $ \ell_k\colon[-1,1]\to \mathbb R$ are interpolating  Lagrange basis polynomials w.r.t.\ the (Gauss-Lobatto) Chebyshev nodes $\{\tilde \vartheta_k\}_{k\in\{0,\ldots,N\}}$ on $[-1,1]$, and where 
\begin{align} \label{eq:vartheta_mapping}
\vartheta\colon [-\delay,0] \to [-1,1];\qquad  \theta \mapsto \vartheta(\theta):=\tfrac{2}{\delay} \theta +1 
\end{align}
maps the argument $\theta\in [-\delay,0]$ to this interval.

The exact 
evolution of $x_t$ in Fig.\ \ref{fig:sol_state} can be described by an abstract ODE in $C([-\delay,0],\mathbb R^n)$, see \cite{Breda.2015} for details. This abstract ODE can be discretized via the collocation method. The result describes the dynamics of the unknown coefficients $y^k(t)$ in (\ref{eq:x_t_Lagrange_basis}). It is the ODE (\ref{eq:y_ODE}) with $A_ {\ind y}=A_{\ind y}^C$, 
\begin{align}\label{eq:A_y_C}
A_{\ind y}^C \!:=\!
\left[\begin{smallmatrix}
\tfrac{2}{\delay} \ell_0'(\tilde \vartheta_0) I_n 
& 
\cdots
&
\cdots
&
\cdots
& 
\tfrac{2}{\delay} \ell_N'(\tilde \vartheta_0) I_n 
\\
\vdots&&&&\vdots
\\
\tfrac{2}{\delay} \ell_0'(\tilde \vartheta_{N-1}) I_n 
& 
\cdots
&
\cdots
&
\cdots
& 
\tfrac{2}{\delay} \ell_N'(\tilde \vartheta_{N-1}) I_n 
\\
A_1 & 0_{n\times n} &\quad \cdots \quad & 0_{n\times n} & A_0
\end{smallmatrix}\right]
\!,
\end{align}
cf.\ \cite{Breda.2016}.
The upper part of $A_{\ind y}^C$ that is 
given by
$\frac 2 \delay (\ell_k'(\tilde \vartheta_j))_{j\in\{0,\ldots,N-1\},k\in\{0,\ldots,N\}} \otimes I_n$   
requires 
the first $N$ rows of the  
$(N\!\hspace{-1pt} +\! 1)\!\times \!(N\! \hspace{-1pt}+\! 1)$
differentiation matrix $(\ell_k'(\tilde \vartheta_j))_{j,k\in\{0,\ldots,N\}}$. See  \cite[p.~54]{Trefethen.2000} (with $x_k=-\tilde \vartheta_k$).   
\begin{remark}[Implementation of $A_y^C$]\label{rem:Matlab_AyC}
A Matlab implementation of 
the skew-centrosymmetric differentiation matrix 
is available from \texttt{diffmat} in the Chebfun toolbox \cite{Driscoll.2014}. Based on the latter, 
$A_y^C=\texttt{A}$ is obtained from \vspace{0.2em}  
{\small \begin{verbatim}
 D=diffmat(N+1,[-delay,0]);   A=kron(D,eye(n)); 
 A(end-n+1:end,:)=[A1,zeros(n,n*(N-1)),A0]
\end{verbatim}} \noindent
(if $A_0,A_1,\delay,n,N$ are assigned to \texttt{A0,A1,delay,n,N}).
\end{remark}
\subsubsection{Legendre tau method} \label{sec:LegendreTau}
Let $\Leg_k:[-1,1]\to\mathbb R$ denote the $k$-th Legendre polynomial. See, e.g., \cite{Hesthaven.2007} for formulae and plots.
 Using $\{p_k(\vartheta(\cdot))\}_{k=0}^N$ as basis, an $N$-th order approximating polynomial 
for $x_t$ 
becomes 
\begin{align} \label{eq:polynomial_Legendre}
x_t(\theta) \approx \sum_{k=0}^N \,\zeta^k(t) \;\Leg_k(\vartheta(\theta)).    
\end{align}
The evolution of the coefficients $\zeta^k(t)\in \mathbb R^n$, stacked as $\zeta:=[(\zeta^0)^\top,\ldots, (\zeta^N)^\top]^\top$,  
shall again be described by  
\begin{align}\label{eq:zetaODE}
\dot \zeta (t)
&= A_\zeta^{} \,
 \zeta(t).
\end{align}  
In \cite{Ito.1986}, this is achieved 
 via Lanczos' tau method (considering a Hilbert space setting, cf.\ Sec.~\ref{sec:convLyap}). A general introduction to the tau method is given in \cite{Hesthaven.2007}.  We only state the result, which is (\ref{eq:zetaODE}) with
$A_\zeta=A_\zeta^{L}$ having the block entries 
\begin{align}
A_{\zeta}^{L,jk}\!=\!
\left\{
\begin{array}{@{}l @{\;} l}
\tfrac{2}{\delay} (2j+1) I_n, &  
{\small 
\begin{array}{@{}l@{}l}
\text{if }& 
j\in\{0,\ldots,N-1\}, \\
&
 k \!>\!j \text{ and } j\!+\!k \text{ odd}
\end{array}
}
\\
A_0+(-1)^k A_1 -\tfrac{2}{\delay}  \tfrac{k(k+1)}{2} I_n,
& \text{if } j=N 
\\
0_{n\times n} ,  &\text{else}.
\end{array}
\right.
\nonumber
\\[-1.5em] \label{eq:A_zeta_L}
\end{align}
Thus, $A_\zeta^{L}$ exhibits the structure (exemplarily for $N$ even) 
\newlength{\hs}
\setlength{\hs}{2em}
\begin{align*}
A_\zeta^L&=\left[\begin{smallmatrix}
0_{n\times n} & 0_{n\times n} &0_{n\times n} 
&\cdots 
&0_{n\times n} 
\\
\vdots & \vdots & \vdots & &\vdots &
\\
0_{n\times n} & 0_{n\times n} &0_{n\times n} 
&\cdots 
&0_{n\times n}
\\[0.25em]
A_0+A_1 & A_0-A_1 & A_0+A_1
&\cdots
& A_0-A_1 
\\[0.25em]
\end{smallmatrix}\right]
\nonumber
\\[0.5em]
&\quad+\tfrac{2}{\delay}
\left[\begin{smallmatrix}
\hspace{\hs} &  \hspace{\hs} &  \hspace{\hs} &  \hspace{\hs} &  \hspace{\hs} &  \hspace{\hs}
\\
0 & 1&0 & 1& 
\cdots 
 &  0\\[0.25em]
0 &0 & 3 & 0 & 
\cdots
& 3  \\[0.25em]
0 & 0 & 0 & 5   & 
\cdots 
 &  0 \\
\vdots &&&
 \ddots 
&\ddots
& \vdots\\
0 & 0 & 0 & 0 & 
\cdots 
& (2N-1)   \\
0 & -1& -3 & -6  & 
\cdots
& -\tfrac{N(N+1)}{2} 
\end{smallmatrix}\right]
\otimes I_n.
\end{align*}
\begin{remark}[Implementation of $A_\zeta^L$]\label{rem:Matlab_AcL}
Written in standard Matlab code, 
we obtain 
$ A_\zeta^L=\texttt{Ac}$ from \vspace{0.2em} 
{\small \begin{verbatim}
 Dc=zeros(N+1,N+1);Dc(end,:)=-(0:N).*(1:N+1)/2; 
 for j=0:N-1; Dc(j+1,(j+1)+1:2:end)=2*j+1; end
 Ac=2/delay*kron(Dc,eye(n));
 Ac(end-n+1:end,:)= Ac(end-n+1:end,:)+...
      kron(ones(1,N+1),A0)+kron((-1).^(0:N),A1)
\end{verbatim}}\noindent 
(with \texttt{A0,A1,delay,n,N} as above). 
\end{remark}
Hence, we have the dynamics (\ref{eq:zetaODE}) of the   Legendre coordinates $\zeta(t)\in\mathbb R^{n(N+1)}$ that, in (\ref{eq:polynomial_Legendre}), describe  the approximating polynomial for $x_t$. 
However, we can equivalently express the 
polynomial (\ref{eq:polynomial_Legendre}) in interpolation coordinates $y(t)\in\mathbb R^{n(N+1)}$, referring to the interpolation basis $\{\ell_k(\vartheta(\cdot))\}_{k=0}^N$ of (\ref{eq:x_t_Lagrange_basis}).
Let $T_{\ind{y}\zeta}^{}$ denote the transformation matrix of this change of basis 
\begin{align}\label{eq:T_yc}
y(t)
= 
T_{\ind{y}\zeta}^{}\,
\zeta(t). 
\end{align}
The thus computed $y(t)$ 
obeys the ODE (\ref{eq:y_ODE}) with $A_{\ind y}=A_{\ind y}^L$,
\begin{align}\label{eq:A_y_L}
A_{\ind y}^L:=T_{\ind y\zeta} \,A_\zeta^L \,T_{\ind y\zeta}^{-1}.  
\end{align}
Note that the first and last 
block row 
of $T_{\ind y\zeta}$ in (\ref{eq:T_yc}) are simply
\begin{align}\label{eq:first_last_row_Tyc}
\begin{bmatrix}
y^0 (t)
\\ 
y^N (t)
\end{bmatrix}
= 
\begin{bmatrix}
I_n & -I_n  & \cdots & (-1)^N I_n \\
I_n & I_n & \cdots & I_n
\end{bmatrix}
\zeta(t)
\end{align}
since  $\Leg_k(-1)=(-1)^k$, $\Leg_k(1)=1$  (and 
$T_{\ind y\zeta}^{jk}=p_k(\vartheta(\tilde \theta_j)) I_n$).
\begin{remark}[Implementation of $A_y^L$] \label{rem:Matlab_AyL}
Efficient conversion algorithms \cite{Townsend.2018} are found in the Chebfun toolbox. 
Applying these to the identity matrix yields $T_{\ind y\zeta}$ and $T_{\zeta\ind y}:=T_{\ind y\zeta}^{-1}$. Thus, $A_y^L=\texttt{A}$ can be derived by adding the lines
{\small \begin{verbatim}
 Tyc=kron(legcoeffs2chebvals(eye(N+1)),eye(n));
 Tcy=kron(chebvals2legcoeffs(eye(N+1)),eye(n));
 A=Tyc*Ac*Tcy
 \end{verbatim}}\vspace{-1em}\noindent to the code given in Remark \ref{rem:Matlab_AcL}. 
\end{remark}
\subsubsection{Further notes on the Legendre tau method} 
\paragraph{Lyapunov equation in Legendre coordinates}
\label{sec:LegCoordinates}
To obtain an approximation of $V(\phi)$ via the Legendre tau approach, we can resort to (\ref{eq:LyapEq_y})  with $A_{\ind y}=A_{\ind y}^L$ from (\ref{eq:A_y_L}). However, from a numerical point of view, it might be  preferable to remain in Legendre coordinates $\zeta$ and to use $A_\zeta=A_\zeta^L$, (\ref{eq:A_zeta_L}), in
\begin{align} \label{eq:def_V_zeta}
V_\zeta(\zeta)&:=\zeta^\top P_\zeta \zeta,\quad P_\zeta=P_\zeta^\top\in \mathbb R^{n(N+1)\times n(N+1)} \\
D_f^+V_\zeta(\zeta) &= \zeta^\top(P_\zeta A_\zeta + A_\zeta^\top P_\zeta) \zeta \stackrel != -\zeta ^\top (T^\top_{\ind y\zeta} \,Q_{\ind y }\,T_{\ind y\zeta} )\zeta, \nonumber
\end{align}
$\forall \zeta\in \mathbb R^{n(N+1)}$ with $T_{\ind y\zeta}$ from (\ref{eq:T_yc}). That is, we solve 
\begin{align} \label{eq:Lyap_eq_zeta}
P_\zeta A_\zeta + A_\zeta^\top P_\zeta = -T^\top_{\ind y\zeta} \,Q_{\ind y} \,T_{\ind y\zeta} 
\end{align}
for $P_\zeta$ and, if desired,  express the result in $y$ coordinates
\begin{align} \label{eq:V_y_from_V_zeta}
V_{\ind y}(y)=V_\zeta(\zeta)=V_\zeta(T^{-1}_{\ind y\zeta} y) =y^\top \big(\underbrace{(T^{-1}_{\ind y\zeta})^\top P_\zeta \,T^{-1}_{\ind y\zeta}}_{=:P_{\ind y}} \big)  y.
\\[-2em]\nonumber
\end{align}
With $Q_y$ from  (\ref{eq:Qzeta2}), only the first and last block rows (\ref{eq:first_last_row_Tyc}) of $T_{y\zeta}$ are required in (\ref{eq:Lyap_eq_zeta}).

\paragraph{Discretization}
The discretization $\zeta=(\zeta^k)_{k\in\{0,\ldots,N\}}$ of a given function $\phi$, e.g., an initial condition $x_0=\phi$ or an argument of the functional $V(\phi)$, 
is chosen as   
\begin{align} \label{eq:zeta_discretization}
\zeta^k=\tilde \zeta ^k,\; \text{ if } k<N,\quad   \text{ and } \quad  \zeta^N=\phi(0)-\sum_{k=0}^{N-1}\zeta^k,
\end{align}
where $\{\zeta ^0,\ldots, \zeta^{N-1}\}$ stem from a truncation of the Legendre series representation $\phi(\theta)=\sum_{k=0}^\infty \tilde \zeta ^k p_k(\vartheta(\theta))$, \cite{Ito.1986}. The last component $\zeta^N$ in (\ref{eq:zeta_discretization}) is such that the $N$-th order approximating polynomial $ \sum_{k=0}^{N} \zeta^k p_k(\vartheta(\theta))\approx \phi(\theta)$, at $\theta=0$,    exactly matches $\phi(0)$ (note that $\vartheta(0)=1$ and $p_k(1)=1$, $\forall k$).

\paragraph{$V_1$ and $V_2$}\label{sec:V12_LegendreTau} For two important cases of the right-hand side $-Q_\zeta=-T^\top_{\ind y\zeta} \,Q_{\ind y }\,T_{\ind y\zeta}$ in (\ref{eq:Lyap_eq_zeta}), we can  give the solution $P_\zeta$, respectively the resulting 
$V_y(y)=V_\zeta(\zeta)\approx V(\phi)$, analytically. 
\begin{lemma}\label{lem:V1_LegendreTau}
The Legendre-tau-based approximation of $V_1(\phi)$ in Lemma \ref{lem:splitting_V} becomes 
$V_1(\tilde \phi^{(N-1)})$, where $\tilde \phi^{(N-1)}(\theta)=\sum_{k=0}^{N-1} \tilde \zeta^k p_k(\vartheta(\theta))$ is the $(N-1)$-th order Legendre series truncation of  $\phi(\theta)=\sum_{k=0}^{\infty} \tilde \zeta^k p_k(\vartheta(\theta))$.
\end{lemma}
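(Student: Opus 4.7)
The plan is to solve the specific Lyapunov equation (\ref{eq:Lyap_eq_zeta}) associated with $V_1$ by a direct ansatz and then to evaluate the resulting quadratic form at the discretization (\ref{eq:zeta_discretization}). As a preliminary step, I would rewrite the target analytically: by $L_2$-orthogonality of the Legendre polynomials, $\int_{-1}^{1} p_j(\vartheta) p_k(\vartheta)\,\mathrm d\vartheta=\tfrac{2}{2k+1}\delta_{jk}$, combined with the rescaling $\theta=\tfrac{\delay}{2}(\vartheta-1)$, one obtains
\begin{align*}
V_1(\tilde\phi^{(N-1)})=\sum_{k=0}^{N-1}\tfrac{\delay}{2k+1}\,(\tilde\zeta^k)^\top \tilde Q_1 \,\tilde\zeta^k.
\end{align*}
This is the closed-form value that has to be matched by $V_\zeta(\zeta)=\zeta^\top P_\zeta\zeta$ for the discretization $\zeta$ of $\phi$ given by (\ref{eq:zeta_discretization}).

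Next, I would compute the right-hand side $-Q_\zeta=-T_{y\zeta}^\top Q_y T_{y\zeta}$ of (\ref{eq:Lyap_eq_zeta}) for the parameter choice $Q_0=-\tilde Q_1,\,Q_1=\tilde Q_1,\,Q_2=0_{n\times n}$ that characterizes $V_1$ in Lemma~\ref{lem:splitting_V}. Since only the corner blocks of $Q_y$ from (\ref{eq:Q_quadForm}) are nonzero, only the first and last block rows of $T_{y\zeta}$, displayed in (\ref{eq:first_last_row_Tyc}), enter the product, and a short computation yields the parity pattern $Q_\zeta^{jk}=((-1)^{j+k}-1)\tilde Q_1$ for $j,k\in\{0,\ldots,N\}$. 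Equivalently, $-Q_\zeta^{jk}=2\tilde Q_1$ whenever $j+k$ is odd, and vanishes otherwise.

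The key step is the block-diagonal ansatz $P_\zeta=\mathrm{diag}\bigl((\tfrac{\delay}{2k+1})_{k\in\{0,\ldots,N-1\}},\,0\bigr)\otimes \tilde Q_1$, in which the last $n\times n$ block is zero. This vanishing block is precisely what annihilates the complicated last block row of $A_\zeta^L$ in (\ref{eq:A_zeta_L}) — the only row carrying $A_0$, $A_1$, and the shifts $-\tfrac{2}{\delay}\tfrac{k(k+1)}{2}I_n$ — in both $P_\zeta A_\zeta^L$ (via the last row of $P_\zeta$) and $(A_\zeta^L)^\top P_\zeta$ (via the last column of $P_\zeta$). What remains are contributions from the upper block entries $\tfrac{2}{\delay}(2j+1)I_n$ of $A_\zeta^L$: for every pair $(j,k)$ with $\min(j,k)<N$ and $j+k$ odd, the surviving contribution equals $\tfrac{\delay}{2\min(j,k)+1}\tilde Q_1\cdot \tfrac{2}{\delay}(2\min(j,k)+1)I_n = 2\tilde Q_1$, reproducing $-Q_\zeta$ exactly. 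Thus the ansatz solves (\ref{eq:Lyap_eq_zeta}), and it coincides with the unique symmetric solution whenever the conditions of Lemma~\ref{lem:LyapEquationProperties}a,b apply.

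Finally, since $(P_\zeta)^{NN}=0_{n\times n}$ and the last block row and column of $P_\zeta$ vanish, the component $\zeta^N$ in (\ref{eq:zeta_discretization}) drops out of $\zeta^\top P_\zeta\zeta$, while $\zeta^k=\tilde\zeta^k$ for $k<N$. Therefore $V_\zeta(\zeta)=\sum_{k=0}^{N-1}\tfrac{\delay}{2k+1}(\tilde\zeta^k)^\top \tilde Q_1\tilde\zeta^k=V_1(\tilde\phi^{(N-1)})$, as claimed. The main obstacle will be the parity bookkeeping in the verification of the ansatz: one has to confirm that the zero last block of $P_\zeta$ kills every contribution involving the tau-modified last row of $A_\zeta^L$, while the surviving checkerboard pattern of $P_\zeta A_\zeta^L+(A_\zeta^L)^\top P_\zeta$ on the remaining indices — including the cross terms at $k=N$ and $j=N$ — reproduces $-Q_\zeta$ without any double counting.
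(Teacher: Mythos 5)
Your proposal is correct and follows essentially the same route as the paper: the same block-diagonal ansatz $P_\zeta=\mathrm{diag}([(\tfrac{\delay}{2}\tfrac{2}{2k+1})_{k<N},0])\otimes\tilde Q_1$, the same parity computation of $Q_\zeta$ from the corner blocks of $Q_y$ via (\ref{eq:first_last_row_Tyc}), and the same use of Legendre orthogonality plus (\ref{eq:zeta_discretization}) to identify the quadratic form with $V_1(\tilde\phi^{(N-1)})$. You merely spell out the verification that the paper leaves as "it can be verified," and your bookkeeping there is accurate.
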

\begin{proof}
For $Q_y\!=\!\mathrm{diag}([1,0_{1\times n(N-1)},-1] )\otimes \tilde Q_1$, and, thus, 
$Q_\zeta^{jk}=(-1+(-1)^{j+k})\tilde Q_1$, it can be verified that 
\begin{align}
P_\zeta=\mathrm{diag}([(\tfrac \delay 2 \tfrac{2}{ 2k+1})_{k\in\{0,\ldots,N-1\}},0])\otimes \tilde Q_1
\end{align} is a solution of (\ref{eq:Lyap_eq_zeta}). Hence, $
V_\zeta(\zeta)=
\zeta^\top P_{\zeta}\zeta=\sum_{k=0}^{N-1} \tfrac \delay 2 \frac{2}{ 2k+1} \zeta_k^\top \tilde  Q_1 \zeta_k$. Equivalence with $V_1(\tilde \phi^{(N+1)})=  \int_{-\delay}^0 (\sum_{k=0}^{N-1} \tilde \zeta^j p_j(\vartheta(\theta)) )^\top\tilde Q_1 ( \sum_{j=0}^{N-1} \tilde \zeta^k  p_k(\vartheta(\theta)))\,\mathrm d \theta$
follows from (\ref{eq:zeta_discretization})  and  $\int_{-1}^1 p_j(  \vartheta) p_k(   \vartheta)\,\mathrm d    \vartheta=  \frac{2}{ 2k+1}\delta_{jk}$ 
 \cite[B.1]{Hesthaven.2007}.
\end{proof}
\begin{lemma} \label{lem:V2_LegendreTau}
Provided (\ref{eq:Qzeta2}) is used, the Legendre-tau-based approximation of $V_2(\phi)$ in Lemma \ref{lem:splitting_V} becomes 
$V_2(\tilde \phi^{(N-1)})$ with $\tilde \phi^{(N-1)}$ as above (Lemma \ref{lem:V1_LegendreTau}). 
\end{lemma}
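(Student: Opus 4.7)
The plan is to construct an explicit candidate $V_\zeta(\zeta) := V_2(\tilde\phi^{(N-1)})$ with $\tilde\phi^{(N-1)}(\theta) = \sum_{k=0}^{N-1}\zeta^k p_k(\vartheta(\theta))$ and verify directly that this quadratic form satisfies the Lyapunov equation~(\ref{eq:Lyap_eq_zeta}) for the right-hand side dictated by (\ref{eq:Qzeta2}) with $(Q_0,Q_1,Q_2)=(-\delay\tilde Q_2,\,0_{n\times n},\,\tilde Q_2)$. Uniqueness of the Lyapunov solution (Lemma~\ref{lem:LyapEquationProperties}a with Rem.~\ref{rem:LyapCond}) will then identify this $V_\zeta$ with the Legendre-tau approximation. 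By construction, $V_\zeta$ depends only on $\zeta^0,\ldots,\zeta^{N-1}$, which reflects the fact that $\tilde\phi^{(N-1)}$ ignores the endpoint correction $\zeta^N$ of the discretization~(\ref{eq:zeta_discretization}), exactly as already seen for $V_1$ in Lemma~\ref{lem:V1_LegendreTau}.

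The central ingredient is the tau-method identity $\partial_t \tilde\phi^{(N-1)}(\theta,t) = \partial_\theta \phi_F(\theta,t)$, where $\phi_F(\theta,t) := \sum_{k=0}^N \zeta^k(t) p_k(\vartheta(\theta))$. This holds because $\partial_\theta\phi_F$ is a polynomial of degree at most $N-1$ whose Legendre coefficients match the first $N$ block rows of $A_\zeta^L$ in (\ref{eq:A_zeta_L}), via the classical formula $p_k'(\vartheta) = \sum_{\ell<k,\,\ell+k\text{ odd}}(2\ell+1)\,p_\ell(\vartheta)$. Splitting $\phi_F = \tilde\phi^{(N-1)} + \zeta^N p_N(\vartheta(\cdot))$, I will compute
\begin{align*}
\tfrac{d}{dt} V_2(\tilde\phi^{(N-1)})
= 2\int_{-\delay}^0 (\tilde\phi^{(N-1)})^\top(\delay+\theta)\,\tilde Q_2\,\partial_\theta\phi_F\,\mathrm d\theta.
\end{align*}
The $\partial_\theta\tilde\phi^{(N-1)}$-contribution is handled by the same integration by parts as in the proof of Lemma~\ref{lem:splitting_V}, yielding $\delay\,\tilde\phi^{(N-1)}(0)^\top\tilde Q_2\,\tilde\phi^{(N-1)}(0)-\int(\tilde\phi^{(N-1)})^\top\tilde Q_2\,\tilde\phi^{(N-1)}\,\mathrm d\theta$. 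The remaining $\zeta^N$-contribution reduces, after the substitution $\vartheta=\vartheta(\theta)$, to the Legendre identity
\begin{align*}
\int_{-1}^1 p_j(\vartheta)\,(\vartheta+1)\,p_N'(\vartheta)\,\mathrm d\vartheta = 2 \qquad \text{for every } j<N,
\end{align*}
which I will establish by splitting $(\vartheta+1)$, using Bonnet's recurrence $\vartheta p_j = \tfrac{j+1}{2j+1}p_{j+1}+\tfrac{j}{2j+1}p_{j-1}$, and combining orthogonality with the derivative formula above; the parities of the two summands swap between the $j+N$ odd and even cases so that the $j$-dependence cancels and a universal value $2$ emerges.

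Assembling the pieces and using $\phi(0)=\tilde\phi^{(N-1)}(0)+\zeta^N$ (from (\ref{eq:zeta_discretization})) to rewrite the quadratic in $\tilde\phi^{(N-1)}(0)$ plus a cross-term as a quadratic in $\phi(0)$ minus $\delay(\zeta^N)^\top\tilde Q_2\zeta^N$, I obtain
\begin{align*}
\tfrac{d}{dt} V_2(\tilde\phi^{(N-1)})
= \delay\,\phi(0)^\top \tilde Q_2\,\phi(0) - \delay\,(\zeta^N)^\top\tilde Q_2\zeta^N - \sum_{k=0}^{N-1}\tfrac{\delay}{2k+1}(\zeta^k)^\top\tilde Q_2\zeta^k,
\end{align*}
where the last sum equals $\int(\tilde\phi^{(N-1)})^\top\tilde Q_2\,\tilde\phi^{(N-1)}\,\mathrm d\theta$ by Legendre orthogonality. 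Inspecting (\ref{eq:Qzeta2}) term-by-term for $(Q_0,Q_1,Q_2)=(-\delay\tilde Q_2,0,\tilde Q_2)$ and using $y^N=\phi(0)$ shows that this is precisely $-y^\top Q_y y$, i.e.\ the right-hand side of (\ref{eq:Lyap_eq_zeta}). Hence $V_\zeta(\zeta)=V_2(\tilde\phi^{(N-1)})$ is \emph{a} solution of the Lyapunov equation, which by uniqueness is \emph{the} Legendre-tau approximation. The main technical hurdle is arranging the integration by parts so that the boundary term produced by $\zeta^N$ conspires with the quadratic from $\partial_\theta\tilde\phi^{(N-1)}$ to reconstruct $\delay\,\phi(0)^\top\tilde Q_2\,\phi(0)$, and it is the remarkable $j$-independence of the Legendre integral above that makes this reassembly exact.
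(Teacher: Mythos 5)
Your proof is correct, and it reaches the same object as the paper (your candidate quadratic form has exactly the tridiagonal matrix $P_\zeta$ the paper writes down), but the verification runs along a genuinely different route. The paper exhibits the explicit tridiagonal $P_\zeta^{jk}$, checks entry-wise that it solves (\ref{eq:Lyap_eq_zeta}), and only then identifies $\zeta^\top P_\zeta\,\zeta$ with $V_2(\tilde\phi^{(N-1)})$ via Bonnet's recurrence. You instead start from $V_\zeta(\zeta):=V_2(\tilde\phi^{(N-1)})$ and differentiate the functional directly along the tau dynamics, using the identity $\partial_t\tilde\phi^{(N-1)}=\partial_\theta\phi_F$ (which is exactly what the first $N$ block rows of $A_\zeta^L$ encode), an integration by parts, and the integral identity $\int_{-1}^1 p_j(\vartheta)(\vartheta+1)p_N'(\vartheta)\,\mathrm d\vartheta=2$ for $j<N$ --- I checked this identity via the parity argument you sketch and it holds, including the edge cases $j=N-1$ (where $\int p_Np_N'=0$) and $j=0$ (where the $p_{j-1}$ term carries coefficient zero). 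The resulting derivative matches $-y^\top Q_y y$ for (\ref{eq:Qzeta2}) with $(Q_0,Q_1,Q_2)=(-\delay\tilde Q_2,0,\tilde Q_2)$ term by term, and since $V_\zeta$ does not depend on $\zeta^N$, the last block row of $A_\zeta^L$ (the only place $A_0,A_1$ enter) never intervenes, so the computation is legitimate. What your route buys is transparency: it explains \emph{why} the tau scheme reproduces $V_2$ exactly on polynomial arguments (the tau projection commutes with the shift dynamics up to the $p_N$ tail, whose effect is absorbed by the $j$-independent boundary identity), whereas the paper's entry-wise check is shorter to state but leaves the mechanism opaque. Both arguments rely on uniqueness of the Lyapunov solution to conclude that the verified candidate \emph{is} the tau approximation; you state this explicitly (via Lemma \ref{lem:LyapEquationProperties}a), the paper leaves it implicit.
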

\begin{proof}
Consider $Q_\zeta=Q_{\zeta,0}+Q_{\zeta,2}$ with $Q_{\zeta,0}=T_{y\zeta}^\top (\mathrm{diag}([0_{1\times nN},1] )\otimes Q_0 ) T_{y\zeta} = 1_{(N+1)\times (N+1)}\otimes Q_0$ and $Q_{\zeta,2}= \mathrm{diag}([(\tfrac{1}{2k+1})_{k\in\{0,\ldots,N-1\}},1] )\otimes  \delay Q_2$, where $Q_0=-\delay \tilde Q_2, Q_2=\tilde Q_2$. 
It can be verified that $P_\zeta$ with 
\begin{align}
P_\zeta^{jk}=
\left\{ \begin{array}{l@{}l}
(\tfrac \delay 2 )^2 \frac 2 {2j+1} \frac k {2k+1} \tilde Q_2  \;\;\;& \text{ if } j=k-1<N-1  
\\
(\tfrac \delay 2 )^2 \frac 2 {2j+1} \tilde Q_2 &\text{ if }j=k<N 
\\
(\tfrac \delay 2 )^2 \frac 2 {2j+1}\frac {k+1} {2k+1} \tilde Q_2 & \text{ if } j=k+1<N 
\\
0_{n\times n} &   \text{ else } 
\end{array}\right.
\end{align}
solves (\ref{eq:Lyap_eq_zeta}). 
The equality $\zeta^\top P_\zeta \zeta=V_2(\sum_{k=0}^{N-1} \zeta^k p_k(\vartheta(\theta)))$  is shown by using the  three-term recurrence relation \cite[(4.17)]{Hesthaven.2007} $\vartheta p_k(\vartheta)=\tfrac k {2k+1} p_{k-1}(\vartheta) + \tfrac{k+1}{2k+1} p_{k+1}(\vartheta)$  in $V_2$.
\end{proof}

\paragraph{Combined coordinates} \label{sec:combined_coordinates}
Besides of Legendre coordinates~$\zeta$, and interpolation coordinates $y=\big[\begin{smallmatrix} z \\ \hat x \end{smallmatrix}\big]=T_{y\zeta} \zeta$, the combination of the first $(N-1)$ of the $N$ Legendre coordinates and the boundary value $\hat x=\phi(0)$, 
 \begin{align}
\chi&= 
\left[\begin{smallmatrix}
 \zeta^0 \\
\vdots 
\\ \zeta^{N-1}\\
 \hat x
\end{smallmatrix}\right]
\label{eq:chi}
=
\underbrace{
\begin{bmatrix}
I_{nN}^{} & 0_{n\!N\!\times \!n}^{} \\ I_n \quad \cdots\quad  I_n & I_n
\end{bmatrix}}
_{T_{\chi \zeta}^{}}
\zeta,
\end{align} 
is, in light of (\ref{eq:zeta_discretization}), 
as well an appropriate choice  of coordinates. 
In particular, Lemma~\ref{lem:part_pos_def_bound_Schur} can directly be applied to $P_\chi= T_{\zeta\chi}^\top P_\zeta T_{\zeta\chi}$, with $T_{\zeta\chi}=T_{\chi\zeta}^{-1}=\left[\begin{smallmatrix}
I_{nN} & \!\!\!0_{nN\times n}\\ 
-1_{\!N}^\top\otimes I_n \!\!\! & I_n
\end{smallmatrix}\right]$, thus obtaining the quadratic lower bound as in Cor.~\ref{cor:lowerBound}, but without the need of $P_y=T_{\zeta y}^\top P_\zeta T_{\zeta y}$ (which would require $T_{\zeta y}$ from Rem.~\ref{rem:Matlab_AyL}).

\paragraph{Discontinuous basis}
The given coordinates $\zeta$, or equivalently $y=T_{y\zeta} \zeta$, or $\chi=T_{\chi\zeta}\zeta$, uniquely represent an 
 $N$-th order approximating polynomial 
$
\phi(\theta)\approx \sum_{k=0}^N \zeta^k p_k(\vartheta(\theta))
=\sum_{k=0}^N y^k\ell_k(\vartheta(\theta))
= \sum_{k=0}^{N-1} \chi^k (p_k(\vartheta(\theta))-p_N(\vartheta(\theta)))+  \chi^N  p_N(\vartheta(\theta))
$. 
However, if a function with a jump discontinuity at $\theta=0$ 
is of interest, it is convenient\footnote{See, in \cite{Ito.1987}, the projector $Q^N$ versus $L^N$.} to consider as approximating function instead the piecewise defined \mbox{$(N-1)$-th} order polynomial with a discontinuous end point  
\begin{align}\label{eq:DiscontinuousEndpoint}
\phi(\theta)\approx \left\{
\begin{array}{ll}
\sum_{k=0}^{N-1}\zeta^k p_k(\vartheta(\theta)), &\text{ if } \theta<0
\\
\hat x, &\text{ if }\theta=0
\end{array}\right.
\end{align}  
(which, in (\ref{eq:zeta_discretization}), has the same discretization).

\subsection{
Numerical Integration of LK Functionals}\label{sec:num_int}
Sec.~\ref{sec:numInt} proposes to apply  interpolatory quadrature rules to the LK functional. We consider Clenshaw-Curtis and Gauss quadrature. See, e.g., \cite{Trefethen.2008} 
for 
convergence statements.

\subsubsection{Clenshaw-Curtis quadrature}
\label{sec:Clenshaw-Curtis}

A numerical integration of (\ref{eq:complete_LK}) by an interpolatory quadrature rule replaces integrals by weighted sums from 
 values at certain grid points. If these grid points 
are the (Gauss-Lobatto) Chebyshev nodes $\{\tilde \theta_k\}_{k\in\{0,\ldots,N\}}$ introduced in (\ref{eq:ChebNodes}), this amounts to a Clenshaw-Curtis quadrature, cf.\ \cite[Sec.~3.7]{Funaro.1992}. 
The 
weights\footnote{The weights $w_k=\frac \delay 2 \int_{-1}^1 \ell_k(\tilde \vartheta)\,\mathrm d\tilde \vartheta$ 
are integrals of the Lagrange polynomials  in 
 $\int_{-\delay}^0 u(\theta)\,\mathrm d \theta\approx  \int_{-\delay}^0 \sum_{k=0}^{N} u(\tilde \theta_k )\ell_k(\vartheta(\theta))\,\mathrm d \theta =\sum_{k=0}^{N} u(\tilde \theta_k ) w_k$.
} 
\!\! $w_k$ are, e.g.,  available\footnote{implemented via \texttt{[theta,w]=chebpts(N+1,[-delay,0])}} 
in the Chebfun toolbox \cite{Driscoll.2014}.
For  (\ref{eq:complete_LK}), we obtain 
\begin{align} \label{eq:num_int_complete_LK}
V(\phi) 
&\approx
\phi^\top(0) P_{\mathrm{xx}}\,\phi(0)
+
2 \sum_{k=0}^{N} w_k  \hspace{0.5pt} \phi^\top\!(0) P_{\mathrm{xz}}(\tilde \theta_k) \hspace{0.5pt}
\phi(\tilde \theta_k)
\nonumber \\
&\quad +
  \sum_{j=0}^{N}  w_j \sum_{k=0}^{N} w_k  \hspace{0.5pt}
	\phi^\top\!(\tilde \theta_j)
	P_{\mathrm{zz}}(\tilde\theta_j,\tilde\theta_k) \hspace{0.5pt}
	\phi(\tilde\theta_k)
\nonumber  \\
&\quad + 
\sum_{k=0}^{N} w_k  \hspace{0.5pt}
\phi^\top\!(\tilde\theta_k)
P_{\mathrm{zz,diag}}(\tilde\theta_k) \hspace{0.5pt}
\phi(\tilde\theta_k).
\end{align}
Let $y^k=\phi(\tilde \theta_k)$, $k\in\{0\ldots,N\}$,   where 
$y^N=\phi(\tilde \theta_N)=\phi(0)$.
As in (\ref{eq:V_y_matrix}), the result (\ref{eq:num_int_complete_LK}) can be written as a quadratic form ({with $p=\dim(z):=\dim([{y^0}^\top,\ldots, {y^{N-1}}^\top]^\top)=nN$)
\begin{align}
&V(\phi)
\;\approx\;
y^\top P_{\ind{y}}^{\,\mathrm{quad} }\,y
\;=\;
y^\top
\,
\Bigg( \qquad 
\begin{bmatrix}
0_{p\times p} & 0_{p\times n}\\
 0_{n\times p} & P_{\mathrm{xx}}^{}
\end{bmatrix}
\nonumber\\
&
+
\begin{bmatrix}
0_{p\times (p+n)} 
\\[0.3em]
 {\rule[.5ex]{2.5ex}{0.3pt}}  
\; \;
 (P_{\mathrm{xz}^{}}(\tilde\theta_k)w_k)_k^{} 
\,\;
{\rule[.5ex]{2.5ex}{0.3pt}}
\end{bmatrix}
+
\begin{bmatrix}
\;\; \;0_{(p+n)\times p} 
&
\begin{matrix}
\rule[1.3ex]{0.3pt}{1.0ex} 
\\[-0.5em] 
 (w_j P_{\mathrm{xz}}^\top(\tilde\theta_j))_j^{} 
\\[-0.5em]
\rule[-1.5ex]{0.3pt}{1.0ex} 
\end{matrix}
\end{bmatrix}
\nonumber\\
&
+
\Big(w_j P_{\mathrm{zz}}(\tilde\theta_j,\tilde\theta_k) w_k\Big)_{jk}
\!+
\mathrm{blkdiag}\Big((w_k P_{\mathrm{zz,diag}}(\tilde \theta_k))_k^{}\Big)
\;
\Bigg)\; y.
\label{eq:P_num_int_complete_LK}
\\[-1.85em] \nonumber 
\end{align}
See 
(\ref{eq:P_quad_Clenshaw_Curtis_structure}) for a factorization taking (\ref{eq:P_in_completeTypeLK}) into account. 
Note that the right lower component of $P_{\ind{y}}^{\,\mathrm{quad} }$  approximately becomes $P_{\mathrm{xx}}$ since the other contributions are weighted by $w_N$, which  is quite small in the non-equidistant grid.

\subsubsection{Gauss quadrature}
\label{sec:Gauss}

As an alternative, 
we apply (Legendre) Gauss quadrature. Thus, the integral of a function is approximated by weighted sums from the function values at (Gauss) Legendre nodes. 
Being Gauss nodes, cf.\ Table \ref{tab:polynomialApprox},  they do not contain the boundary points of the domain $[-\delay,0]$.  
That is why we take $N$ (Gauss) Legendre nodes\footnote{via \texttt{[thetaL,wL]=legpts(N,[-delay,0])} using the toolbox \cite{Driscoll.2014}}
 $\tilde \theta^L_k$,   
and 
add the zero end point 
with zero weight to get the $N+1$ nodes 
$[(\tilde \theta^L)^\top,0]^\top$ and 
 weights 
$[(w^L)^\top,0]^\top$. Therefore, contrary to the Gauss-Lobatto-node-based Clenshaw-Curtis quadrature,  the contributions in (\ref{eq:P_num_int_complete_LK}) do not overlap, yielding for 
$y_G^{}=[\phi^\top(\tilde \theta^L_0),\ldots, \phi^\top(\tilde \theta^L_{N-1}), \phi^\top(0)]^\top$
\begin{align}
V(\phi)
\approx\;&
y_G^\top
\!\!
\begin{bmatrix}
\Big(w_j^L P_{\mathrm{zz}}(\tilde\theta_j^L,\tilde\theta_k^L) w_k^L\Big)_{jk}\!\!\!
+
\!D 
&
\begin{matrix}
\rule[1.3ex]{0.3pt}{1.0ex} 
\\[-0.5em] 
 (w_j^L P_{\mathrm{xz}}^\top(\tilde\theta_j^L))_j^{} 
\\[-0.5em]
\rule[-1.5ex]{0.3pt}{1.0ex} 
\end{matrix}
\\[1.5em]
 {\rule[.5ex]{2.5ex}{0.3pt}}  
\; \;
 (P_{\mathrm{xz}^{}}(\tilde\theta_k^L)w_k^L)_k^{} 
\,\;
{\rule[.5ex]{2.5ex}{0.3pt}}
&
P_{\mathrm{xx}}^{}
\end{bmatrix}
\! y_G^{}
\nonumber
\\
&\text{with}\; D=\mathrm{blkdiag}\Big((w_k^L P_{\mathrm{zz,diag}}(\tilde \theta_k^L))_k^{}\Big).
\label{eq:P_num_int_G_complete_LK}
\end{align}

\bibliographystyle{ieeetran}
\bibliography{Literatur}

\vfill
\end{document}